\newtheorem{theorem}{Theorem}
\newtheorem{prop}[theorem]{Proposition}
\newtheorem{lemma}[theorem]{Lemma}
\newtheorem{remark}[theorem]{Remark}
\newtheorem{defn}[theorem]{Definition}
\numberwithin{equation}{section}
\numberwithin{theorem}{section}
\newcommand{\coup}{{g}}
\newcommand{\iu}{\mathrm{i}\mkern1mu} 
\renewcommand{\d}{\mathrm{d}}
\newcommand{\p}{\partial}
\newcommand{\bq}{{\bf q}}
\newcommand{\bk}{{\bf k}}
\newcommand{\bx}{{\bf x}}
\newcommand{\by}{{\bf y}}
\newcommand{\eps}{\epsilon}
\renewcommand{\Re}{\operatorname{Re}}
\renewcommand{\Im}{\operatorname{Im}}
\newcommand{\tr}{\operatorname{tr}}
\newcommand{\R}{\mathbb{R}}
\newcommand{\C}{\mathbb{C}}
\newcommand{\A}{\mathcal{A}}
\newcommand{\B}{\mathcal{B}}
\newcommand{\G}{\mathcal{G}}
\renewcommand{\L}{\mathcal{L}}
\newcommand{\ds}{\displaystyle}
\DeclareMathOperator\supp{supp}
\newcommand{\Glog}[2]{B_{#1}^{#2}}
\newcommand{\Alog}[2]{\mathcal{B}_{#1}^{#2}}
\title[Nonlocal PDEs and Quantum Optics: Bound States and Resonances]{Nonlocal Partial Differential Equations and Quantum Optics: Bound States and Resonances}
\author[Hiltunen]{Erik Orvehed Hiltunen}
\address{Department of Mathematics, Yale University, New Haven, CT, USA }
\email{erik.hiltunen@yale.edu}
\author[Kraisler]{Joseph Kraisler}
\address{Department of Applied Physics and Applied Mathematics, Columbia University, New York, NY, USA}
\email{jek2199@columbia.edu}
\author[Schotland]{John C. Schotland}
\address{Department of Mathematics and Department of Physics, Yale University, New Haven, CT, USA}
\email{john.schotland@yale.edu}
\author[Weinstein]{Michael I. Weinstein}
\address{Department of Applied Physics and Applied Mathematics and Department of Mathematics, Columbia University, New York, NY, USA}
\email{miw2103@columbia.edu}
\date{\today}
\begin{document}

\begin{abstract}
We consider the quantum optics of a single photon interacting with a system of two level atoms.
This leads to the study of a nonlinear eigenproblem for a system of nonlocal partial differential equations.
We investigate two classes of solutions: bound states are solutions which decay at infinity 
while resonance states have locally finite energy and satisfy a non-self-adjoint outgoing radiation condition at infinity.  We have found necessary and sufficient conditions for the existence of bound states, along with an upper bound on the number of such states.  We have also considered these problems for atomic models with small high contrast inclusions. In this setting, we have derived asymptotic formulas for the resonances.
 Our results are illustrated with numerical computations. 
\end{abstract}

\maketitle

\section{Introduction}

This paper is the first in a series that is concerned with a class of mathematical problems that arise in quantum optics. We consider the following model for the interaction between a quantized field and a collection of two-level atoms, as introduced in~\cite{kraisler_2022}.
The Hamiltonian $\mathcal H$ of the system is given by
\begin{align}
\mathcal H=\mathcal H_F + \mathcal H_A + \mathcal H_I , 
\end{align}
where $\mathcal H_F$ is the Hamiltonian of the field, $\mathcal H_A$ is the Hamiltonian of the atoms and $\mathcal H_I$ is the interaction Hamiltonian.
The Hamiltonian of the field is of the form 
\begin{align}
\label{H_F}
\mathcal  H_F = \hbar c\int_{\R^d}  (-\Delta)^{1/2}\phi^{\dagger}(\bx)\phi(\bx) \d\bx,
\end{align}
where $\phi(\bx)$ is a Bose scalar field that obeys the commutation relations
\begin{align}
\label{commutation}
[\phi(\bx),\phi^{\dagger}(\bx')]&=\delta(\bx-\bx') , \\
[\phi(\bx),\phi(\bx')]&=0 .
\end{align}
The operator $(-\Delta)^{1/2}$ is nonlocal and has the Fourier representation
\begin{align}
    (-\Delta)^{1/2}f = \int_{\R^d} e^{\iu \bk\cdot\bx}\vert\bk\vert \hat f(\bk)\frac{\d \bk}{(2\pi)^d}.
\end{align}
Here, the Fourier transform $\hat f(\bk)$ of a function $f(\bx)\in L^2(\R^d)\cap L^1(\R^d)$ is defined as
\begin{align}
    \hat f(\bk) = \int_{\R^d} e^{-\iu\bk\cdot\bx} f(\bx) \d \bx.
\end{align}
The Hamiltonian of the atoms is given by
\begin{align}
\mathcal H_A = \hbar \Omega\int_{\R^d} \rho(\bx)\sigma^{\dagger}(\bx)\sigma(\bx) \d\bx ,
 \label{HA}
\end{align}
where $\Omega $ is the atomic resonance frequency, $\rho(\bx)$ is the number density of the atoms. The Fermi field $\sigma(\bx)$, defined on $\supp(\rho)$,  obeys the anticommutation relations
\begin{align}
\label{anticommutation}
\{\sigma(\bx),\sigma^{\dagger}(\bx')\}&=\frac{1}{\rho(\bx)}\delta(\bx-\bx') , \\
\{\sigma(\bx),\sigma(\bx')\}&=0 .
\end{align}
The Hamiltonian describing the interaction between the field and the atoms is taken to be
 \begin{align}
\mathcal H_I = \hbar \coup\int_{\R^d}  \rho(\bx)\left( \phi^\dagger(\bx)\sigma(\bx)+\sigma^{\dagger}(\bx)\phi(\bx)\right)\d\bx ,
\end{align}
where $\coup$ is the strength of the atom-field coupling.

We suppose that the system is in a single-excitation state of the form
\begin{align}
\label{one_photon_state}
\vert\Psi\rangle = \int_{\R^d} \d\bx \left[\psi(\bx,t)\phi^{\dagger}(\bx) + \rho(\bx)a(\bx,t)\sigma^{\dagger}(\bx)\right]\vert 0\rangle ,
\end{align}
where $\vert 0\rangle$ is the combined vacuum state of the field and the ground state of the atoms. Here $a(\bx,t)$ denotes the probability amplitude for exciting an atom at the point $\bx$ at time $t$, and $\psi(\bx,t)$ is the  amplitude for creating a photon. 
The dynamics of $\vert\Psi\rangle$ is governed by the Schrödinger equation
\begin{align} 
\label{eq:schr}
\iu \hbar\partial_t\vert\Psi\rangle = \mathcal H\vert\Psi\rangle .
\end{align}
It follows that $a$ and $\psi$ obey the system of \emph{nonlocal} equations 
\begin{align}
\label{eq:b4}
\iu\partial_t\psi & = c(-\Delta)^{1/2}\psi + \coup\rho(\bx) a , \\
\iu \rho(\bx)\partial_t a & = \coup\rho(\bx)\psi + \Omega\rho(\bx) a .
\label{eq:b5}
\end{align}
We note that $\vert\Psi\rangle$ is normalized so that the amplitudes obey the conservation law
\begin{align}\label{eq:conserve}
\int_{\R^d}  \left(\vert\psi(\bx,t)\vert^2 + \rho(\bx)\vert a(\bx,t)\vert^2\right)\d\bx = 1
\ .
\end{align}
A well-posedness result for
solutions of the initial value problem for \eqref{eq:b4}-\eqref{eq:b5} is presented in Appendix \ref{app:EandU}.

The system (\ref{eq:b4})--(\ref{eq:b5}) describes the dynamics of a single-excitation state. It has been studied in a number of settings~\cite{kraisler_2022}. These include spontaneous emission from a single atom with density $\rho(\bx)=\delta(\bx)$ and a system of atoms with constant density. The long-time asymptotic behavior for a finite number of atoms has also been investigated, and a crossover from exponential decay to algebraic decay at long times has been predicted~\cite{hoskins_2021}. The prediction has been confirmed by a fast, high-order numerical method for solving (\ref{eq:b4})--(\ref{eq:b5})~\cite{hoskins_2023}. Finally, the case of a random medium has been explored. Here the dynamics of the field and atomic probability amplitudes have been studied by means of the asymptotics of the Wigner transform. At long times and large distances,  the corresponding average probability densities can be determined from the solutions to a pair of kinetic equations~\cite{kraisler_2022}.

We now consider the time-harmonic solutions to the time-dependent equations \eqref{eq:b4}--\eqref{eq:b5} of the form
\begin{align}\label{eq:t-harmonic}
    \begin{pmatrix}
    a(\bx,t) \\ \psi(\bx,t) 
    \end{pmatrix} = e^{-\iu\omega t}\begin{pmatrix}
    a(\bx) \\\psi(\bx)
    \end{pmatrix}.
\end{align}
We thus obtain the time-independent equations
\begin{align} 
	c(-\Delta)^{1/2}\psi + g\rho(\bx)a &= \omega \psi, \label{eq:system1} \\
	g\rho(\bx) \psi+ \Omega \rho(\bx) a &= \omega \rho(\bx) a . \label{eq:system2}
\end{align}
{For $\omega\ne\Omega$ we may, on the support of $\rho$, use \eqref{eq:system2} to eliminate $a$ from \eqref{eq:system1}, to obtain a  equation for}  $\psi$:
\begin{equation}\label{eq:BS}
	\left((-\Delta)^{1/2} - \frac{\omega}{c}\right) \psi = - \frac{g^2}{c(\omega-\Omega)}\rho(\bx)\psi.
\end{equation}
Then, $a$ can be recovered from the formula
\begin{align}\label{eq:a}
    a(\bx) = \frac{g}{\omega-\Omega}\psi(\bx),\ \ \bx\in\supp(\rho).
\end{align}

It will prove convenient to rewrite the system \eqref{eq:system1}--\eqref{eq:system2} in a more symmetric form.
To proceed, we make the change of variables 
\begin{align}
\phi(\bx) = a(\bx)\sqrt{\rho(\bx)} .   
\end{align}
Eqs.~\eqref{eq:b4}--\eqref{eq:b5} can then be written in the form of
the eigenproblem 
\begin{align}\label{eigenproblem}
    H\begin{pmatrix}
    \psi\\\phi
    \end{pmatrix} = \omega\begin{pmatrix}
    \psi\\\phi
    \end{pmatrix},
\end{align}
where the Hamiltonian $H$ is given by
\begin{align}\label{eq:H}
H=\begin{pmatrix}c(-\Delta)^{1/2} & g\sqrt{\rho(\bx)}\\ g\sqrt{\rho(\bx)} &  \Omega \end{pmatrix}.
\end{align}

In this paper, we study two eigenvalue problems {associated with  \eqref{eigenproblem}}: the bound state and resonance eigenvalue problems.
\begin{enumerate}
    \item \textit{The bound state problem}.  A {\it bound state} is a non-zero pair $(\psi,\phi)\in L^2(\R^d)$ and frequency $\omega$ which satisfies the system \eqref{eq:system1}--\eqref{eq:system2}.  By self-adjointness of $H$, 
bound states of \eqref{eq:system1}--\eqref{eq:system2} can occur only for $\omega$ real. Below we shall see, for a physically relevant class of  densities $\rho$, that additionally we must have $\omega<0$. Note that if $\psi\in L^2(\R^d)$ and $\omega\ne\Omega$ are such that  \eqref{eq:BS} holds, then $(\psi,\phi)$, where 
$\phi=g(\omega-\Omega)^{-1}\sqrt{\rho}\psi$,  is a bound state with frequency $\omega$. 
    \item \textit{The resonance problem}. A {\it resonance state}  is a non-zero pair $(\psi,\phi)$ and frequency $\omega\in\mathbb{C}$ which satisfies the system \eqref{eq:system1}-\eqref{eq:system2}, such that $\psi$ is {outgoing} in the sense that $\psi$ can be represented through the outgoing resolvent of $(-\Delta)^{1/2}$ (see \Cref{sec:int} and \Cref{sec:rad}). For real and positive $\omega$, the outgoing condition is equivalent to the radiation condition
    \begin{equation}\lim_{|\bx|\to \infty} |\bx|^{\frac{d-1}{2}}\left(\frac{\p \psi}{\p |\bx|} - \frac{\iu \omega}{c} \psi\right) = 0. \label{eq:smfld}\end{equation} 
    We show below in \Cref{prop:im} that for resonance states, the condition $\Im(\omega)\leq 0$ holds. Whenever $\Im(\omega)< 0$, self-adjointness implies that resonance states are in $L^2_{\rm loc}$ but not in $L^2(\R^d)$.{Furthermore, by \eqref{eq:t-harmonic} resonant states are exponentially decaying as $t\to+\infty$. Resonances give information about the rate of escape of energy, as time advances, from a fixed compact set.} There are no resonance states for $\Re(\omega) <0$; see \Cref{rmk:neg} below.
\end{enumerate}

Our main results may be summarized as follows. 
Theorem~\ref{thm:boundstates} guarantees the existence of bound states for compactly supported bounded densities $\rho\in L^{\infty}_{\rm comp}(\mathbb R^d)$. 
For dimension $d=1$ there is at least one bound state. In contrast, for dimensions $d\ge2$,  there exist densities for which there are no bound states. There is also a sufficient condition for $d\ge2$ that guarantees the existence of at least one bound state. {To prove this result we reformulate the spectral problem in terms of a family of compact integral operators, and apply a Birman-Schwinger principle analogous to that arising in the setting of nonrelativistic quantum mechanics \cite{ReedSimonIV}.} 
Theorem~\ref{thm:NBS} provides an upper bound on the number of bound states for compactly supported smooth densities, which by an approximation argument extends to densities $\rho\in L^d(\mathbb R^d)$. The proof makes use of a generalized Feynman-Kac formula
\cite{Daubechies1983}.
Next, we consider the resonances and bound states of small, high-contrast inclusions, densities of the form $\rho(\bx)= \rho_0\chi_D(\bx)$, where $D$ is a bounded domain. In addition, $\rho_0$ is assumed to scale as $1/\epsilon$ as $\epsilon\to 0$, where the diameter of $D$ is proportional to $\epsilon$. The analysis makes use of an integral representation for the amplitude $\psi$ in terms of a suitable Green's function. This representation leads to a {{\it nonlinear eigenvalue problem} (an eigenvalue problem for a linear operator with nonlinear dependence on the spectral parameter). In Theorems~\ref{thm:3D} and \ref{thm:2D}, we calculate its eigenvalues (bound state and resonance frequencies)  asymptotically in the limit $\epsilon\to 0$.} These results rely on the hypothesis that the eigenvalues are simple. At least for the lowest eigenvalue of the limiting problem, this hypothesis can be verified.
Numerical computation of the eigenvalues is also performed. The corresponding problem in one dimension can also be studied, but requires a different scaling of the density $\rho_0$ with $\epsilon$. Theorem~\ref{thm:1Dlog} describes the asymptotics of the eigenvalues in this setting.

The paper is organized as follows. Section~\ref{sec:bound} considers the problem of bound states. Section~\ref{sec:incl} discusses the problem of {bound states and resonances for} high-contrast inclusions. {We summarize our conclusions and discuss further directions to investigate in Section~\ref{sec:conclusions}.} The appendices present various properties and computations of Green's functions, an outline of perturbation theory for holomorphic operator-valued functions, and recall some results from functional analysis that are used throughout the paper.

\section{Bound states} \label{sec:bound}

Suppose $\omega\ne\Omega$. Then if $\psi\in L^2(\R^d)$ is a solution of \eqref{eq:H} and we define $\phi$ by \eqref{eq:a}, then $(\psi,\phi)^\top$ is an eigenstate for the eigenvalue problem \eqref{eigenproblem}. Conversely, any eigenstate $(\psi,\phi)^\top$ 
 of \eqref{eigenproblem} gives rise to an $L^2(\R^d)$ solution of \eqref{eq:H}.
 
In this section we determine sufficient conditions for the existence of eigenstates of \eqref{eigenproblem}.
We assume that $\rho(\bx)$ satisfies the following hypotheses:
\begin{enumerate}[label=\textbf{H.\arabic*}]
    \item $\rho(\bx)\geq 0$ (non-negative). \label{Hyp1}
    \item $\supp{\rho}$ is compact (compact support). \label{Hyp2}
    \item $\rho\in L^{\infty}(\R^d)$ (essentially bounded). \label{Hyp3}
\end{enumerate}
Note that since $\rho$ vanishes outside a compact set, the essential spectrum of $H$ belongs to $[0,\infty)$. We do not expect there to be embedded eigenvalues. Indeed, under smoothness assumptions on $\rho$ it has been shown \cite{Lorinczi2022} that the operator 
\begin{align*}
        H(\omega) = c(-\Delta)^{1/2} + \frac{g^2\rho}{\omega-\Omega} .
    \end{align*}
    has no positive spectrum. Since any bound state arises from 
    an $L^2(\R^d)$ solution of $H(\omega)\psi = (\omega/c)\psi$, it follows, under such hypotheses on $\rho$, that there are no positive eigenvalues $\omega$.
Hence, we focus on conditions for the  existence of eigenstates with energies $\omega < 0$.


\subsection{Necessary condition for bound states}
We begin by stating a necessary condition for bound states. This condition relates any eigenvalue $\omega<0$ to the magnitude of the atom--field coupling strength $g^2\|\rho\|_{\infty}$. If the latter is small, then any negative eigenvalue must necessarily be close to $0$.
\begin{prop}
	Let $H$ be defined as in \eqref{eq:H} and let $\rho$ be a bounded, compactly supported, positive density. Assume that $H$ has a negative eigenvalue $\omega < 0$ with corresponding eigenstate $\psi\in L^2(\R^d)$. Then 
	\begin{equation}
		g^2\|\rho\|_{\infty} \geq \omega(\omega-\Omega).
	\end{equation}
\end{prop}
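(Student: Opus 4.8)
The plan is to reduce the matrix eigenvalue problem to the scalar equation \eqref{eq:BS} for $\psi$ alone, then extract the inequality by testing against $\psi$ and invoking the nonnegativity of $(-\Delta)^{1/2}$. Since $\omega<0$ and $\Omega\ge 0$ we have $\omega\ne\Omega$, so \eqref{eq:system2} eliminates $a$ and the eigenstate $\psi\in L^2(\R^d)$ satisfies \eqref{eq:BS}, namely
\begin{equation}
\left((-\Delta)^{1/2}-\frac{\omega}{c}\right)\psi = -\frac{g^2}{c(\omega-\Omega)}\rho\,\psi .
\end{equation}

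First I would pair this identity in $L^2(\R^d)$ with $\psi$, denoting the inner product by $\langle\cdot,\cdot\rangle$, to obtain
\begin{equation}
\langle(-\Delta)^{1/2}\psi,\psi\rangle - \frac{\omega}{c}\|\psi\|_{L^2}^2 = -\frac{g^2}{c(\omega-\Omega)}\int_{\R^d}\rho|\psi|^2\,\d\bx .
\end{equation}
The structural fact I would use is that $(-\Delta)^{1/2}$ is a nonnegative operator: its Fourier symbol $|\bk|$ is nonnegative, so $\langle(-\Delta)^{1/2}\psi,\psi\rangle=\int_{\R^d}|\bk|\,|\hat\psi(\bk)|^2\,\frac{\d\bk}{(2\pi)^d}\ge 0$. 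Discarding this term converts the equality into
\begin{equation}
0 \le \frac{\omega}{c}\|\psi\|_{L^2}^2 - \frac{g^2}{c(\omega-\Omega)}\int_{\R^d}\rho|\psi|^2\,\d\bx .
\end{equation}

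Next I would clear the factor $c>0$ and rewrite $\omega-\Omega=-(\Omega-\omega)$ with $\Omega-\omega>0$; the coupling term then acquires a definite sign and rearrangement gives $(-\omega)\|\psi\|_{L^2}^2 \le \frac{g^2}{\Omega-\omega}\int_{\R^d}\rho|\psi|^2\,\d\bx$. Estimating $\int_{\R^d}\rho|\psi|^2\,\d\bx \le \|\rho\|_{\infty}\|\psi\|_{L^2}^2$ and dividing by $\|\psi\|_{L^2}^2>0$ (valid since $\psi\ne 0$) yields $(-\omega)(\Omega-\omega)\le g^2\|\rho\|_{\infty}$. Since $(-\omega)(\Omega-\omega)=\omega(\omega-\Omega)$, this is precisely the asserted bound.

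The computation is a single application of the quadratic-form (Birman--Schwinger) philosophy, so I expect no essential obstacle; the one point demanding care is the sign bookkeeping, because both $\omega$ and $\omega-\Omega$ are negative, and the direction of the inequality flips when one divides through by $\omega-\Omega$. It is also worth confirming at the outset that $\omega\ne\Omega$, which guarantees that the reduction to \eqref{eq:BS} is legitimate and which follows immediately from $\omega<0\le\Omega$.
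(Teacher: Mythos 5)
Your proof is correct, and it takes a genuinely different route from the paper's. The paper argues in $L^1$ via the Lippmann--Schwinger representation: for $\omega<0$ it writes $\psi$ as the convolution of $-\tfrac{g^2}{c(\omega-\Omega)}\rho\psi$ with the Green's function $G^{\omega/c}$, invokes the facts from Appendix~\ref{app:G} (Lemma~\ref{lemma:Greensdecay}) that $G^{\omega/c}$ is positive and integrable with $\int_{\R^d}G^{\omega/c}(\bx)\,\d\bx=-c/\omega$, and integrates to obtain $\int_{\R^d}|\psi|\,\d\bx\le \tfrac{g^2\|\rho\|_\infty}{\omega(\omega-\Omega)}\int_{\R^d}|\psi|\,\d\bx$, which forces the claimed bound. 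You instead pair \eqref{eq:BS} with $\psi$ in $L^2$ and discard the nonnegative term $\langle(-\Delta)^{1/2}\psi,\psi\rangle$ --- an energy (quadratic-form) argument whose only inputs are Plancherel, $\rho\ge0$, and $\int_{\R^d}\rho|\psi|^2\,\d\bx\le\|\rho\|_\infty\|\psi\|_{L^2}^2$; your sign bookkeeping is accurate, since $\omega<0<\Omega-\omega$ makes the rearranged inequality $(-\omega)(\Omega-\omega)=\omega(\omega-\Omega)\le g^2\|\rho\|_\infty$ come out in the right direction. The trade-off: your argument is more elementary and self-contained, requiring no Green's-function facts whatsoever (and the pairing is legitimate because \eqref{eq:BS} itself shows $(-\Delta)^{1/2}\psi\in L^2(\R^d)$ once $\rho\in L^\infty$ and $\psi\in L^2$), whereas the paper's argument leans on the explicit kernel computations of Appendix~\ref{app:G} but thereby stays inside the integral-equation framework used for the rest of the paper (notably the resonance analysis) and shows in passing that bound states lie in $L^1(\R^d)$. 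Both routes yield exactly the same dimension-independent constant.
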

\begin{proof}
	For $\omega <0$, the operator $\left((-\Delta)^{1/2} - \frac{\omega}{c}\right)$ is invertible. Thus we have a Lippmann-Schwinger representation of \eqref{eq:BS} of the form
	\begin{equation}\psi(\bx) = -\frac{g^2}{c(\omega-\Omega)}\int_{\R^d} G^{\omega/c}(\bx-\by) \rho(\by) \psi(\by) \d \by.\end{equation}
	Here $G^{\omega/c}$ is the Green's function for negative $\omega$, given in \Cref{app:G}. From \Cref{lemma:Greensdecay}, it is clear that $G^{\omega/c}$ is real, positive and in $L^1(\R^d)$. Additionally, $\psi\in L^1$ since the above formula expresses it as a convolution of two $L^1$ functions. If we integrate over $\R^d$ and estimate the integral, we therefore have
	\begin{equation}\label{eq:est}\int_{\R^d} |\psi| \d \bx \leq -\frac{g^2}{c(\omega-\Omega)}\left(\int_{\R^d}G^{\omega/c}(\bx)\d \bx\right)\left(\int_{\R^d}\rho|\psi| \d \bx\right).\end{equation}
	Here we note that 
	\begin{equation}\int_{\R^d} G^{\omega/c}(\bx) \d \bx = \hat{G}^{\omega/c}(0),\end{equation}
	where $\hat{G}^{\omega/c}(\bk)$ denotes the Fourier transform of $G^{\omega/c}$, given by
	\begin{equation}\hat{G}^{\omega/c}(\bk) = \frac{1}{|\bk|-\frac{\omega}{c}}.\end{equation}
	Therefore $$\displaystyle\int_{\R^d} G^{\omega/c}(\bx) \d \bx =  -\frac{c}{\omega}.$$
	It follows from \eqref{eq:est} that 
	\begin{equation}\int_{\R^d} |\psi| \d \bx \leq \frac{g^2\|\rho\|_{\infty}}{\omega(\omega-\Omega)}\int_{\R^d}|\psi| \d \bx,\end{equation}
 and since $\psi\ne0$, we find that $g^2\|\rho\|_{\infty} \geq \omega(\omega-\Omega)$. Observe, in particular, that this condition is independent of the dimensionality $d$.\end{proof}

\subsection{Sufficient conditions for bound states} \label{sec:bound_suff}
The next theorem is the main result of this section. It guarantees the existence of bound states in dimension 1. Furthermore, it displays a family of densities illustrating that for sufficiently small densities there are no bound states in dimension 2 and higher, and that one can expect for $d\ge2$ that if the density is sufficiently large, there exists at least one bound state.
\begin{theorem} \label{thm:boundstates}
Consider the self-adjoint operator $H$, given by Eq.~(\ref{eq:H}), with dense domain $H^1(\R^d)\times L^2(\R^d)$. We have the following:
\begin{enumerate}
    \item If $d=1$, and the density $\rho$ satisfies hypotheses \ref{Hyp1}-\ref{Hyp3}. Then $H$ has at least one bound state with negative frequency $\omega < 0$.
    \item For $d\geq 2$, consider the family of piecewise constant densities $\rho= \rho_0\chi_{[-R,R]^d}$. If \begin{align}
            \frac{2g^2\rho_0 R}{\Omega} < S_d,
    \end{align}
    where $S_d$ is given in Eq.~(\ref{eq:SN}), then there are no negative-energy eigenstates.
    
    \item If $d\geq 2$, $\omega < 0$, and $\rho = \rho_0\chi_{[-R,R]^d}$, then there exists a constant $K$ independent of $g,\rho_0,\Omega$, and $R$ such that if
    \begin{align}
        g^2\rho_0 \geq K(\Omega+\vert\omega\vert)\left(\frac{\pi \tilde c}{2R}+\vert\omega\vert\right),
    \end{align}
    then $H$ has at least one bound state in the interval $(-\infty,\omega]$. In particular, if 
    \begin{align}\label{eq:Omegabound}
    \frac{2g^2\rho_0R}{\Omega c} > K\pi,
    \end{align}
    there is at least one bound state.
\end{enumerate}
\end{theorem}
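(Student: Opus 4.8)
The plan is to recast the bound-state problem through a Birman--Schwinger principle, as anticipated in the introduction. For $\omega<0$ the operator $(-\Delta)^{1/2}-\omega/c$ is invertible; write $A_\omega=\bigl((-\Delta)^{1/2}-\omega/c\bigr)^{-1}$, with positive kernel $G^{\omega/c}$, and set $\lambda(\omega)=-\frac{g^2}{c(\omega-\Omega)}=\frac{g^2}{c(\Omega-\omega)}>0$. Rewriting \eqref{eq:BS} for $u=\sqrt\rho\,\psi$ gives $u=\lambda(\omega)\,\mathcal K_\omega u$, where $\mathcal K_\omega=\sqrt\rho\,A_\omega\sqrt\rho$ is the (compact, self-adjoint, positive) integral operator supplied by the reformulation of the spectral problem. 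Hence a bound state of negative frequency $\omega$ exists precisely when $1/\lambda(\omega)$ is an eigenvalue of $\mathcal K_\omega$. Writing $\mu_1(\omega)=\|\mathcal K_\omega\|$ for the top eigenvalue, I would first record three soft facts: $\omega\mapsto\mathcal K_\omega$ is norm-continuous, so $\mu_1$ is continuous; $A_\omega$ is operator-monotone in $\omega$ (the symbol $1/(|\bk|+|\omega|/c)$ increases as $\omega\uparrow0$), so $\mu_1$ is nondecreasing in $\omega$; and $\|\mathcal K_\omega\|\le\|\rho\|_\infty\|A_\omega\|=c\|\rho\|_\infty/|\omega|\to0$ as $\omega\to-\infty$. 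The existence statements then follow by tracking the crossing of $\mu_1(\omega)$ with $1/\lambda(\omega)=c(\Omega-\omega)/g^2$.

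For part (1), the key is the threshold behaviour in $d=1$. Testing with $v=\sqrt\rho$ gives $\mu_1(\omega)\ge\langle\rho,A_\omega\rho\rangle/\|\rho\|_{L^1}=\frac{1}{\|\rho\|_{L^1}}\int_\R\frac{|\hat\rho(k)|^2}{|k|+|\omega|/c}\frac{dk}{2\pi}$. Since $\rho\ge0$ is nontrivial, $|\hat\rho(0)|^2=\|\rho\|_{L^1}^2>0$, and in dimension one $\int_{|k|<1}dk/|k|$ diverges; hence this lower bound diverges (logarithmically, like $\log(1/|\omega|)$) as $\omega\to0^-$, so $\mu_1(\omega)\to\infty$. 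As $\omega\to0^-$ the competing quantity $1/\lambda(\omega)\to c\Omega/g^2$ is finite, while as $\omega\to-\infty$ we have $\mu_1(\omega)\to0$ and $1/\lambda(\omega)\to\infty$. By continuity the function $\mu_1(\omega)-1/\lambda(\omega)$ passes from $+\infty$ to $-\infty$, so it has a zero $\omega^\ast<0$; at $\omega^\ast$ the value $1/\lambda(\omega^\ast)=\mu_1(\omega^\ast)$ is an eigenvalue of $\mathcal K_{\omega^\ast}$, giving a bound state. This is the step that fails for $d\ge2$, where $\int_{|\bk|<1}d\bk/|\bk|$ converges and $\mu_1(\omega)$ stays bounded as $\omega\to0^-$.

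For part (2), monotonicity gives $\mu_1(\omega)\le\mu_1(0)=\|\mathcal K_0\|$ and $\lambda(\omega)\le\lambda(0^-)=g^2/(c\Omega)$ for all $\omega<0$, whence $\lambda(\omega)\mu_1(\omega)\le\frac{g^2}{c\Omega}\|\mathcal K_0\|$. With $\rho=\rho_0\chi_{[-R,R]^d}$ the operator $\mathcal K_0=\rho_0\,\chi\,((-\Delta)^{1/2})^{-1}\chi$ has kernel $\rho_0\,\gamma_d\,|\bx-\by|^{-(d-1)}$, which is locally integrable for $d\ge2$; a Schur test together with the scaling $\by\mapsto R\by$ yields $\|\mathcal K_0\|\le C_d\,\rho_0 R$ for an explicit dimensional constant, which I would identify with the constant $S_d$ of \eqref{eq:SN}. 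Thus when $2g^2\rho_0R/\Omega<S_d$ one has $\lambda(\omega)\mu_1(\omega)<1$ for every $\omega<0$, so $1$ is never an eigenvalue of $\lambda(\omega)\mathcal K_\omega$ and there are no negative-energy eigenstates.

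For part (3) I would argue directly from the variational principle for the self-adjoint $H$, which is cleaner than producing a lower bound on $\mu_1$. Since $\inf\sigma_{\mathrm{ess}}(H)=0$, it suffices to exhibit a trial state $(\psi,\phi)$ with Rayleigh quotient $\le\omega<0$, for then the bottom of the spectrum is a discrete eigenvalue in $(-\infty,\omega]$. Taking $\phi=\beta\sqrt\rho\,\psi$ and minimizing the resulting quotient over the scalar $\beta$ reduces the requirement to finding $\psi\in H^1$ supported in $[-R,R]^d$ with $c\,\langle\psi,(-\Delta)^{1/2}\psi\rangle/\|\psi\|^2-\omega\le g^2\rho_0/(\Omega+|\omega|)$ (here $\int\rho|\psi|^2=\rho_0\|\psi\|^2$ because $\psi$ is supported in the cube). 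The pointwise inequality $\sqrt\lambda\le\lambda/(2\eta)+\eta/2$ gives, as quadratic forms, $(-\Delta)^{1/2}\le(2\eta)^{-1}(-\Delta)+\eta/2$; optimizing $\eta$ and using the trial function $\psi(\bx)=\prod_j\cos(\pi x_j/2R)$, for which $\|\nabla\psi\|^2/\|\psi\|^2=d(\pi/2R)^2$, bounds the nonlocal kinetic energy by $c\sqrt d\,\pi/(2R)$, which I would write as $\pi\tilde c/(2R)$. The condition then becomes $g^2\rho_0\ge K(\Omega+|\omega|)\bigl(\pi\tilde c/(2R)+|\omega|\bigr)$, with $K$ absorbing the losses in the trial function, and the ``in particular'' claim follows by letting $\omega\to0^-$ and invoking strictness. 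The main obstacle throughout is the quantitative low-frequency (threshold) analysis: controlling $\langle\rho,A_\omega\rho\rangle$ and $\|\mathcal K_0\|$ sharply enough to see the $d=1$ versus $d\ge2$ dichotomy in parts (1)--(2), and extracting the explicit nonlocal kinetic-energy constant in part (3).
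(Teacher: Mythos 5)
Your parts (1) and (3) are essentially correct. Part (1) is the paper's own route: the Birman--Schwinger reduction (your $\lambda(\omega)\mathcal K_\omega$ is exactly the paper's $K_\omega[\rho]$ in \eqref{eq:BSop}), the logarithmic divergence of $\int_{|k|<\delta}\bigl(|k|+|\omega|/c\bigr)^{-1}\d k$ as $\omega\to 0^-$ in $d=1$, and a continuity/monotonicity crossing; the paper tests with a smooth function against a square minorant $\rho_1\le\rho$ and packages the crossing into Theorem~\ref{BirmanSchwinger2}, while you test with $\sqrt\rho$ and run the intermediate value theorem directly --- cosmetic differences. Part (3) is a genuinely different and arguably cleaner route: you work variationally with $H$ itself, optimize over $\beta$ in $\phi=\beta\sqrt\rho\,\psi$ (which indeed produces exactly the threshold $g^2\rho_0/(\Omega+|\omega|)$), and control the nonlocal kinetic energy via the form inequality $(-\Delta)^{1/2}\preceq (2\eta)^{-1}(-\Delta)+(\eta/2)I$ with the product-cosine trial function, giving $c\sqrt d\,\pi/(2R)$ and the constant $K=1$. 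The paper instead bounds the quadratic form of $K_\omega[\rho]$ from below in Fourier variables with $\phi=(2R)^{-d/2}\chi_{[-R,R]^d}$, estimating $\prod_i\bigl(\sin(q_iR)/(q_iR)\bigr)^2$ on $|q_i|R\le\pi/2$. The one step you should make explicit is that your min--max argument requires $\inf\sigma_{\mathrm{ess}}(H)\ge 0$ (so that a Rayleigh quotient $\le\omega<0$ yields a discrete eigenvalue in $(-\infty,\omega]$); the paper records this fact after stating hypotheses \ref{Hyp1}--\ref{Hyp3}, whereas its own proof of part (3) goes through Theorem~\ref{BirmanSchwinger2} instead.

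The genuine gap is in part (2). The theorem asserts non-existence of negative-energy eigenstates under $2g^2\rho_0R/(\Omega c)<S_d$ with the \emph{specific} constant $S_d=\frac{d-1}{2}|\mathbb{S}^d|^{1/d}$ of \eqref{eq:SN}. That constant is the sharp constant in the Sobolev-type inequality $\langle f,(-\Delta)^{1/2}f\rangle\ge S_d\,\bigl\||f|^2\bigr\|_{d/(d-1)}$ (Lieb--Loss, Theorem 8.4), and the paper's proof is precisely this inequality combined with H\"older, $\int\rho|\psi|^2\d\bx\le\|\rho\|_{L^d}\bigl\||\psi|^2\bigr\|_{d/(d-1)}$ with $\|\rho\|_{L^d}=2\rho_0R$, applied to the quadratic form $c\langle\psi,(-\Delta)^{1/2}\psi\rangle-(\Omega+|\omega|)^{-1}g^2\int\rho|\psi|^2\d\bx$. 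You replace this by a Schur test on the kernel of $\mathcal K_0$ and then propose to ``identify'' the resulting constant with $S_d$. That identification is unjustified, and in general false: the Schur bound is $\rho_0\gamma_d\sup_{\bx}\int_{[-R,R]^d}|\bx-\by|^{-(d-1)}\d\by$, a cube-geometry quantity with no relation to the sharp Sobolev constant. Concretely, in $d=3$ (where $\gamma_3=1/(2\pi^2)$) the Schur test yields non-existence only for $g^2\rho_0R/(\Omega c)<2\pi^2\bigl(\int_{[-1,1]^3}|\bz|^{-2}\d\bz\bigr)^{-1}\approx 1.29$, whereas the theorem asserts it up to $S_3/2=(2\pi^2)^{1/3}/2\approx 1.35$. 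So your argument proves a statement of the same shape with a smaller, geometry-dependent constant, not the stated theorem; to obtain $S_d$ itself you need the Lieb--Loss inequality, which is the key ingredient your proposal is missing.
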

\begin{remark}
The hypothesis that $\rho(x)$ be compactly supported can be relaxed in part $1$ of the theorem above. All we require is that the operator $K_{\omega}[\rho]$ defined in Eq.~(\ref{eq:BSop}) is compact. For example, a positive, but sufficiently rapidly decaying $\rho(x)$ will also satisfy this condition.
\end{remark}
In order to prove the theorem, we will rewrite the eigenvalue problem for $H$ given in Eq.~(\ref{eq:H}) as an equivalent problem involving compact integral operators. This is analogous to the Birman-Schwinger principle used to study the Schr\"odinger equation ~\cite{Birman61,Schwinger61}. First we define a family of operators $K_\omega[\rho]:L^2(\R^d)\to L^2(\R^d)$, indexed by $\omega < 0$, and depending on the density $\rho$: 
\begin{align} \label{eq:BSop}
     K_\omega[\rho]= \frac{g^2 \rho^{1/2}\left(c(-\Delta)^{1/2}+\vert\omega\vert\right)^{-1} \rho^{1/2}}{\Omega+\vert \omega\vert}.
\end{align}
Note that $K_\omega[\rho]$ is self-adjoint and compact on $L^2(\R^d)$.
\begin{theorem} \label{BirmanSchwinger2}
Let $\omega<0$. The number of eigenvalues of $H$ in the interval $(-\infty,\omega]$ is equal to the number of eigenvalues of $ K_{\omega}[\rho]$ in the interval $[1,\infty)$
\end{theorem}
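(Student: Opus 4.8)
The plan is to run the classical Birman--Schwinger argument in the nonlocal, $\omega$-dependent setting: first reduce the two-component eigenproblem for $H$ to a scalar equation, then convert it into the fixed-point relation $K_\lambda[\rho]u=u$, and finally exploit monotonicity of the compact family $\lambda\mapsto K_\lambda[\rho]$ to turn a count of eigenvalues of $H$ into a count of eigenvalues of $K_\omega[\rho]$.

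First I would reduce. Fix $\lambda<0$, so $\lambda\ne\Omega$. If $(\psi,\phi)^\top$ is an eigenstate of $H$ with eigenvalue $\lambda$, the second row of \eqref{eigenproblem} gives $\phi=g(\lambda-\Omega)^{-1}\sqrt{\rho}\,\psi$, and $\psi\ne0$ (otherwise $\phi=0$ since $\lambda\ne\Omega$). Substituting into the first row yields the scalar equation
\begin{equation}\label{eq:reduced}
\bigl(c(-\Delta)^{1/2}+|\lambda|\bigr)\psi=\frac{g^2\rho}{\Omega+|\lambda|}\,\psi .
\end{equation}
Setting $u=\rho^{1/2}\psi$ and applying $\rho^{1/2}\bigl(c(-\Delta)^{1/2}+|\lambda|\bigr)^{-1}$ to \eqref{eq:reduced} (legitimate, since this operator is boundedly invertible for $\lambda<0$) gives $K_\lambda[\rho]\,u=u$; conversely, every eigenvector of $K_\lambda[\rho]$ with eigenvalue $1$ produces, via $\psi=g^2(\Omega+|\lambda|)^{-1}\bigl(c(-\Delta)^{1/2}+|\lambda|\bigr)^{-1}\rho^{1/2}u$, an eigenstate of $H$. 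The key point is that $\psi\mapsto\rho^{1/2}\psi$ is a linear isomorphism between the $\lambda$-eigenspace of $H$ and the $1$-eigenspace of $K_\lambda[\rho]$: one checks $\rho^{1/2}\psi\ne0$ for any nonzero eigenstate, because $\rho^{1/2}\psi=0$ forces the right-hand side of \eqref{eq:reduced} to vanish and hence $\psi=0$. Thus $\lambda<0$ is an eigenvalue of $H$ of multiplicity $m$ if and only if $1$ is an eigenvalue of $K_\lambda[\rho]$ of multiplicity $m$.

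Next I would establish monotonicity and continuity of the eigenvalues of $K_\lambda[\rho]$. Writing $|\lambda|=-\lambda$, both the scalar factor $g^2(\Omega-\lambda)^{-1}$ and the resolvent $\bigl(c(-\Delta)^{1/2}-\lambda\bigr)^{-1}$ are strictly increasing in $\lambda$ on $(-\infty,0)$ — the latter by operator monotonicity — so the quadratic form $\langle u,K_\lambda[\rho]u\rangle$ is strictly increasing in $\lambda$ for every $u$ with $\rho^{1/2}u\ne0$. Since $K_\lambda[\rho]$ is compact, self-adjoint and positive, its nonzero eigenvalues $\mu_1(\lambda)\ge\mu_2(\lambda)\ge\cdots\to0$, listed with multiplicity, are continuous in $\lambda$ (from norm-continuity of $\lambda\mapsto K_\lambda[\rho]$ together with min--max), and, because any eigenvector for a positive eigenvalue necessarily satisfies $\rho^{1/2}u\ne0$, each positive branch $\mu_n(\cdot)$ is in fact strictly increasing. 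Moreover $\|K_\lambda[\rho]\|\to0$ as $\lambda\to-\infty$, so $\mu_n(\lambda)\to0$ for every $n$.

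Finally I would assemble the count. By the equivalence above, the number of eigenvalues of $H$ in $(-\infty,\omega]$, counted with multiplicity, equals the number of pairs $(n,\lambda_0)$ with $\lambda_0\le\omega$ and $\mu_n(\lambda_0)=1$. Strict monotonicity of each positive branch ensures $\mu_n$ attains the value $1$ at most once, and since $\mu_n$ is continuous, non-decreasing and tends to $0$ at $-\infty$, the intermediate value theorem shows such a $\lambda_0\le\omega$ exists if and only if $\mu_n(\omega)\ge1$. Hence the count equals $\#\{n:\mu_n(\omega)\ge1\}$, which is exactly the number of eigenvalues of $K_\omega[\rho]$ in $[1,\infty)$; the closed endpoints match, since $\mu_n(\omega)=1$ corresponds precisely to $\omega$ being an eigenvalue of $H$. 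I expect the main obstacle to be the bookkeeping in this last step: one must verify that strict monotonicity makes the crossing-to-eigenvalue correspondence a genuine multiplicity-preserving bijection, and handle the boundary case $\mu_n(\omega)=1$ so that the two closed intervals $(-\infty,\omega]$ and $[1,\infty)$ are matched without any double counting.
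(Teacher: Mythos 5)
Your proposal is correct and follows essentially the same route as the paper's proof: reduce the eigenproblem for $H$ to the fixed-point equation $K_\omega[\rho]\psi=\psi$, then count eigenvalues by tracking the monotone branches $\mu_n(\lambda)$ of the compact self-adjoint family, using $\|K_\lambda[\rho]\|\to 0$ as $\lambda\to-\infty$ so that each branch with $\mu_n(\omega)\ge 1$ crosses the value $1$ exactly once in $(-\infty,\omega]$. The paper only sketches this (citing Birman and Schwinger), so your added care with multiplicity preservation, strict monotonicity of the positive branches, and the endpoint case $\mu_n(\omega)=1$ is a welcome filling-in of detail rather than a different argument.
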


\begin{proof}[Proof of Theorem \ref{BirmanSchwinger2} ]
Let $\omega<0$. We claim that $1$ is an eigenvalue of $ K_{\omega}[\rho]$ if and only if $\omega$ is an eigenvalue of $H$. We sketch the proof; see ~\cite{Birman61,Schwinger61}.  Let $H\phi = \omega\phi$, then the first component $\phi_1$ satisfies
\begin{align}
    c(-\Delta)^{1/2}\phi_1 + \frac{g^2\rho}{\omega-\Omega}\phi_1 = \omega\phi_1.
\end{align}
or
\begin{align}
    \left(c(-\Delta)^{1/2}+\vert\omega\vert\right)\phi_1 = \frac{g^2\rho}{\Omega+\vert \omega\vert}\phi_1.
\end{align}
The change of variables
\begin{align}
    \psi = \rho^{1/2}\phi_1,
\end{align}
and inverting  $\left(c(-\Delta)^{1/2}-\omega\right)$ yields a symmetrized form of the eigenvalue problem:
\begin{align}
    \psi = \frac{g^2}{\Omega+\vert \omega\vert}\rho^{1/2}\left(c(-\Delta)^{1/2}+\vert\omega\vert\right)^{-1}\rho^{1/2}\psi = K_{\omega}[\rho]\psi.
\end{align}
Since  $ K_{\omega}[\rho]$ is self-adjoint and compact, its eigenvalues $(\mu_n(\omega))_{n\ge1}$ can be found via the max-min principle; see \cite{ReedSimonIV} Theorem XIII.1.  $\omega$ is an eigenvalue of $H$ exactly when $\mu_n(\omega)=1$ for some $n\ge1$. Since $\|K_{\omega}[\rho]\|\to 0$ as $\omega\to -\infty$, it follows for each $n\ge1$, that $ \mu_{n}(\omega)\to 0$. So suppose there are $k$ eigenvalues of $ K_{\omega}[\rho]$ in the interval $[1,\infty)$. Then as $\omega\to -\infty$ each of these must cross the value $1$ at some value less than or equal to $\omega$, say $\omega_1,\cdots,\omega_k$. This gives rise to $k$ eigenvalues of $H$ in $(-\infty,\omega]$. Conversely, as $\mu_n(\omega)$ is monotonically decreasing in $\omega$, one observes that these are the only such eigenvalues. 
\end{proof}

\begin{proof}[Proof of Theorem \ref{thm:boundstates}]
By the max-min principle $K_\omega[\rho]$ has an eigenvalue in $[1,\infty)$ if we can produce a trial function $\phi$, with $\|\phi\|=1$ such that $\langle \phi,  K_\omega[\rho_1]\phi\rangle>1$.
We start with a general estimate valid in any dimension $d\geq 1$. Let $\rho_1(\bx)$ be a square density of width $2R > 0$ and height $\rho_0> 0$ centered at the origin
\begin{align}\label{eq:squaredensity}
    \rho_1(\bx) = \rho_0\chi_{[-R,R]^d}.
\end{align}
Let $K_{\omega}[\rho_1]$ be the operator defined in Eq.~(\ref{eq:BSop}). For any $\phi\in L^2(\R^d)$
\begin{align}
\begin{split}
    \langle \phi,  K_\omega[\rho_1]\phi\rangle &= \frac{g^2}{\Omega+\vert\omega\vert}\langle \rho_1^{1/2}\phi,(c(-\Delta)^{1/2}+\vert\omega\vert)^{-1}(\rho_1^{1/2}\phi)\rangle\\ 
     &= \frac{g^2\rho_0}{\Omega+\vert\omega\vert}\langle \phi\chi_{[-R,R]^d},(c(-\Delta)^{1/2}+\vert\omega\vert)^{-1}(\phi\chi_{[-R,R]^d})\rangle\\ 
    &=\frac{g^2\rho_0}{\Omega+\vert \omega\vert}\int_{\R^d}\vert \widehat{(\phi\chi_{[-R,R]^d})}(\bq)\vert^2\frac{1}{c\vert \bq\vert +\vert\omega\vert}\frac{\d\bq}{(2\pi)^d}. \label{eq:BS3}
\end{split}
\end{align}
We choose $\phi$ so that $\widehat{(\phi\chi_{[-R,R]^d})}(\bq)$ is continuous and nonzero at $\bq =0$. For example, we may pick $\phi\in C_0^{\infty}(\R^d)$ and such that
\begin{align*}
    \supp\phi \subset [-R,R]^d, \quad   \widehat{\phi}(0)= \int_{\R^d}\phi\ \d\bx \neq 0,\quad \int_{\R^d}\vert\phi\vert^2\d \bx = 1 .
\end{align*}
Then for $\delta >0$ and small, we obtain the lower bound,
\begin{align}
\begin{split}
       \langle \phi,  K_\omega[\rho_1]\phi\rangle &\geq  \frac{g^2\rho_0}{\Omega+\vert \omega\vert}\int_{|\bq|\leq\delta}\vert \widehat{\phi}(\bq)\vert^2\frac{1}{c\vert \bq\vert +\vert\omega\vert}\frac{\d\bq}{(2\pi)^d} \\
        & = \frac{g^2\rho_0}{\Omega+\vert \omega\vert}\vert \widehat{\phi}(0)\vert^2\int_{\vert \bq\vert < \delta}\frac{1}{c\vert \bq\vert+\vert\omega\vert}\frac{\d\bq}{(2\pi)^d} \\
        & -\frac{g^2\rho_0}{\Omega+\vert \omega\vert}\int_{|\bq|\leq\delta}\frac{\vert \widehat{\phi}(\bq)\vert^2-\vert \widehat{\phi}(0)\vert^2}{c\vert \bq\vert +\vert\omega\vert}\frac{\d\bq}{(2\pi)^d} \\
       &\geq \frac{g^2\rho_0}{\Omega+\vert \omega\vert}\vert \widehat{\phi}(0)\vert^2\int_{\vert \bq\vert < \delta}\frac{1}{c\vert \bq\vert+\vert\omega\vert}\frac{\d\bq}{(2\pi)^d} - C\delta^d \label{eq:BSestimate} 
\end{split}
\end{align}
where the constant $C$  is independent of $\omega$. 

\noindent\textbf{Proof of Part 1 of Theorem \ref{thm:boundstates}: }\\
Let $d=1$ and $\rho(x)$ be essentially bounded and positive with compact support. In 1 dimension, $\rho_1(x)$ is given by
\begin{align}
    \rho_1(x) = \rho_0\chi_{[-R,R]}(x-x_0).
\end{align}
For an appropriate choice of the constants $\rho_0,R, x_0$ we have
\begin{align} \label{eq:monotonicity1}
    \rho_1(x) \leq \rho(x),
\end{align}
for every $x\in \R$. Then 
\begin{align} \label{eq:monotonicity2}
    K_{\omega}[\rho_1]\leq K_{\omega}[\rho],
\end{align}
as nonnegative, compact, self-adjoint operators on $L^2(\R^d)$. By (\ref{eq:BSestimate}), there exists $\phi\in L^2(\R^d)$, with $\|\phi\|_2=1$, such that 
\begin{align}
    \liminf_{\omega\nearrow 0} \langle \phi,  K_{\omega}[\rho_1]\phi\rangle = \infty,
\end{align}
For this choice of $\phi$, there exists $\omega_*<0$ for which 
\begin{align}
    \langle \phi,  K_{\omega_*}[\rho_1]\phi\rangle > 1.  
\end{align}
Hence $\mu_1(\rho_1,\omega_*)$, the largest eigenvalue of $K_{\omega_*},[\rho_1]$, satisfies
\begin{align}
    \mu_1(\rho_1,\omega_*) = \max_{\|f\|_2=1} \langle f,  K_{\omega_*}[\rho_1]f\rangle > 1.
\end{align}
Moreover, by (\ref{eq:monotonicity1}) we have that
\begin{align}
    \mu_1(\rho,\omega_*) > 1.
\end{align}
Theorem \ref{BirmanSchwinger2} now implies that   $ K_{\omega}[\rho]$ has at least one eigenvalue in the interval $[1,\infty)$. Therefore, (\ref{eq:BS}) has an $L^2(\R^d)$ solution with $\omega\in(-\infty,\omega_*]$.\\

\noindent\textbf{Proof of Part 2 of Theorem \ref{thm:boundstates}: }\\
Now suppose that we let $d\geq 2$ and let $\rho_1$ be given as in \eqref{eq:squaredensity}. We consider the quadratic form
associated to the operator $\tilde H(\omega)$,
\begin{align}
\begin{split}
    \langle \psi,\tilde H(\omega)\psi\rangle &= c\int_{\R^d}\psi(-\Delta)^{1/2}\psi + \frac{g^2\rho(x)}{\vert\omega\vert+\Omega}\vert\psi(x)\vert^2 \d x .
\end{split}
\end{align}

We next derive a lower bound on $ \langle \psi,\tilde H(\omega)\psi\rangle$. Using 
the Sobolev-type bound Theorem 8.4 of \cite{lieb2001analysis}
\begin{align}
    \bigl\langle f, (-\Delta)^{1/2}f\bigr\rangle \geq S_d\|f\|_q^2 ,\quad f\in H^{1/2}(\R^d),\quad d\ge2,
\end{align}
we have 
\begin{align}
\begin{split}
    \langle \psi,\tilde H(\omega)\psi\rangle 
    &\geq c\langle \psi,(-\Delta)^{1/2}\psi\rangle - \frac{2g^2\rho_0R}{\vert \vert\omega\vert+\Omega\vert}\left\| \vert\psi\vert^2 \right\|_{d/(d-1)}\\
    &\geq \left\| \vert\psi\vert^2 \right\|_{d/(d-1)}\left(cS_d-\frac{2g^2\rho_0R}{\Omega +\vert\omega\vert}\right)\\
    &\geq \left\| \vert\psi\vert^2 \right\|_{d/(d-1)}2g^2\rho_0  R\left(\frac{1}{\Omega}-\frac{1}{\Omega+\vert\omega\vert}\right).
\end{split}
\end{align}
Hence, if  $R$, $g$ and $\rho_0$ be chosen so that 
\begin{align}\label{eq:SN}
    \frac{2g^2\rho_0 R}{\Omega c} < S_d := \frac{d-1}{2}\bigl\vert\mathbb{S}^d\bigr\vert ^{1/d}, 
\end{align}
we have that 
\begin{align}
    \inf_{\|\psi\|=1}\langle\psi,\tilde H(\omega)\psi\rangle  \geq 0.
\end{align}
By the variational characterization of the ground state of $\tilde H(\omega)$, we obtain that any eigenvalue, $\lambda(\omega)$, of $\tilde H(\omega)$ must satisfy $\lambda(\omega)\ge0$. 
Since any eigenvalue $\omega$ of $H$ corresponds to a solution of $\lambda(\omega)=\omega$, we obtain a contradiction since we have assumed $\omega<0$.

\noindent\textbf{Proof of Part 3 of Theorem \ref{thm:boundstates}: }\\
Let $d\ge2$. We now show the existence of negative-energy bound states for sufficiently large densities. Once again let $\rho_1$ be as in \eqref{eq:squaredensity}.

We let $\phi(\bx)=(2R)^{-d/2}\chi_{[-R,R]^d}(\bx)$ in \eqref{eq:BS3} and derive a lower bound:
\begin{align}
\begin{split}
    \langle \phi,  K_{\omega}[\rho]\phi\rangle &= \frac{g^2\rho_0}{\Omega+\vert\omega\vert}\frac{1}{(2\pi)^d}\int_{\R^d}\vert\widehat{\phi \chi_{[-R,R]^d}}(\bq)\vert^2 \frac{1}{c\vert \bq\vert +\vert\omega\vert}\d\bq\\
    &=\frac{g^2\rho_0}{\Omega+\vert\omega\vert}\frac{1}{(2R)^d}\frac{1}{(2\pi)^d}\int_{\R^d}\vert\widehat{ \chi_{[-R,R]^d}}(\bq)\vert^2 \frac{1}{c\vert \bq\vert+\vert\omega\vert}\d\bq\\
     &=\frac{g^2\rho_0}{\Omega+\vert\omega\vert}\frac{1}{(2R)^d}\frac{1}{(2\pi)^d}\int_{\R^d}\prod_{i=1}^d\left(\frac{2\sin(q_i R)}{q_i}\right)^2\frac{1}{c\vert \bq\vert +\vert\omega\vert}\d\bq\\
     &\geq\frac{g^2\rho_0 R^d}{\Omega+\vert\omega\vert}\frac{1}{(2\pi)^d}\int_{\vert q_i\vert R \leq \pi/2}\prod_{i=1}^d\left(\frac{\sin(q_i R)}{q_i R}\right)^2\frac{1}{\tilde c\vert\bq\vert_{\infty} +\vert\omega\vert}\d\bq\\
     &\geq\frac{g^2\rho_0 }{\Omega+\vert\omega\vert}\frac{1}{\pi \tilde c/(2R)+\vert\omega\vert}\frac{1}{(2\pi)^d}\prod_{i=1}^d\int_{\vert q_i\vert R\leq \pi/2}\left(\frac{\sin(q_i R)}{q_i R}\right)^2R\d q_i\\
     &\geq D_d\frac{g^2\rho_0}{(\Omega+\vert\omega\vert)(\pi \tilde c/(2R)+\vert\omega\vert)} .
\end{split}
\end{align}
By taking $g^2\rho_0$ sufficiently large, the right hand side can be made strictly greater than one. Hence,  $ K_{\omega}[\rho]$ has an eigenvalue in the interval $[1,\infty)$. It follows from Theorem \ref{BirmanSchwinger2} that $H(\omega_0)\psi=\omega_0\psi$ has a non-trivial eigenpair with $\omega_0\in (-\infty,\omega]$.
\end{proof}
\subsection{An upper bound on the number of bound states}

In this section, we derive an upper bound on the number of bound states of the operator $H$.
\begin{theorem} \label{thm:NBS}
    Let $\rho\in L^d$ with compact support and $\rho\geq 0$. Then for each $d\geq 2$, there is a constant $K_d$ independent of $g$ and $\Omega$ such that $N(\rho)$, the number of bound states of $H$ with negative frequency, satisfies 
    \begin{align} \label{eq:NBS}
        N(\rho) \leq K_d\left(\frac{g^2}{\Omega c}\right)^d\int_{\R^d}\left[\rho(\bx)\right]^d \d \bx.
    \end{align}
\end{theorem}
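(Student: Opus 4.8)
The plan is to combine the Birman--Schwinger reduction of \Cref{BirmanSchwinger2} with a Cwikel--Lieb--Rozenblum (CLR) bound for the massless relativistic Schr\"odinger operator $(-\Delta)^{1/2}-V$; the latter is exactly the estimate that follows from the generalized Feynman--Kac formula of \cite{Daubechies1983}. First I would pass to the spectral threshold. The family $K_\omega[\rho]$ from \eqref{eq:BSop} is monotone increasing as $\omega\uparrow 0$, since both $(\Omega+|\omega|)^{-1}$ and $(c(-\Delta)^{1/2}+|\omega|)^{-1}$ increase (as positive operators) when $|\omega|$ decreases. By \Cref{BirmanSchwinger2} the number of eigenvalues of $H$ in $(-\infty,\omega]$ equals the number of eigenvalues of $K_\omega[\rho]$ in $[1,\infty)$; letting $\omega\uparrow 0$, the left side increases to $N(\rho)$, while by the min--max principle the right side is dominated by the count for the limiting operator
\[
K_0[\rho]=\frac{g^2}{\Omega c}\,\rho^{1/2}(-\Delta)^{-1/2}\rho^{1/2},
\]
which, for $\rho\in L^d$ of compact support with $d\ge2$, is a bounded compact operator on $L^2(\R^d)$ (via Hardy--Littlewood--Sobolev and the locally integrable Green's function $\sim|\bx-\by|^{-(d-1)}$ of $(-\Delta)^{1/2}$). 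Hence $N(\rho)\le n_1(K_0[\rho])$, where $n_1$ denotes the number of eigenvalues in $[1,\infty)$.

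Next I would recognize $n_1(K_0[\rho])$ as a bound-state count. Setting $V=\tfrac{g^2}{\Omega c}\rho\ge0$, the operator $K_0[\rho]=V^{1/2}(-\Delta)^{-1/2}V^{1/2}$ is precisely the zero-energy Birman--Schwinger operator for $(-\Delta)^{1/2}-V$, so $n_1(K_0[\rho])$ equals the number of negative eigenvalues of $(-\Delta)^{1/2}-V$. It therefore suffices to prove the CLR-type inequality
\[
\#\{\text{negative eigenvalues of }(-\Delta)^{1/2}-V\}\ \le\ K_d\int_{\R^d}V(\bx)^d\,\d\bx ,\qquad d\ge2,
\]
for a dimensional constant $K_d$; substituting $V=\tfrac{g^2}{\Omega c}\rho$ then yields \eqref{eq:NBS}.

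For the CLR inequality I would use the generalized Feynman--Kac representation of \cite{Daubechies1983}. The semigroup $e^{-t(-\Delta)^{1/2}}$ is positivity preserving, generated by the rotationally symmetric Cauchy process, with integral kernel $p_t(\bx-\by)$ whose on-diagonal value scales as $p_t(\bzero)=c_d\,t^{-d}$. Writing $(-\Delta)^{-1/2}=\int_0^\infty e^{-t(-\Delta)^{1/2}}\,\d t$ gives $K_0[\rho]=\int_0^\infty V^{1/2}e^{-t(-\Delta)^{1/2}}V^{1/2}\,\d t$. Applying the convexity (Peierls--Klein) trace inequality to this time integral, together with the elementary counting bound $n_1(A)\le \tr\,\Phi(A)$ for a convex $\Phi$ dominating $\mathbf 1_{[1,\infty)}$, reduces the estimate to $\int_{\R^d}\int_0^\infty \Phi\!\big(tV(\bx)\big)\,p_t(\bzero)\,\tfrac{\d t}{t}\,\d\bx$; the substitution $u=tV(\bx)$ and the scaling $p_t(\bzero)=c_d t^{-d}$ convert the inner integral into $V(\bx)^d\int_0^\infty \Phi(u)u^{-d-1}\,\d u$, and integrating in $\bx$ produces the desired $\int V^d$ with $K_d=c_d\int_0^\infty \Phi(u)u^{-d-1}\,\d u$.

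I expect the main obstacle to be this last step, which reflects the borderline nature of CLR: a naive Schatten-class estimate fails, since the multiplier $|\bk|^{-1/2}$ of $(-\Delta)^{-1/4}$ does not lie in $L^{2d}(\R^d)$, so the Kato--Seiler--Simon bound for $\rho^{1/2}(-\Delta)^{-1/4}$ diverges. The convexity/path-integral argument circumvents this, but requires choosing the profile $\Phi$ to vanish sufficiently fast at the origin and to grow strictly slower than $u^d$ at infinity while still dominating $\mathbf 1_{[1,\infty)}$, so that $\int_0^\infty\Phi(u)u^{-d-1}\,\d u<\infty$. Finally, I would carry out the Feynman--Kac manipulations first for smooth, compactly supported $\rho$ (where all kernels are classical) and then extend to general $\rho\in L^d$ of compact support by approximation in $L^d$: the right-hand side of \eqref{eq:NBS} is continuous in $\|\rho\|_{L^d}$, and the bound-state count is stable under this limit via operator-norm convergence $K_0[\rho_n]\to K_0[\rho]$.
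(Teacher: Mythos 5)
Your proposal is correct and takes essentially the same route as the paper's Appendix~B proof: a Birman--Schwinger reduction followed by Lieb's path-integral argument (generalized Feynman--Kac formula, Jensen's inequality, and the on-diagonal kernel bound $p_t(\bzero)\sim t^{-d}$ for the relativistic semigroup, as in Daubechies), with the same profile-function scaling producing $\int\rho^d$ and the same final $L^d$-approximation step. The only imprecision is that you use a single function $\Phi$ both as the counting function dominating $\mathbf{1}_{[1,\infty)}$ and as the convex profile inside the $t$-integral; in the actual mechanism (and in the paper) these are two functions linked by the transform $F(x)=\int_0^\infty e^{-y}y^{-1}f(xy)\,\mathrm{d}y$ with $f$ convex (the paper takes $f(u)=b(u-a)_+$), a standard detail that does not change the approach.
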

The proof follows \cite{Daubechies1983}, \cite{Lieb1976}, and \cite{Lieb2005} quite closely. A detailed argument is included in Appendix \ref{app:WeylsLaw}.
\section{{Bound states and resonances for high contrast inclusions}} \label{sec:incl}
Throughout this section, we assume that $\rho$ is piecewise constant and supported on a connected domain $D\subset\R^d$. Specifically, we assume that
\begin{equation}\label{eq:scale}
\rho(\bx) = \rho_0\chi_{D}(\bx),
\end{equation}
where $\rho_0$ denotes the constant density inside $D$. In the subsequent sections, we will characterize the resonances and bound states in the case of \emph{high-contrast} inclusions, meaning that $D$ is a small inclusion with sufficiently high density $\rho_0$. We begin, however, by formulating the problem and stating some general properties in \Cref{sec:int}

\subsection{Integral equation formulation of the resonance problem}\label{sec:int}
As before, we seek $\omega\in\mathbb{C}$ and $\psi\neq 0$,
	with appropriate behavior as $|\bx|\to\infty$, such that 
	\begin{equation}\label{eq:ev-eqn}
		\left((-\Delta)^{1/2} - \frac{\omega}{c}\right) \psi = - \frac{g^2}{c(\omega-\Omega)}\rho(\bx)\psi.
	\end{equation}
	We will focus on the \emph{resonance problem}, corresponding to non-trivial solutions of \eqref{eq:ev-eqn} $(\omega,\psi)$ with $\Re(\omega) > 0$ and $\psi\in L^2_{\rm loc}$ which are \emph{outgoing} in the sense that $\psi$ admits an integral representation
\begin{equation}\label{eq:psi}
	\psi(\bx) = - \frac{g^2\rho_0}{c(\omega-\Omega)}\int_D G^{\omega/c}(\bx-\by)\psi(\by)\d \by,
\end{equation}
where $G^k$ is the \emph{outgoing} Green's function of $(-\Delta)^{1/2} - k$ given explicitly by
\begin{equation}\label{eq:Goutmain}
	G^k(\bx) = \begin{cases} \ds 
		\frac{1}{2\pi}\left(e^{\iu k |x| }E_1(\iu k|x|) + e^{-\iu k |x| }E_1(-\iu k|x|) \right) + \iu e^{\iu k|x|},  & d= 1,\\[0.5em] \ds
		
		\frac{1}{2\pi |\bx|} - \frac{ k}{4}\mathbf{K}_0(k|\bx|) + \frac{\iu k}{2}H_0^{(1)}(k|\bx|),  & d= 2,\\[0.7em]  
		
		\frac{1}{2\pi^2|\bx|^2}-\frac{\iu k}{4\pi^2|\bx|}\left(e^{\iu k |\bx| }E_1(\iu k|\bx|) - e^{-\iu k |\bx| }E_1(-\iu k|\bx|)\right) +  \frac{ke^{\iu k |\bx|}}{2\pi|\bx|},  & d= 3.
	\end{cases}
\end{equation}
Here, $E_1$ is the exponential integral function, $\mathbf{K}_0$ is the zeroth-order Struve function, and $H_0^{(1)}$ is the zeroth-order Hankel function of the first kind. We refer to \Cref{app:G} for the derivation and properties of $G^k$. In particular, when $k>0$ is real, $G^{k}$ is oscillatory as $|\bx|\to \infty$ and obeys the outgoing Sommerfeld radiation condition (see \Cref{sec:rad})
\begin{equation}\lim_{|\bx|\to \infty} |\bx|^{\frac{d-1}{2}}\left(\frac{\p G^k}{\p |\bx|} - \iu k G^k \right) = 0.\end{equation}

We define the integral operator $\G^\omega: L^2(D) \to L^2(D)$ as
\begin{equation}\label{eq:A}
	\G^\omega[\phi](\bx) = -(\omega-\Omega)\phi(\bx) - \frac{g^2\rho_0}{c}\int_D G^{\omega/c}(\bx-\by)\phi(\by)\d \by.\end{equation}
The following result establishes an equivalence
 between solutions of the resonance eigenvalue problem 
  and elements of the kernel of $\G^\omega$.
 \begin{prop}
     Assume that $(\psi,\omega)$ with $\psi\in L^2(D)$ 
     and with $\omega\ne\Omega$ is a solution of 
     \begin{equation}\label{eq:eval}
	\G^\omega[\psi] = 0,\quad \psi\in L^2(D).
\end{equation}
Then $\psi$, extended to all $\mathbb{R}^d$ by \eqref{eq:psi}, solves \eqref{eq:ev-eqn}. Conversely, every non-trivial solution
 of the resonance problem \eqref{eq:psi} gives rise to a non-trivial solution $\psi\in L^2(D)$ of \eqref{eq:eval}.
 \end{prop}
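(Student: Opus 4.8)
The plan is to observe that the equation $\G^\omega[\psi]=0$ on $L^2(D)$ is, after dividing by the nonzero scalar $-(\omega-\Omega)$, literally the restriction of the integral representation \eqref{eq:psi} to $\bx\in D$; the substantive content of the proposition is therefore the passage between this integral equation on $D$ and the eigenvalue equation \eqref{eq:ev-eqn} on all of $\R^d$, together with the bookkeeping of $L^2$ regularity and non-triviality.

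For the forward implication, I would take $\psi\in L^2(D)$ with $\G^\omega[\psi]=0$ and extend it to $\R^d$ by the right-hand side of \eqref{eq:psi}. Setting $f := -\frac{g^2\rho_0}{c(\omega-\Omega)}\chi_D\,\psi$, which is an $L^2(\R^d)$ function supported in $\overline{D}$, the extended amplitude is exactly the convolution $\psi = G^{\omega/c}*f$. I would then apply $\left((-\Delta)^{1/2}-\frac{\omega}{c}\right)$, using the defining property of the outgoing Green's function, $\left((-\Delta)^{1/2}-\frac{\omega}{c}\right)G^{\omega/c}=\delta$ (established in \Cref{app:G}), together with the fact that a Fourier-multiplier operator passes through a convolution, to obtain $\left((-\Delta)^{1/2}-\frac{\omega}{c}\right)\psi = \delta * f = f$. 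Since the extension agrees on $D$ with the original $\psi$, one has $f = -\frac{g^2}{c(\omega-\Omega)}\rho\,\psi$, which is precisely \eqref{eq:ev-eqn}. The outgoing behavior at infinity is automatic, as $\psi$ is built from the outgoing kernel $G^{\omega/c}$.

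For the converse, I would start from a non-trivial outgoing solution $\psi\in L^2_{\rm loc}$ of \eqref{eq:psi} and restrict to $\bx\in D$. Since $D$ is bounded, $\psi|_D\in L^2(D)$; multiplying the restricted identity by $-(\omega-\Omega)\ne 0$ and rearranging gives $\G^\omega[\psi|_D]=0$. Non-triviality of $\psi|_D$ follows by contradiction: if $\psi$ vanished almost everywhere on $D$, the representation \eqref{eq:psi} would force $\psi\equiv 0$ on all of $\R^d$, contradicting the hypothesis.

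The main obstacle is the rigorous justification, in the forward direction, of moving the nonlocal operator $(-\Delta)^{1/2}$ inside the convolution and invoking $\left((-\Delta)^{1/2}-\frac{\omega}{c}\right)G^{\omega/c}=\delta$. Because $(-\Delta)^{1/2}$ is a Fourier multiplier rather than a local differential operator, this is cleanest on the Fourier side (or distributionally), and requires the integrability and decay estimates for $G^{\omega/c}$ together with control of $\widehat{G}^{\omega/c}(\bk)=(|\bk|-\omega/c)^{-1}$ near the singular sphere $|\bk|=\omega/c$. This is exactly the point where the \emph{outgoing} (rather than standing-wave or incoming) choice of resolvent enters, and where one must verify enough local regularity of $\psi = G^{\omega/c}*f$ to ensure it lies in the domain of $(-\Delta)^{1/2}$.
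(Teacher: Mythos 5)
Your proof is correct, and in fact the paper states this proposition without any proof, treating it as exactly the definitional unwinding you describe: $\G^\omega[\psi]=0$ is the restriction of the representation \eqref{eq:psi} to $D$, the extension solves \eqref{eq:ev-eqn} via the fundamental-solution property $\bigl((-\Delta)^{1/2}-\tfrac{\omega}{c}\bigr)G^{\omega/c}=\delta$ from \Cref{app:G}, and the converse is restriction to $D$ plus the observation that $\psi|_D\equiv 0$ forces $\psi\equiv 0$ through \eqref{eq:psi}. Your argument supplies precisely the routine verification the authors omitted, including a fair flag of the one genuine technical point (interpreting the nonlocal operator applied to $G^{\omega/c}*f$ distributionally), so there is nothing to correct.
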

Suppose $(\omega,\psi)$ is a non-trivial pair for which \eqref{eq:eval} holds. If $\Im(\omega) <0$, then the extended eigenfunction $\psi$ on $\R^d$ will not belong to $L^2(\R^d)$, and $\omega$ is not an $L^2$-eigenvalue of the system \eqref{eq:system1}.

The following result, a consequence of self-adjointness of $H$, defined in  \eqref{eq:H}, constrains the imaginary parts of resonances to lying in the lower half plane.
\begin{prop}\label{prop:im}
	Assume that $\psi \in L^2(D)$ satisfies \eqref{eq:eval}. Then $\Im(\omega) \leq 0$.
\end{prop}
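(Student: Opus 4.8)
The plan is to exploit self-adjointness of $H$ indirectly, via an energy/positivity identity for the integral equation, rather than directly through the operator $H$ (which lives on $L^2(\R^d)$ and does not apply to the non-$L^2$ resonance state). The natural route is to pair the equation $\G^\omega[\psi]=0$ against $\psi$ itself in $L^2(D)$ and examine the imaginary part. Concretely, from \eqref{eq:eval} and the definition \eqref{eq:A} we have
\begin{equation*}
-(\omega-\Omega)\langle \psi,\psi\rangle_{L^2(D)} = \frac{g^2\rho_0}{c}\iint_{D\times D} \overline{\psi(\bx)}\,G^{\omega/c}(\bx-\by)\,\psi(\by)\,\d\by\,\d\bx.
\end{equation*}
Since $\|\psi\|_{L^2(D)}^2$ is real and positive, $\Im(\omega)$ has the same sign as $-\Im$ of the right-hand side (up to the positive factor $g^2\rho_0/c\,\|\psi\|^2$, after solving for $\omega$). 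The core of the argument is therefore to control the sign of the imaginary part of the quadratic form
$Q(\psi) := \iint \overline{\psi(\bx)}\,G^{\omega/c}(\bx-\by)\,\psi(\by)\,\d\bx\,\d\by$.

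First I would pass to the Fourier side. The decomposition of $G^k$ in \eqref{eq:Goutmain} separates into a real part (the principal-value / resolvent part, whose Fourier symbol is the real function $1/(|\bk|-k)$ understood as a principal value) and an explicitly outgoing part whose imaginary contribution is what carries the radiation. The key structural fact, which I would extract from Appendix~\ref{app:G}, is that the imaginary part of $\widehat{G^k}$ is a nonnegative measure supported on the sphere $|\bk| = \Re(\omega)/c$ — i.e. $\Im \widehat{G^k}(\bk) \ge 0$ for real $k>0$, reflecting the outgoing convention. Writing $Q(\psi)$ via Parseval as $\int |\widehat{\psi\chi_D}(\bk)|^2\,\widehat{G^{\omega/c}}(\bk)\,\d\bk/(2\pi)^d$, the imaginary part becomes $\int |\widehat{\psi\chi_D}(\bk)|^2\,\Im\widehat{G^{\omega/c}}(\bk)\,\d\bk/(2\pi)^d \ge 0$. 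Taking the imaginary part of the displayed identity and using $\Im Q(\psi)\ge0$ together with $g^2\rho_0/c>0$ and $\|\psi\|^2>0$ then forces $\Im(\Omega-\omega)\ge 0$, i.e. $\Im(\omega)\le 0$, which is the claim.

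The main obstacle is the rigorous handling of $\Im\widehat{G^{\omega/c}}$ when $\omega$ is genuinely complex, since then $G^{\omega/c}$ is the analytic continuation of the outgoing Green's function off the real axis and the clean sphere-supported-measure picture is exact only for real $\omega$. I would therefore first establish the inequality for real $\omega>0$ (where the Sommerfeld radiation condition and the explicit $\Im\widehat{G^k}=\pi\delta(|\bk|-k)$-type identity apply directly), and then argue that for $\Im(\omega)<0$ the outgoing continuation only increases the damping — more precisely, that the imaginary part of the symbol retains the correct sign on the relevant half-plane. An alternative, cleaner route that avoids continuation subtleties is to return to the full self-adjoint picture: for $\Im(\omega)<0$ the extended $\psi$ decays exponentially (by the asymptotics of $G^{\omega/c}$), so one can instead test the original equation \eqref{eq:ev-eqn} against $\overline{\psi}$ over $\R^d$ and integrate by parts, using that $c(-\Delta)^{1/2}$ is self-adjoint so that $\langle \psi, (-\Delta)^{1/2}\psi\rangle$ is real; the only imaginary contribution then comes from the boundary/radiation term, whose sign is fixed by the outgoing condition. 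I expect the Fourier-side argument to be the most transparent and would present that as the primary proof, relegating the sign verification of $\Im\widehat{G^{\omega/c}}$ to a lemma drawing on the Green's-function computations in Appendix~\ref{app:G}.
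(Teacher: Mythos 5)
Your core identity and the idea of controlling the quadratic form $Q(\psi)$ through the sign of $\Im\widehat{G^{\omega/c}}$ can be turned into a correct proof, but your execution plan has the half-planes reversed, and this is a genuine gap. The only case requiring any argument is $\Im(\omega)>0$ (if $\Im(\omega)\leq 0$ there is nothing to prove), yet that case is never addressed: you propose to establish the sign fact for real $\omega>0$ and then extend it to $\Im(\omega)<0$. The half-plane $\Im(\omega)<0$ is irrelevant to the claim, and there the symbol picture genuinely fails — the analytic continuation of the outgoing Green's function below the real axis is no longer a resolvent kernel, its term $e^{\iu k|\bx|}$ grows exponentially, so $\widehat{G^{\omega/c}}$ is not available as a tempered object and no sign statement can hold. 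The same reversal infects your "alternative, cleaner route": for $\Im(\omega)<0$ the extension of $\psi$ by \eqref{eq:psi} \emph{grows} at infinity; it is for $\Im(\omega)>0$ that the outgoing Green's function decays, and then only polynomially, $G^{\omega/c}(\bx)=O\bigl(|\bx|^{-(d+1)}\bigr)$, not exponentially.

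The fix is immediate and makes your primary route clean: argue by contradiction assuming $\Im(\omega)>0$. In that half-plane the outgoing Green's function coincides with the honest resolvent kernel: it lies in $L^1(\R^d)$ and its Fourier transform is literally the function $1/(|\bk|-\omega/c)$, whose imaginary part equals $\Im(\omega/c)\,\big/\,\bigl||\bk|-\omega/c\bigr|^2>0$ pointwise; no principal values or sphere-supported measures are needed (those arise only in the limiting real-$\omega$ case, which you do not need). Since $\psi\chi_D\in L^1\cap L^2$, Parseval applies and gives $\Im Q(\psi)>0$ for $\psi\neq 0$, whence taking imaginary parts in your displayed identity yields $-\Im(\omega)\|\psi\|^2_{L^2(D)}=(g^2\rho_0/c)\,\Im Q(\psi)>0$, contradicting $\Im(\omega)>0$. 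So corrected, your argument is a genuine alternative to the paper's proof: the paper instead observes that for $\Im(\omega)>0$ the decay of $G^{\omega/c}$ places the extension of $\psi$ in $L^2(\R^d)$, so that $(\psi,\phi)$ with $\phi=g(\omega-\Omega)^{-1}\sqrt{\rho}\,\psi$ would be an $L^2$ eigenstate of the self-adjoint operator $H$ with non-real eigenvalue, which is impossible. Both proofs hinge on the same fact — above the real axis the outgoing kernel is the true resolvent — but yours is more computational and self-contained, while the paper's is shorter because it reuses self-adjointness of $H$ abstractly.
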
 
\begin{proof}
	Suppose $\Re(\omega) > 0$ and $\psi\in L^2_{\rm loc}(\mathbb{R}^d)$
	satisfies \eqref{eq:ev-eqn}, and \eqref{eq:psi}. For $\Im(\omega)>0$, we know from \Cref{sec:decay} that the outgoing Green's function $G$ satisfies $G^\omega(\bx) = O(|\bx|^{-(d+1)})$ as $|\bx|\to \infty$.
 Hence, $\psi\in L^2(\mathbb{R}^d)$ and 
	$(\psi,\phi)$, where $\phi= g(\omega-\Omega)^{-1}\sqrt{\rho}\psi$, is bound state.
	Since $\Im(\omega) \neq 0$, this contradicts self-adjointness of $H$ as an operator on $L^2(\R^d) \times L^2(\R^d)$, defined in \eqref{eq:H}.
\end{proof}

\begin{remark}\label{rmk:neg}
	The problem of bound states ($\Re(\omega)<0$ and $\Im(\omega) = 0$) can be treated analogously. If we seek solutions with $\omega$ real and negative, we have an analogous integral representation \eqref{eq:psi}, where the Green's function is now rapidly decaying as $|\bx| \to \infty$ and the corresponding solution $\psi$ lies in $L^2(\R^d)$. We will present the analysis for the resonant problem only, and outline the main differences for the bound states in the remarks following \Cref{thm:2D}. Furthermore, using the appropriate Green's function as given in \Cref{app:G}, \Cref{prop:im} can easily be extended to conclude that there are no resonances with negative real part ($\Re(\omega)<0$ and $\Im(\omega)\neq 0$).
\end{remark}
\subsection{Resonances and bound states in two and three dimensions}
We now turn to the high-contrast problem. We will derive asymptotic expansions of the nonlinear eigenvalues in the limit when the size of $D$ tends to zero. We focus on the resonant case $\Re(\omega) > 0$.

We assume that $D=B_\epsilon$ is a rescaling of some connected domain $B_1\subset \R^d$,
\begin{equation}\label{eq:D}
	B_\epsilon = \epsilon B_1, \qquad 	\rho(\bx) = \rho_0(\epsilon)\chi_{B_\epsilon}(\bx),
\end{equation}
for $\epsilon>0$. To highlight the dependence on $\epsilon$, we will write $\G^{\omega}_\epsilon$ for $\G^\omega$ as defined in \eqref{eq:A}, with $D=B_\epsilon$ as in \eqref{eq:D}.
We additionally assume that $\rho_0$ depends on $\epsilon$. When $d\in \{2,3\}$, we assume that $\rho$ scales as 
\begin{equation}\rho_0(\epsilon) = \frac{s_0}{\epsilon},\end{equation}
as $\epsilon \to 0$, for some constant $s_0 >0$. As we shall see, this scaling of $\rho_0$ makes the resonances of order $O(1)$ as $\epsilon \to 0$. We define the map $S_\epsilon$ as
\begin{equation}S_\epsilon: L^2({B_\epsilon}) \to L^2(B_1), \quad S_\epsilon f(\bx) = f(\epsilon\bx).\end{equation}
Using the expansion of $G^k$ in \Cref{sec:sing}, we obtain an expansion of $\epsilon^dG^k(\epsilon \bx)$ given by
\begin{equation}\label{eq:Gexp23D}
	\epsilon^dG^k(\epsilon \bx) = \sum_{n=0}^\infty \epsilon^{n+1}A_n^k(\bx) +  \sum_{n=d-1}^\infty \epsilon^{n+1}\log(\epsilon)\Glog{n}{k}(\bx),
\end{equation}
where the functions $A_n^k$ and $\Glog{n}{k}$ can be explicitly computed (the first few terms, which will be relevant in this section, are given in \Cref{sec:sing}). Crucially, for $n\geq d$, these functions are uniformly bounded for $\bx \in B_1$, so \eqref{eq:Gexp23D} converges uniformly for $\bx \in B_1$ and $0<\epsilon\ll 1$. We define the operators $\A_n^\omega: L^2(B_1) \to L^2(B_1)$ and $\Alog{n}{\omega}: L^2(B_1) \to L^2(B_1)$, for $n>0$, as 
\begin{equation}\A_n^\omega[\phi](\bx) = -\frac{g^2s_0}{c}\int_{B_1} A^{\omega/c}_n(\bx-\by)\phi(\by) \d \by, \quad \Alog{n}{\omega}[\phi](\bx) = -\frac{g^2s_0}{c}\int_{B_1}\Glog{n}{\omega/c}(\bx-\by)\phi(\by) \d \by.\end{equation}
We now define $\A_0^\omega: L^2(B_1) \to L^2(B_1)$ by
\begin{equation} \label{eq:A0}
\A_0^\omega[\phi](\bx) = -(\omega-\Omega)\phi(\bx) -\frac{g^2s_0}{c}\int_{B_1} A_0(\bx-\by)\phi(\by) \d \by.
\end{equation}
Here $A_0 = A_0^k$ is independent of $k$, so we omit the superscript. Throughout this work, we define $\L(X)$ to be the space of bounded, linear operators on the Banach space 
$X$, equipped with the operator norm. We then have the following result, which follows from \eqref{eq:Gexp23D} along with \eqref{eq:bound} and a change of variables $\by=\eps \bx$. 
\begin{prop}
	Assume $B_\epsilon = \epsilon B_1$ and $\rho(\bx) =  \frac{s_0}{\epsilon}\chi_{B_\epsilon}(\bx)$. We then have 
	\begin{equation}S_\epsilon\G_\epsilon^\omega S_\epsilon^{-1} = \sum_{n=0}^\infty \epsilon^n\A_n^\omega + \sum_{n=d-1}^\infty \epsilon^n\log(\epsilon)\Alog{n}{\omega},
	\end{equation} 
where the convergence holds in $\L\bigl(L^2(B_1)\bigr)$.
\end{prop}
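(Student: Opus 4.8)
The plan is to reduce the claimed operator identity to the pointwise kernel expansion \eqref{eq:Gexp23D} by an explicit change of variables, and then to upgrade the resulting termwise matching of coefficients to convergence in $\L\bigl(L^2(B_1)\bigr)$ using the estimate \eqref{eq:bound}.

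First I would compute the conjugated operator $S_\epsilon \G_\epsilon^\omega S_\epsilon^{-1}$ acting on $\phi\in L^2(B_1)$. Writing $(S_\epsilon^{-1}\phi)(\by)=\phi(\by/\epsilon)$ for $\by\in B_\epsilon$ and substituting $\rho_0=s_0/\epsilon$ into the definition \eqref{eq:A} of $\G^\omega$, one applies $S_\epsilon$ (i.e.\ evaluates at $\epsilon\bx$ with $\bx\in B_1$) and rescales the integration variable by $\by=\epsilon\bu$, $\d\by=\epsilon^d\d\bu$. This yields
\begin{equation}
\bigl(S_\epsilon \G_\epsilon^\omega S_\epsilon^{-1}\phi\bigr)(\bx) = -(\omega-\Omega)\phi(\bx) - \frac{g^2 s_0}{c}\,\epsilon^{d-1}\int_{B_1} G^{\omega/c}\bigl(\epsilon(\bx-\bu)\bigr)\phi(\bu)\,\d\bu,
\end{equation}
so that the entire $\epsilon$-dependence is concentrated in the rescaled kernel $\epsilon^{d-1}G^{\omega/c}(\epsilon\,\cdot)$.

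Next I would insert the expansion \eqref{eq:Gexp23D}, written with $k=\omega/c$ and argument $\bz=\bx-\bu$. Since $\epsilon^{d-1}G^k(\epsilon\bz)=\epsilon^{-1}\bigl(\epsilon^d G^k(\epsilon\bz)\bigr)$, the prefactor $\epsilon^{-1}$ cancels one power in each term of \eqref{eq:Gexp23D}, producing the kernel
\begin{equation}
\frac{g^2 s_0}{c}\Bigl(\sum_{n=0}^\infty \epsilon^n A_n^{\omega/c}(\bx-\bu) + \sum_{n=d-1}^\infty \epsilon^n\log(\epsilon)\,\Glog{n}{\omega/c}(\bx-\bu)\Bigr).
\end{equation}
Matching coefficients against the definitions of $\A_n^\omega$ and $\Alog{n}{\omega}$, the $n=0$ term of the first sum combines with the local term $-(\omega-\Omega)\phi$ to produce exactly $\A_0^\omega$ (here one uses that $A_0^k=A_0$ is independent of $k$), while each $n\ge 1$ term gives $\epsilon^n\A_n^\omega$ and each logarithmic term gives $\epsilon^n\log(\epsilon)\Alog{n}{\omega}$. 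This establishes the identity formally, term by term.

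The step I expect to be the main obstacle is promoting this formal equality to convergence in $\L\bigl(L^2(B_1)\bigr)$, and here \eqref{eq:bound} does the work. I would split each series at $n=d$. For the tails $n\ge d$ the coefficient functions $A_n^{\omega/c}$ and $\Glog{n}{\omega/c}$ are uniformly bounded on the set of differences $\{\bx-\bu:\bx,\bu\in B_1\}$, so the Hilbert--Schmidt bound gives $\|\A_n^\omega\|_{\L}\le \tfrac{g^2 s_0}{c}|B_1|\,\sup|A_n^{\omega/c}|$ and likewise for $\Alog{n}{\omega}$; the uniform convergence of \eqref{eq:Gexp23D} for $0<\epsilon\le\epsilon_0$ forces $\sup|A_n^{\omega/c}|\le C\epsilon_0^{-(n+1)}$, which renders $\sum_{n\ge d}\epsilon^n\|\A_n^\omega\|_{\L}$ and $\sum_{n\ge d}\epsilon^n|\log\epsilon|\,\|\Alog{n}{\omega}\|_{\L}$ convergent geometric-type series for $\epsilon<\epsilon_0$. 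The finitely many low-order terms $0\le n\le d-1$ carry the singularities of $G^{\omega/c}$, whose strongest singularity is of order $|\bx|^{-(d-1)}$; these are weakly singular convolution kernels that are locally integrable (order strictly less than $|\bx|^{-d}$) and hence define bounded operators on $L^2(B_1)$ by Young's inequality on the bounded domain. Being finite in number, they pose no convergence issue. Combining the geometric control of the tails with the boundedness of the finitely many singular coefficient operators yields convergence of both series in $\L\bigl(L^2(B_1)\bigr)$ for $0<\epsilon\ll 1$, completing the argument.
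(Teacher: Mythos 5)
Your proposal is correct and follows essentially the same route the paper indicates: the paper's proof is precisely the change of variables $\by=\epsilon\bu$, insertion of the kernel expansion \eqref{eq:Gexp23D}, and the uniform bound \eqref{eq:bound} to control the tails in operator norm, with the finitely many weakly singular low-order kernels handled as bounded convolution operators. The only cosmetic difference is that you recover the $n$-dependent bound $\sup|A_n^{\omega/c}|\le C\epsilon_0^{-(n+1)}$ from uniform convergence of the series, whereas \eqref{eq:bound} already gives $|A_n^k|<C$ uniformly in $n\ge d$, making the geometric tail estimate immediate.
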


$\A_0^\omega$ can be seen as the limiting operator of $\G_\epsilon^\omega$ as $\epsilon \to 0$. Moreover, the eigenvalue problem $\A_0^\omega \psi = 0$
amounts to a linear eigenvalue problem. We define the integral operator $L_0:L^2(B_1)\to L^2(B_1)$ by
\begin{equation} \label{eq:L0}
	L_0\phi(\bx)=\frac{g^2s_0}{c}\int_{B_1} A_0(\bx-\by)\phi(\by)\d \by,
\end{equation}
then $\A_0^\omega \psi = 0$ is equivalent to 
\begin{equation}L_0 \psi = (\Omega-\omega)\psi,\end{equation}
where the parameter $\Omega$ leads to a shift of the eigenvalues.

From \eqref{eq:Gn2D} and \eqref{eq:Gn3D}, we have that $A_0$ is strictly positive and weakly singular at $\bx = 0$. We then have that $L_0$ is a positive, compact, and self-adjoint operator on $L^2(B_1)$, which means that it has a sequence of real, positive, eigenvalues converging to zero. Correspondingly, $\A_0^\omega$ has a sequence of eigenvalues $\omega_j, \ j=1,2,...$ converging to $\Omega$ from below. Next, we will compute the eigenvalues $\omega = \omega_\epsilon$ of $\G_\epsilon^\omega$ which arise as perturbations $\omega_j$. For simplicity, we will focus on the case where $\omega_j$ is simple. As we show below, this is at least true for the smallest eigenvalue. 
\begin{prop}
    Let $\A_0^\omega$ be given by \eqref{eq:A0}. Then the lowest eigenvalue is simple and the associated eigenfunction can be chosen to be strictly positive.
\end{prop}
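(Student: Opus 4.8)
The plan is to prove that the lowest eigenvalue of $\A_0^\omega$—equivalently, the largest eigenvalue of the positive compact self-adjoint operator $L_0$ defined in \eqref{eq:L0}—is simple with a strictly positive eigenfunction, by appealing to a Perron--Frobenius (Krein--Rutman) type argument for integral operators with positive kernels. The key structural fact, noted just before the statement, is that the kernel $A_0(\bx-\by)$ is strictly positive and weakly singular on $B_1$. Thus $L_0$ is a positivity-improving operator: if $\phi \in L^2(B_1)$ is nonnegative and not identically zero, then $L_0\phi(\bx) = \frac{g^2 s_0}{c}\int_{B_1} A_0(\bx-\by)\phi(\by)\,\d\by > 0$ for a.e. $\bx \in B_1$, since the integrand is strictly positive on a set of positive measure. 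This positivity-improving property is precisely what drives both simplicity and positivity of the top eigenfunction.

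First I would record that $L_0$ is compact, self-adjoint, and positive (all stated in the excerpt), so its largest eigenvalue $\mu_1 = \|L_0\|$ exists, is strictly positive, and is attained by the variational characterization $\mu_1 = \max_{\|\phi\|=1}\langle \phi, L_0\phi\rangle$. Next I would show the top eigenfunction can be chosen strictly positive. Given any eigenfunction $\phi$ with $L_0\phi = \mu_1\phi$, the pointwise inequality $|L_0\phi| \le L_0|\phi|$ (from positivity of the kernel) yields $\mu_1\||\phi|\|^2 = \langle\phi,L_0\phi\rangle \le \langle |\phi|, L_0|\phi|\rangle \le \mu_1\||\phi|\|^2$, forcing equality throughout; hence $|\phi|$ is itself a top eigenfunction, i.e. $L_0|\phi| = \mu_1|\phi|$. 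Then $\mu_1|\phi| = L_0|\phi|$ is strictly positive a.e. by the positivity-improving property, so $|\phi| > 0$ a.e., and the equality in $|L_0\phi| = L_0|\phi|$ forces the phase of $\phi$ to be constant, so $\phi$ equals a unimodular constant times a strictly positive function.

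For simplicity, I would argue that the top eigenspace is one-dimensional. Suppose $\phi_1,\phi_2$ are two linearly independent top eigenfunctions; by the previous step each may be taken real-valued and, after normalization, strictly positive (the real and imaginary parts of any top eigenfunction are again top eigenfunctions, reducing to the real case). But two strictly positive functions cannot be orthogonal in $L^2(B_1)$, since $\langle \phi_1,\phi_2\rangle = \int_{B_1}\phi_1\phi_2\,\d\bx > 0$; within a self-adjoint eigenspace one can always choose an orthonormal basis, yielding a contradiction unless the eigenspace is one-dimensional. This establishes that $\mu_1$ is simple. Translating back through $L_0\psi = (\Omega-\omega)\psi$, the largest eigenvalue $\mu_1$ of $L_0$ corresponds to the value $\omega = \Omega - \mu_1$, which is the smallest eigenvalue $\omega$ of $\A_0^\omega$ (the eigenvalues $\omega_j$ approach $\Omega$ from below as $\mu_j \searrow 0$), and this lowest eigenvalue is simple with the strictly positive eigenfunction just constructed.

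The main obstacle I anticipate is establishing the strict positivity-improving property rigorously from the stated weak singularity and strict positivity of $A_0$, and in particular ensuring there are no measurability or support pathologies on the connected domain $B_1$. The connectedness of $B_1$ is exactly the hypothesis that prevents the kernel from decoupling $L^2(B_1)$ into invariant subspaces; I would want to verify that strict positivity of $A_0(\bx-\by)$ for all $\bx,\by \in B_1$ (not merely on a subset) holds from the explicit formulas \eqref{eq:Gn2D} and \eqref{eq:Gn3D}, so that the integral against any nonnegative nonzero $\phi$ is strictly positive everywhere on $B_1$. Once that pointwise positivity is in hand, the Perron--Frobenius chain of inequalities above is routine.
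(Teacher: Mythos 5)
Your proposal is correct and takes essentially the same approach as the paper: both reduce the problem to the largest eigenvalue of the compact, self-adjoint, positive operator $L_0$, observe that its kernel is strictly positive and hence positivity-improving, and conclude simplicity and strict positivity of the eigenfunction by a Perron--Frobenius argument. The only difference is that the paper delegates this last step to Theorem XIII.43 of \cite{ReedSimonIV}, whereas you prove it inline via the standard triangle-inequality/constant-phase argument and the non-orthogonality of two strictly positive functions.
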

\begin{proof}
	Recall that $\A_0^\omega\phi = 0$ is equivalent to $L_0\phi = (\Omega-\omega)\phi$, where $L_0$ is given by \eqref{eq:L0}. This operator is positive, compact and self-adjoint, so that $\|L_0\|$ is its largest eigenvalue. Moreover, the kernel of $L_0$ is strictly positive by \eqref{eq:Gn2D} (in $d=2$) or \eqref{eq:Gn3D} (in $d=3$), and thus it is positivity-improving in the sense that $L_0\phi$ is a positive function whenever $\phi\neq 0$ is a nonnegative function. Thus by Theorem XIII.43 of ~\cite{ReedSimonIV}, we have that this eigenvalue is simple and has a strictly positive eigenfunction $\phi_0 > 0$. In terms of the operator $\A_0^\omega$ this means that
\begin{align}
    \A_0^{\Omega-\|L_0\|}\phi_0 = 0,
\end{align}
where the lowest eigenvalue is $\omega_0 = \Omega - \|L_0\|$.
\end{proof}

We observe that $\G_\epsilon^\omega$ is a holomorphic operator-valued function of $\omega$ in the right half-plane (see \Cref{sec:GS}). The following proposition, whose proof is given in \Cref{sec:GS}, follows from the Gohberg-Sigal perturbation theory for holomorphic operators.
\begin{prop} \label{prop:pert}
	Let $d\in\{2,3\}$ and $\omega_j\in \R\setminus\{0,\Omega\}$ be a simple eigenvalue of $\A^{\omega}_{0}$. Then, for small enough $\epsilon$, there exists a simple eigenvalue $\omega = \omega(\epsilon) \in \C$ of $\G_\epsilon^\omega$, continuous as a function of $\epsilon$ and satisfying $\lim_{\epsilon \to 0}\omega(\epsilon) = \omega_j$.
\end{prop}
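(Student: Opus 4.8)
The plan is to interpret both $\A_0^\omega$ and $\G_\epsilon^\omega$ as holomorphic, operator-valued Fredholm functions of the spectral parameter $\omega$, and to deduce existence, simplicity, and continuity of the perturbed eigenvalue from the generalized Rouché (argument) theorem of Gohberg--Sigal, as recalled in \Cref{sec:GS}. First I would pass to the rescaled operator $\widetilde\G_\epsilon^\omega := S_\epsilon\G_\epsilon^\omega S_\epsilon^{-1}$ on $L^2(B_1)$. Since $S_\epsilon$ is an isomorphism, $\omega$ is a characteristic value of $\G_\epsilon^\omega$ if and only if it is one of $\widetilde\G_\epsilon^\omega$, with the same multiplicity, so it suffices to work on the fixed space $L^2(B_1)$. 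Both $\A_0^\omega$ and $\widetilde\G_\epsilon^\omega$ have the form $-(\omega-\Omega)\,\mathrm{Id}$ plus a compact integral operator with weakly singular (hence bounded) kernel, so for $\omega\neq\Omega$ they are Fredholm of index zero and depend holomorphically on $\omega$ in the right half-plane.

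Next I would pin down the multiplicity of $\omega_j$ as a characteristic value of $\A_0^\omega$. Writing $\A_0^\omega = (\Omega-\omega)\,\mathrm{Id} - L_0$ with $L_0$ given by \eqref{eq:L0}, the derivative $\partial_\omega\A_0^\omega = -\mathrm{Id}$ is invertible, and the hypothesis that $\omega_j$ is a simple eigenvalue of $\A_0^\omega$ means precisely that $\Omega-\omega_j$ is a simple eigenvalue of the self-adjoint compact operator $L_0$ with eigenfunction $\phi_j$. Then $\A_0^\omega\phi_j = (\Omega-\omega)\phi_j - L_0\phi_j = (\omega_j-\omega)\phi_j$, so $\phi_j$ is a root function with a simple zero at $\omega_j$ and the Gohberg--Sigal multiplicity of $\omega_j$ equals one. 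I would then choose a disk $V$ centered at $\omega_j$, small enough that $\overline V\subset\{\Re(\omega)>0\}\setminus\{\Omega\}$ and that $\omega_j$ is the only characteristic value of $\A_0^\omega$ in $\overline V$; this is possible because the characteristic values of $\A_0^\omega$ are in bijection with the discrete spectrum of $L_0$ and hence isolated. On the compact set $\partial V$ the invertible operator $\A_0^\omega$ satisfies $\delta := \inf_{\omega\in\partial V}\|(\A_0^\omega)^{-1}\|^{-1} > 0$.

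The key analytic input is the operator-norm convergence from the preceding proposition: since each $\A_n^\omega$ and $\Alog{n}{\omega}$ is holomorphic in $\omega$ and the series converges in $\L(L^2(B_1))$, the remainder satisfies $\sup_{\omega\in\partial V}\|\widetilde\G_\epsilon^\omega - \A_0^\omega\| = O(\epsilon\log\tfrac{1}{\epsilon}) \to 0$ as $\epsilon\to 0$ (the logarithm appearing only when $d=2$). Hence for $\epsilon$ small enough $\sup_{\partial V}\|\widetilde\G_\epsilon^\omega - \A_0^\omega\| < \delta$, so $\widetilde\G_\epsilon^\omega$ is invertible on $\partial V$ and $\|(\A_0^\omega)^{-1}(\widetilde\G_\epsilon^\omega - \A_0^\omega)\| < 1$ there. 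The generalized Rouché theorem then gives that the total multiplicity of the characteristic values of $\widetilde\G_\epsilon^\omega$ inside $V$ equals that of $\A_0^\omega$, namely one. Consequently $\widetilde\G_\epsilon^\omega$, and hence $\G_\epsilon^\omega$, has exactly one characteristic value $\omega(\epsilon)$ in $V$, and it is simple.

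Finally, since $V$ may be taken of arbitrarily small radius at the cost of shrinking $\epsilon$, uniqueness inside every such disk forces $\lim_{\epsilon\to0}\omega(\epsilon)=\omega_j$; continuity of $\omega(\epsilon)$ for $\epsilon>0$ follows either by applying the same argument on a small contour around $\omega(\epsilon_0)$ for $\epsilon$ near $\epsilon_0$, or directly from the Gohberg--Sigal formula $\omega(\epsilon) = \frac{1}{2\pi\iu}\oint_{\partial V}\omega\,\tr\bigl((\widetilde\G_\epsilon^\omega)^{-1}\partial_\omega\widetilde\G_\epsilon^\omega\bigr)\d\omega$, whose integrand depends continuously on $\epsilon$ uniformly on $\partial V$. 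I expect the main obstacle to be twofold: verifying that the convergence of the operator series is uniform in $\omega$ on $\partial V$ (so that Rouché applies with a single $\epsilon$-threshold), and correctly matching the linear-algebraic notion of a simple eigenvalue of $L_0$ with the Gohberg--Sigal multiplicity of a characteristic value of the holomorphic family. The affine dependence $\A_0^\omega = (\Omega-\omega)\,\mathrm{Id}-L_0$ with invertible derivative is precisely what makes this identification clean.
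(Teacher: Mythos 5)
Your proposal is correct and follows essentially the same route as the paper: the paper's proof (Appendix on Gohberg--Sigal theory) likewise verifies holomorphy in $\omega$, Fredholmness of index zero via the compact-plus-identity structure, and operator-norm continuity in $\epsilon$, and then applies the generalized Rouch\'e theorem on a small disk around $\omega_j$. Your treatment is if anything slightly more careful on two points the paper leaves implicit --- conjugating by $S_\epsilon$ so that all operators act on the fixed space $L^2(B_1)$, and explicitly matching the simplicity of the eigenvalue $\Omega-\omega_j$ of $L_0$ with Gohberg--Sigal multiplicity one via the affine dependence $\A_0^\omega=(\Omega-\omega)\,\mathrm{Id}-L_0$.
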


We begin with the three-dimensional case.
\begin{theorem}\label{thm:3D}
	Let $d=3$ and $\omega_j>0$ be a simple eigenvalue of $\A_0^\omega$, corresponding to the normalized eigenfunction $\psi_j$. Then, for small enough $\epsilon$, there is a simple eigenvalue $\omega(\epsilon)$ of $\G_\epsilon^\omega$ satisfying
	\begin{align}
		\Re\bigl(\omega(\epsilon) \bigr) &= \omega_j + \epsilon\left\langle \psi_j, \A_1^{\omega_j}\psi_j \right\rangle + O(\epsilon^2), \label{eq:3dRe} \\
		\Im\bigl(\omega(\epsilon) \bigr) &= -\epsilon^2\frac{\omega_j^2g^2s_0}{2\pi c^3}\left(\int_{B_1} \psi_j(\bx)\d\bx\right)^2 + O(\epsilon^3\log\epsilon), \label{eq:3dIm}
	\end{align}
	where $\langle \cdot, \cdot \rangle$ denotes the $L^2({B_1})$-inner product.
\end{theorem}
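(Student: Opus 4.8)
The plan is to realize $\omega(\epsilon)$ as a perturbation of the linear eigenvalue $\omega_j$ through a Lyapunov--Schmidt reduction of the nonlinear eigenvalue problem, feeding in the operator expansion $S_\epsilon\G_\epsilon^\omega S_\epsilon^{-1} = \A_0^\omega + \epsilon\A_1^\omega + \epsilon^2\A_2^\omega + \epsilon^2\log(\epsilon)\Alog{2}{\omega} + O(\epsilon^3\log\epsilon)$, which for $d=3$ has its first logarithmic term at $n=d-1=2$. Since $S_\epsilon$ is a similarity, $\omega$ is an eigenvalue of $\G_\epsilon^\omega$ exactly when it is one of $T_\epsilon(\omega):=S_\epsilon\G_\epsilon^\omega S_\epsilon^{-1}$, and \Cref{prop:pert} already supplies the simple branch $\omega=\omega(\epsilon)$ with $\omega(\epsilon)\to\omega_j$; the remaining task is purely to compute its expansion. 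The two structural facts I will exploit throughout are that $\A_0^\omega = (\Omega-\omega)I - L_0$ is affine in $\omega$ with $\partial_\omega\A_0^\omega=-I$, and that $\A_0^{\omega_j}$ is self-adjoint with one-dimensional kernel $\mathrm{span}\{\psi_j\}$. Hence $\A_0^{\omega_j}$ is Fredholm of index zero and the problem is non-degenerate in the spectral parameter.

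To carry out the reduction, let $P\phi=\langle\psi_j,\phi\rangle\psi_j$ (with $\|\psi_j\|=1$) and $Q=I-P$, and write the eigenvector as $\psi=\psi_j+\chi$ with $\chi\in\mathrm{ran}\,Q$. The equation $T_\epsilon(\omega)\psi=0$ splits into the complementary equation $QT_\epsilon(\omega)(\psi_j+\chi)=0$ and the scalar equation $\langle\psi_j,T_\epsilon(\omega)(\psi_j+\chi)\rangle=0$. Since $Q\A_0^{\omega_j}Q$ is invertible on $\mathrm{ran}\,Q$, the first equation determines $\chi=\chi(\omega,\epsilon)=O(\epsilon)$ by a Neumann series for $(\omega,\epsilon)$ near $(\omega_j,0)$. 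Substituting this into the scalar equation produces the characteristic function $F(\omega,\epsilon):=\langle\psi_j,T_\epsilon(\omega)(\psi_j+\chi(\omega,\epsilon))\rangle$, whose zeros near $\omega_j$ are precisely the eigenvalues. At $\epsilon=0$ one computes $F(\omega,0)=\langle\psi_j,\A_0^\omega\psi_j\rangle=\omega_j-\omega$, a simple zero with $\partial_\omega F(\omega_j,0)=-1$, so an implicit-function argument adapted to the interleaved $\epsilon^n$ and $\epsilon^n\log\epsilon$ scales yields $\omega(\epsilon)$ as a convergent generalized expansion in $\epsilon$ and $\epsilon\log\epsilon$.

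I then extract the coefficients order by order. At order $\epsilon$, using $\A_0^{\omega_j}\psi_j=0$ to discard the $\chi$-contribution, the solvability condition gives the first correction $\omega_1=\langle\psi_j,\A_1^{\omega_j}\psi_j\rangle$; since the kernel $A_1^{\omega_j/c}$ of \Cref{sec:sing} is real, $\A_1^{\omega_j}$ is real-symmetric and $\omega_1\in\R$, which is the real-part formula \eqref{eq:3dRe}. At order $\epsilon^2$ the coefficient $\omega_2$ collects $\langle\psi_j,\A_2^{\omega_j}\psi_j\rangle$ together with correction terms built from $\A_0^{\omega_j}$, $\A_1^{\omega_j}$, $\partial_\omega\A_1^{\omega_j}$, the real number $\omega_1$, and the first eigenfunction correction $\chi_1$ (which solves $\A_0^{\omega_j}\chi_1=-Q\A_1^{\omega_j}\psi_j$ and is therefore real). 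As all of these correction terms are real, one has $\Im\,\omega_2=\Im\langle\psi_j,\A_2^{\omega_j}\psi_j\rangle$. From \Cref{sec:sing}, the only non-real part of $A_2^{k}$ is the constant $\tfrac{\iu k^2}{2\pi}$ arising from the outgoing radiation term $\tfrac{k e^{\iu k|\bx|}}{2\pi|\bx|}$ of $G^k$, so $\Im\,\A_2^{\omega_j}$ has constant kernel $-\tfrac{g^2 s_0\omega_j^2}{2\pi c^3}$, giving $\Im\,\omega_2=-\tfrac{g^2 s_0\omega_j^2}{2\pi c^3}\bigl(\int_{B_1}\psi_j\bigr)^2$, i.e.\ \eqref{eq:3dIm}. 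Finally, $\Glog{2}{k}$ is real and constant, so the $\epsilon^2\log\epsilon$ term in $\omega(\epsilon)$ is real and does not affect $\Im\,\omega$ until order $\epsilon^3\log\epsilon$, matching the stated remainder.

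The main obstacle is the bookkeeping of the interleaved $\epsilon^n$ and $\epsilon^n\log\epsilon$ scales: one must justify the modified implicit-function / Gohberg--Sigal step in this non-analytic setting and, above all, verify that every contribution to $\omega_2$ other than $\langle\psi_j,\A_2^{\omega_j}\psi_j\rangle$ is real, so that the imaginary part is controlled solely by the radiation constant in $A_2^k$. This hinges on the explicit small-argument expansions of $G^k$ in \Cref{sec:sing}, which identify $A_1^k$, the real and imaginary parts of $A_2^k$, and $\Glog{2}{k}$, and which furnish the uniform remainder bounds on $B_1$ needed to control $\chi(\omega,\epsilon)$ and to make the order-by-order matching rigorous.
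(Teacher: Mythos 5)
Your proof is correct, and it rests on the same computational core as the paper's — the expansion of $G^k$ in \Cref{sec:sing}, realness of $\psi_j$, of $A_1^{k}$ and of $\Glog{2}{k}$, and the fact that $\Im A_2^{k} = k^2/(2\pi)$ is a constant — but the mechanism by which you extract the expansion is genuinely different. The paper stays inside Gohberg--Sigal theory: after invoking \Cref{prop:pert}, it applies the all-orders trace/contour formula \eqref{eq:gat} together with the Laurent expansion $(\A_0^\omega)^{-1} = -\langle\psi_j,\cdot\rangle\psi_j\,(\omega-\omega_j)^{-1} + \mathcal{R}(\omega)$ (a consequence of $\tfrac{\d}{\d\omega}\A_0^\omega = -I$ and simplicity of $\omega_j$), and reads off the residues at orders $\epsilon$ and $\epsilon^2$. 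You instead run a Lyapunov--Schmidt reduction: with $P=\langle\psi_j,\cdot\rangle\psi_j$ and $Q=I-P$, you solve the $Q$-equation by a Neumann series (legitimate since $Q\A_0^{\omega_j}Q$ is invertible on $\operatorname{ran}Q$) and apply an implicit-function argument to $F(\omega,\epsilon)=\bigl\langle\psi_j, S_\epsilon\G_\epsilon^\omega S_\epsilon^{-1}(\psi_j+\chi(\omega,\epsilon))\bigr\rangle$, whose unperturbed form is exactly $\omega_j-\omega$. Your order-$\epsilon^2$ bookkeeping also holds up: since $\langle\psi_j,\A_0^{\omega}\chi\rangle=0$ exactly (self-adjointness of $L_0$ and $\chi\perp\psi_j$), the only contributions are $\omega_1\langle\psi_j,\partial_\omega\A_1^{\omega_j}\psi_j\rangle$, $\langle\psi_j,\A_1^{\omega_j}\chi_1\rangle$ and $\langle\psi_j,\A_2^{\omega_j}\psi_j\rangle$, of which only the last is non-real, exactly as you claim. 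The trade-off: the paper's route reuses machinery already needed for \Cref{prop:pert}, yields a closed-form expansion to all orders, and extends to multiple characteristic values; your route is more elementary and self-contained, needs only continuity (not analyticity) in $\epsilon$ — so the interleaved $\epsilon^n$ and $\epsilon^n\log\epsilon$ scales cause no foundational difficulty — and makes the eigenfunction correction $\chi_1$ explicit, so the realness of every term can be checked directly. One blemish you share with the paper (it is in the theorem statement itself): the real coefficient of $\epsilon^2\log\epsilon$ coming from $\Alog{2}{\omega_j}$ dominates $\epsilon^2$, so the remainder in \eqref{eq:3dRe} should strictly read $O(\epsilon^2\log\epsilon)$; this has no effect on \eqref{eq:3dIm}.
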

\begin{proof}
	First, since $\G_\epsilon^\omega$ is holomorphic for $\omega\in U$, we know that we can find a neighbourhood $V$ of $\omega_j$ such that, for small enough $\epsilon$, there is a single, simple eigenvalue $\omega = \omega(\epsilon)$ of $\G_\epsilon^\omega$ in $V$. We then have a complete asymptotic expansion of $\omega(\epsilon)$ in terms of $\epsilon$  (see \Cref{sec:GS}) of the form
	\begin{equation}\label{eq:gat}
		\omega(\epsilon)  - \omega_j = \frac{1}{2\pi \iu}\sum_{p=1}^\infty\frac{1}{p}
		\tr\int_{\partial V}\left[\left(\A_0^w\right)^{-1}\left(\A_0^w - S_\epsilon\G_\epsilon^{w}S_\epsilon^{-1} \right)\right]^p\d w.
	\end{equation}
	We can easily compute a Laurent expansion of $\left(\A_0^\omega\right)^{-1}$. First, we observe that 
	\begin{equation}\frac{\d}{\d \omega} \A_0^\omega = -I.\end{equation}
	Since $\omega_j$ is a simple pole of $\left(\A_0^\omega\right)^{-1}$ we have, for $\omega$ in a neighbourhood of $\omega_j$, that 
	\begin{equation}\left(\A_0^\omega\right)^{-1} = -\frac{\langle \psi_j, \cdot \rangle\psi_j}{\omega-\omega_j} + \mathcal{R}(\omega),\end{equation}
	where $\mathcal{R}$ is holomorphic in a neighbourhood of $\omega_j$. Moreover, $\A_0$ maps real-valued functions to real-valued functions, so we can take $\psi_j$ and hence $\mathcal{R}(\omega_j)\psi_j$, to be real-valued. If we consider \eqref{eq:gat} to first order, we have
	\begin{align}
		\omega(\epsilon)  - \omega_j &= -\frac{\epsilon}{2\pi \iu}
		\tr\int_{\partial V}\left(\A_0^w\right)^{-1}\A_1^{w}\d w + O(\epsilon^2) \nonumber \\
		&=\epsilon\left\langle \psi_j, \A_1^{\omega_j}\psi_j \right\rangle + O(\epsilon^2). \label{eq:first}
	\end{align}
	Eq.~\eqref{eq:first} gives the leading-order approximation of $\omega(\epsilon)$ in the limit $\epsilon \to 0$.
	Since $\psi_j$ is real-valued, it is clear that $\left\langle \psi_j, \A_1^{\omega_j}\psi_j \right\rangle\in \R$, which proves \eqref{eq:3dRe}. Taking the imaginary part, we have from \eqref{eq:gat} that
	\begin{align}
		\Im\bigl(\omega(\epsilon) \bigr)  &=\Im\left( -\frac{\epsilon^2}{2\pi \iu}
		\int_{\partial V} \left(-\frac{\langle \psi_j, \A_2^{w}\psi_j \rangle}{w-\omega_j}\right)\d w \right) + O(\epsilon^3\log\epsilon) \nonumber \\
		&= \epsilon^2\Im\bigl( \langle \psi_j, \A_2^{\omega_j}\psi_j \rangle \bigr) + O(\epsilon^3\log\epsilon)\nonumber \\
		&= -\epsilon^2\frac{\omega_j^2g^2s_0}{2\pi c^3}\left(\int_{B_1} \psi_j(\bx)\d\bx\right)^2 + O(\epsilon^3\log\epsilon).
	\end{align}
\end{proof}
In two dimensions, we have a similar theorem, which can be proved analogously.
\begin{theorem}\label{thm:2D}
	Let $d=2$ and $\omega_j>0$ be a simple eigenvalue of $\A_0^\omega$, corresponding to the normalized eigenfunction $\psi_j$. Then, for small enough $\epsilon$, there is a simple eigenvalue $\omega$ of $\G_\epsilon^\omega$ satisfying
	\begin{align}
		\Re\bigl(\omega(\epsilon) \bigr) &= \omega_j + \epsilon\log(\epsilon)\frac{\omega_j g^2s_0}{2\pi c^2}\left(\int_{B_1} \psi_j(\bx)\d\bx\right)^2 + O(\epsilon), \label{eq:2dRe}\\
		\Im\bigl(\omega(\epsilon) \bigr)  &=  -\epsilon\frac{\omega_jg^2s_0}{2c^2}\left(\int_{B_1} \psi_j(\bx)\d\bx\right)^2+ O(\epsilon^2\log\epsilon), \label{eq:2dIm}
	\end{align}
	where $\langle \cdot, \cdot \rangle$ denotes the $L^2({B_1})$-inner product.
\end{theorem}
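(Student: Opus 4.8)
The plan is to follow the proof of \Cref{thm:3D} closely, the only genuinely new feature being the logarithmic terms present in the two-dimensional Green's function expansion. By \Cref{prop:pert}, for small $\epsilon$ there is a unique simple eigenvalue $\omega(\epsilon)$ of $\G_\epsilon^\omega$ in a fixed neighbourhood $V$ of $\omega_j$, and the Gohberg--Sigal trace formula \eqref{eq:gat} supplies a convergent expansion for $\omega(\epsilon) - \omega_j$. The point of departure from the three-dimensional case is that in $d=2$ the logarithmic sum in the operator expansion begins at $n=1$, so that
\begin{equation}
\A_0^w - S_\epsilon\G_\epsilon^w S_\epsilon^{-1} = -\epsilon\log(\epsilon)\,\Alog{1}{w} - \epsilon\,\A_1^w + O(\epsilon^2\log\epsilon),
\end{equation}
with the $\epsilon\log\epsilon$ term now dominating the $\epsilon$ term as $\epsilon\to0$.

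I would then reuse, unchanged, the Laurent expansion $\left(\A_0^w\right)^{-1} = -\langle\psi_j,\cdot\rangle\psi_j/(w-\omega_j) + \mathcal{R}(w)$ (valid since $\tfrac{\d}{\d\omega}\A_0^\omega = -I$ and $\omega_j$ is a simple pole), and substitute it into the $p=1$ term of \eqref{eq:gat}. Evaluating the residue at $\omega_j$ via the rank-one trace identity, exactly as in \Cref{thm:3D}, gives
\begin{equation}
\omega(\epsilon) - \omega_j = \epsilon\log(\epsilon)\left\langle\psi_j,\Alog{1}{\omega_j}\psi_j\right\rangle + \epsilon\left\langle\psi_j,\A_1^{\omega_j}\psi_j\right\rangle + \cdots.
\end{equation}
The splitting into real and imaginary parts is then dictated by the structure of the kernels. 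All $\log\epsilon$ contributions to $G^{\omega/c}$ arise from its real, non-radiating pieces (the $Y_0$ and Struve terms), not from the $J_0$ part of the radiating term, so $\Glog{1}{k}$ is real; hence $\langle\psi_j,\Alog{1}{\omega_j}\psi_j\rangle\in\R$ and, since $\psi_j$ is real and normalized, inserting the value $\Glog{1}{k}\equiv -k/(2\pi)$ from \Cref{sec:sing} produces the coefficient in \eqref{eq:2dRe}. For the imaginary part, only the $\epsilon\A_1^{\omega_j}$ term contributes at leading order: the imaginary part of the kernel $A_1^{k}$ is the $r^0$ value $k/2$ of $\tfrac{k}{2}J_0(k|\bx|)$, which yields $\Im\langle\psi_j,\A_1^{\omega_j}\psi_j\rangle = -\frac{\omega_j g^2 s_0}{2c^2}\bigl(\int_{B_1}\psi_j\,\d\bx\bigr)^2$ and hence \eqref{eq:2dIm}.

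The main obstacle is the bookkeeping of the two comparable scales $\epsilon\log\epsilon$ and $\epsilon$ together with the control of remainders, which I expect to be the only place demanding care beyond the three-dimensional argument. Concretely, for \eqref{eq:2dRe} the first unlisted correction is the $O(\epsilon)$ real contribution $\Re\langle\psi_j,\A_1^{\omega_j}\psi_j\rangle$, and all further terms ($\epsilon^2\log^2\epsilon$ from $p=2$, etc.) are $o(\epsilon)$. For \eqref{eq:2dIm} one must check that the $p=2$ trace term, although of size $\epsilon^2\log^2\epsilon$, is purely real---being assembled from the real operator $\Alog{1}{}$ and the real-preserving resolvent $(\A_0^w)^{-1}$---so that the leading imaginary error is supplied instead by the $\A_1$--$\Alog{1}{}$ cross terms and by $\A_2$, both of order $\epsilon^2\log\epsilon$, as claimed. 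The concrete computational core is thus the verification that $\Glog{1}{k}$ is real with value $-k/(2\pi)$ and the extraction of the $r^0$ coefficient $k/2$ of $\Im A_1^{k}$ from the expansions of \Cref{sec:sing}.
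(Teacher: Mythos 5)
Your proposal is correct and follows exactly the route the paper intends: the paper gives no separate proof for $d=2$, stating only that it "can be proved analogously" to \Cref{thm:3D}, and your argument is precisely that analogue — the Gohberg--Sigal expansion \eqref{eq:gat}, the rank-one Laurent expansion of $(\A_0^w)^{-1}$, and the 2D kernel data $\Glog{1}{k}=-k/(2\pi)$, $\Im A_1^k = k/2$ from \eqref{eq:Gn2D}, which reproduce the coefficients in \eqref{eq:2dRe}--\eqref{eq:2dIm}. Your additional care with the two scales $\epsilon\log\epsilon$ vs.\ $\epsilon$, and the observation that the $p=2$ term of size $\epsilon^2\log^2\epsilon$ is real (being built from real-preserving operators evaluated at the real point $\omega_j$) so it does not spoil the $O(\epsilon^2\log\epsilon)$ error in the imaginary part, correctly fills in the bookkeeping the paper leaves implicit.
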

Next we make several remarks about the theorems above.
\begin{enumerate}
    \item (Agreement with \Cref{prop:im}) In agreement with \Cref{prop:im}, we also observe that the imaginary parts in \eqref{eq:3dIm} and \eqref{eq:2dIm} are nonpositive for small enough $\epsilon$.
    \item (Relation to other work) In \cite{hoskins_2021}, the dynamics of a related model, with a finite number of discretely supported atoms, is studied. In contrast to the present case, this model has a finite number of eigenvalues. We expect the accumulation of eigenvalues to give rise to different physical behavior. 
    \item (Application to multiple resonances) The method used here can also be applied to multiple eigenvalues of $\A_0^\omega$, which might split into simple eigenvalues of $\G_\epsilon^\omega$ for nonzero $\epsilon$ \cite{ammari2004splitting}.
    \item (Existence of eigenvalues) Changing $\Omega$ corresponds to a translation of the eigenvalues of the limiting operator $\A_0^\omega$. In particular, if $\Omega < \|L_0\|$, there is at least one negative eigenvalue, corresponding to a bound state. We can use a similar analysis to compute  asymptotic expansions of these negative eigenvalues. In this case, we use the Green's function for negative $\omega$ when defining the operator $\G_\epsilon^\omega$ in \eqref{eq:A}. The limiting operator $\A_0^\omega$ in this case is identical to the resonant case in \eqref{eq:A0}. Following the same method, we can prove analogues of \Cref{thm:3D} and \Cref{thm:2D}. If $\omega_j < 0$ is a simple eigenvalue of $\A_0^\omega$ then for small $\epsilon$ there is a simple eigenvalue $\omega<0$ of $\G_\epsilon^\omega$ whose real part is given by \eqref{eq:3dRe} or \eqref{eq:2dRe} and whose imaginary part vanishes.
    \item (Agreement with \Cref{thm:boundstates}) The condition $\Omega < \|L_0\|$ agrees with \eqref{eq:Omegabound} of \Cref{thm:boundstates}. Observe that $\|L_0\| = \frac{g^2\rho_0(\epsilon)}{c}\tilde K$ for some $\tilde K \neq 0$ independent of $g,\rho_0,\Omega$ and $\epsilon$. Then, if 
	$$\frac{g^2\rho_0(\epsilon)}{\Omega c} > \tilde{K}^{-1},$$
	there is at least one bound state. We note, however, that the asymptotic analysis can only guarantee the existence of this bound state for sufficiently small $\epsilon$, whereas the analysis of \Cref{sec:bound} holds independent of the size of $\epsilon$.
    \item (Generating bound states) By decreasing $\Omega>0$, we can increase the number of negative eigenvalues of $\A_0^\omega$. Unless there are only finitely many eigenvalues $\omega_j\neq \Omega$, we can create arbitrarily many bound states of the full operator $\G_\epsilon^\omega$ by choosing $\Omega$ and $\epsilon$ small enough. This is consistent with \Cref{thm:NBS}. For small $\Omega$ the right-hand side of \eqref{eq:NBS} is unbounded.
\end{enumerate}

\begin{figure}
	\centering
	\includegraphics[width=0.6\linewidth]{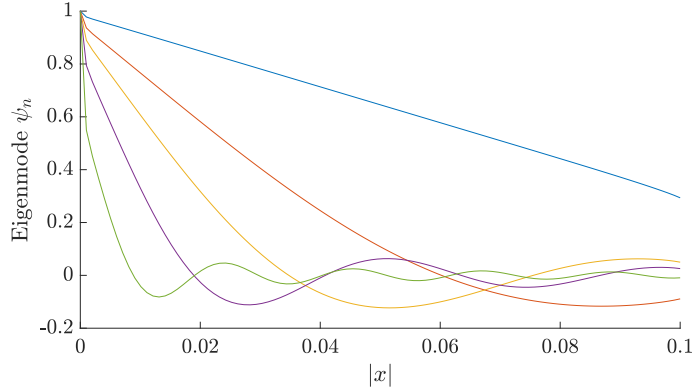}
	\caption{First five eigenmodes of $\G_\epsilon^{\omega/c}$ in the case $\epsilon = 0.1$ for the unit sphere. The results are computed by Nyström discretization of the integral equation \eqref{eq:psi}, as outlined in \Cref{sec:num}.}\label{fig:mult}
\end{figure}


\begin{figure}
	\centering
	\begin{subfigure}[t]{0.45\linewidth}
		\includegraphics[width=\linewidth]{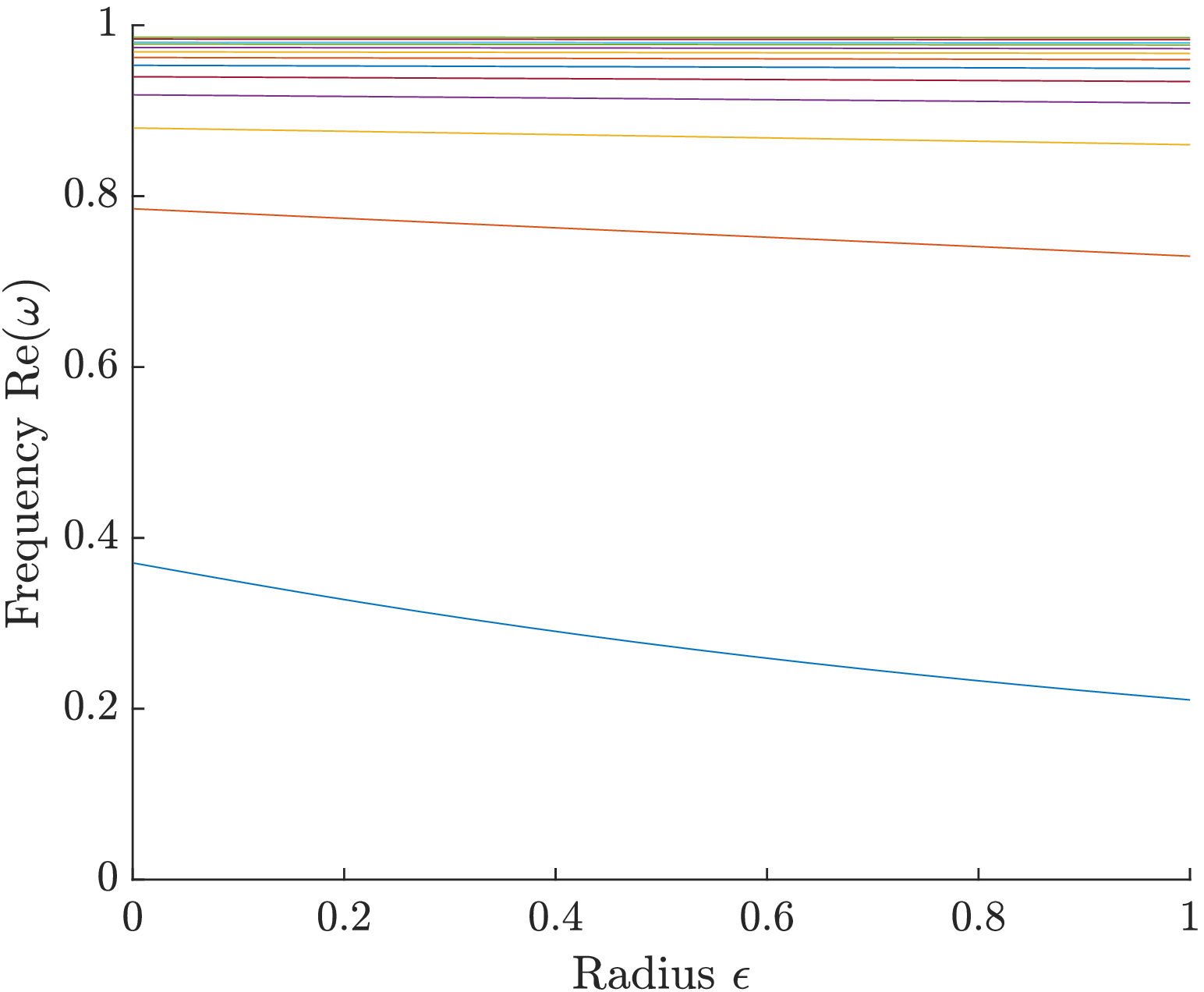}
		\caption{Real part.}
	\end{subfigure}
	\hspace{0.5cm}
	\begin{subfigure}[t]{0.45\linewidth}
		\includegraphics[width=\linewidth]{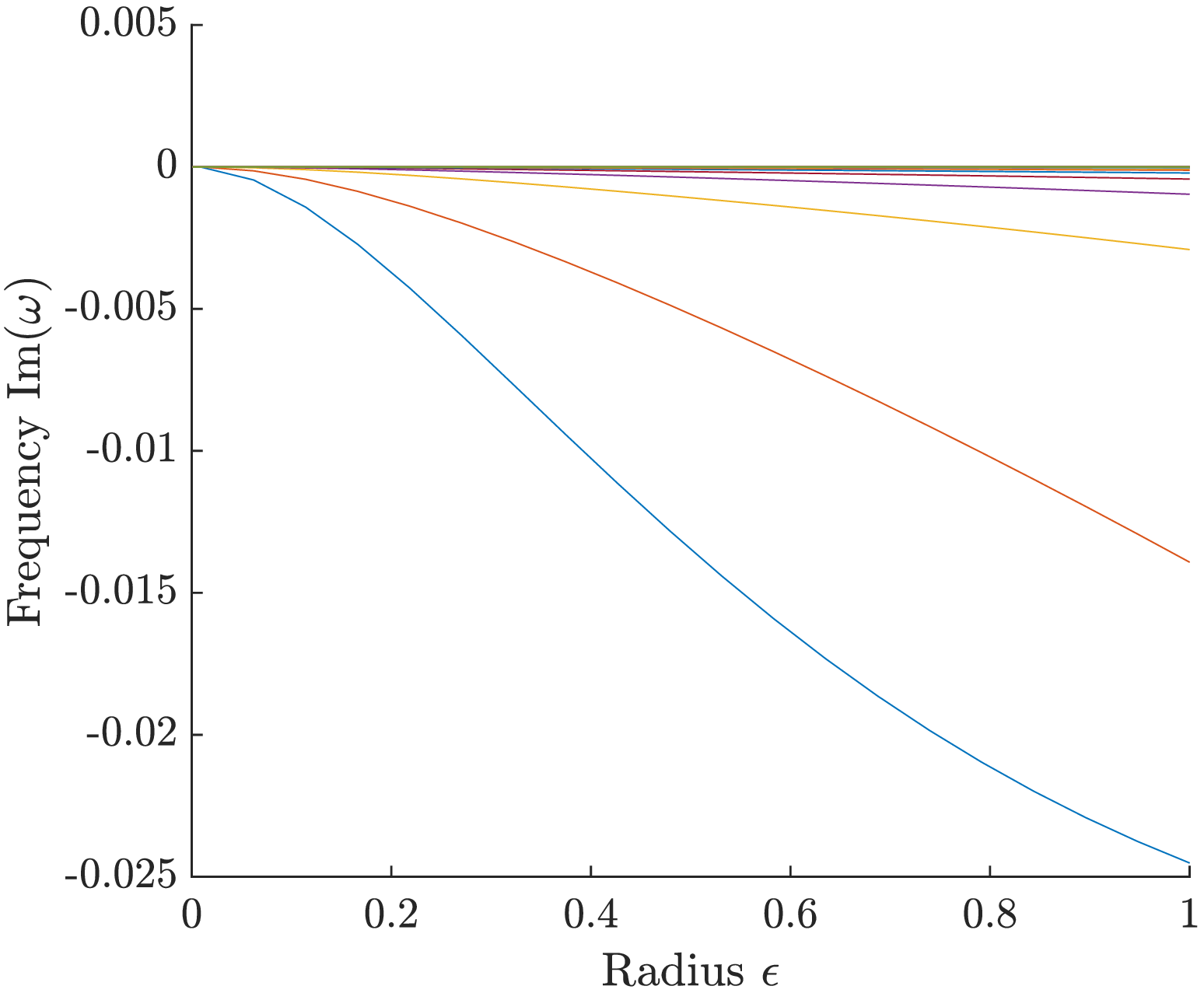}
		\caption{Imaginary part.}
	\end{subfigure}
	\caption{Computation of the resonant frequencies as function of the radius $\epsilon$ of the inclusion, in the case of spherical inclusions in three dimensions. Here $\Omega =1$ corresponds to the accumulation point of the resonances. These are computed through the Nyström method, as outlined in \Cref{sec:num}.}\label{fig:mult_freq}
\end{figure}

\subsection{Approximation of the lowest eigenvalue}\label{sec:approx}
Since $\epsilon$ is small, we can formally estimate the first eigenvalue by approximating $\psi(\bx)$ on ${B_\epsilon}$ by its value   at $\bx=0$. That is, $\psi(\bx) \approx\psi(0)$  for $\bx \in {B_\epsilon}$. For simplicity, we restrict our attention to the case of a sphere. In the current setting, we have from \eqref{eq:Gexp} that
\begin{equation}
	\int_{B_\epsilon} G^{\omega/c}(\bx,\by)\psi(\by)\d \by \approx \psi(0)\left(
	\frac{2\epsilon}{\pi} + \frac{\epsilon^2\omega}{c} +\iu \frac{2\epsilon^3\omega^2}{3c^2}\right). 
\end{equation}
If we evaluate \eqref{eq:psi} at $\bx = 0$, this yields
\begin{equation}\psi(0) \approx-\frac{g^2s_0\psi(0)}{c(\omega-\Omega)}\left(
\frac{2}{\pi} + \frac{\epsilon\omega}{c} +\iu \frac{2\epsilon^2\omega^2}{3c^2}\right).
\end{equation}
In order to have nonzero solutions, we require $\psi(0) \neq 0$. It therefore follows that
\begin{equation}\omega-\Omega \approx-\frac{g^2s_0}{c}\left(
\frac{2}{\pi} + \frac{\epsilon\omega}{c} +\iu \frac{2\epsilon^2\omega^2}{3c^2}\right).
\end{equation}
We now define $\alpha = \frac{2g^2s_0}{\pi c}$. We then have that 
\begin{align}
	\Re\bigl(\omega(\epsilon) \bigr)  &\approx  \Omega - \alpha - \frac{\epsilon(\Omega -\alpha)}{c}, \\
	\Im\bigl(\omega(\epsilon) \bigr)  &\approx  -\epsilon^2\alpha\pi\frac{(\Omega -\alpha)^2}{3c^2}.
\end{align}

\begin{figure}
	\centering
	\begin{subfigure}[t]{0.45\linewidth}
		\includegraphics[width=\linewidth]{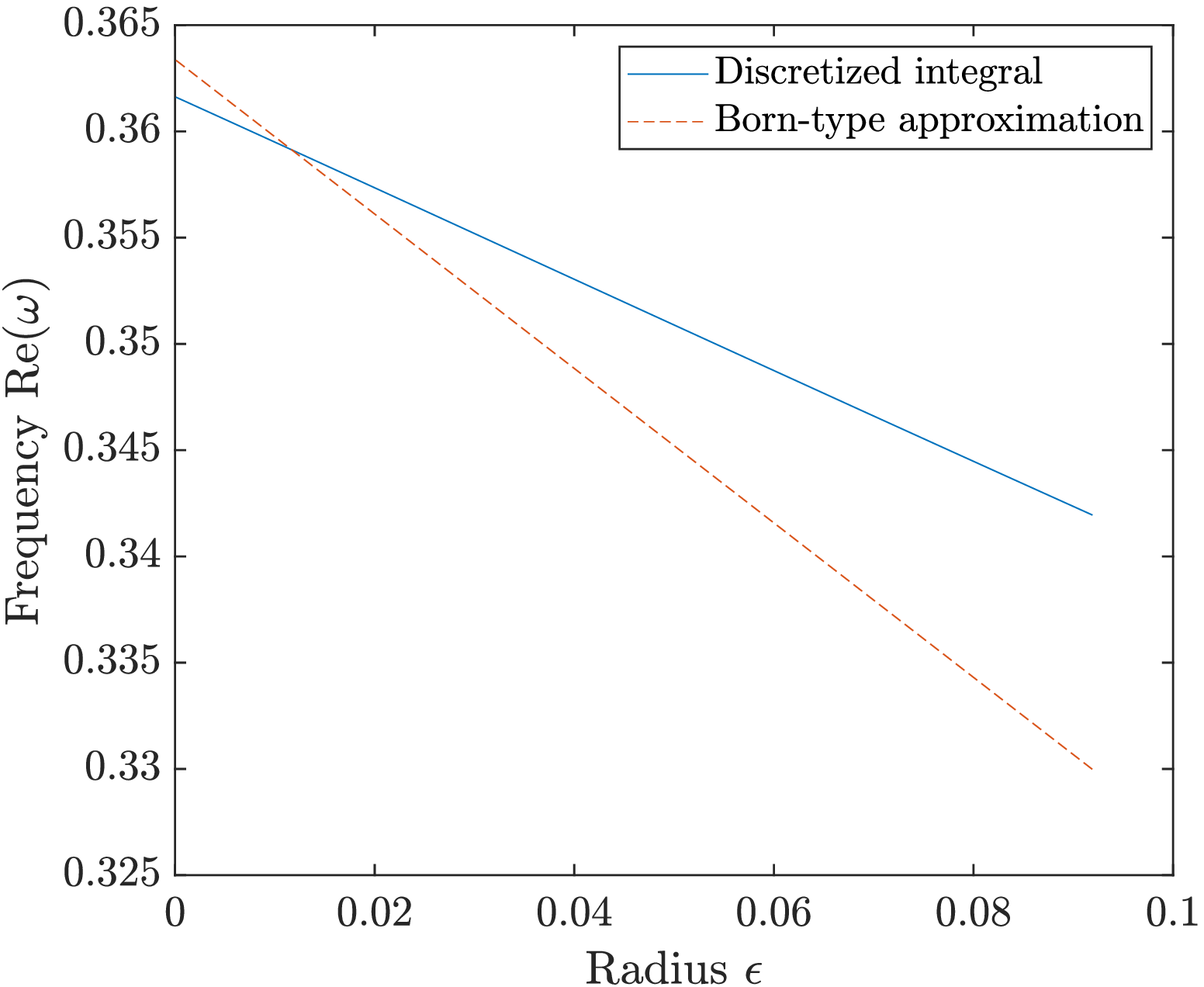}
		\caption{Real part.}
	\end{subfigure}
	\hspace{0.5cm}
	\begin{subfigure}[t]{0.45\linewidth}
		\includegraphics[width=\linewidth]{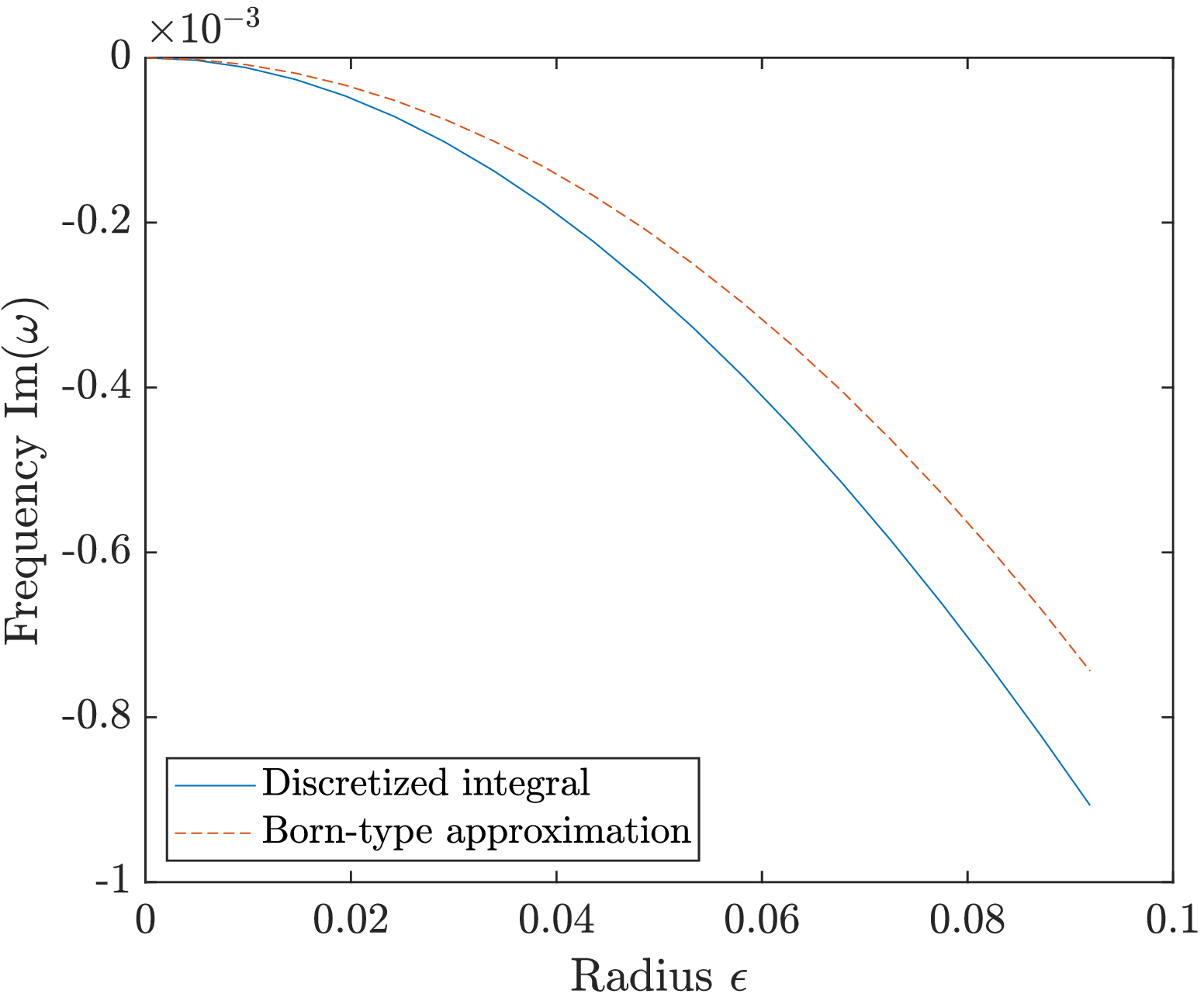}
		\caption{Imaginary part.}
	\end{subfigure}
	\caption{Comparison between the first eigenvalue computed either by discretizing the integral equation \eqref{eq:psi}  or through the approximation made in \Cref{sec:approx}.}\label{fig:born}
\end{figure}

\subsection{Resonances and bound states in one dimension}
In one dimension, we must choose a different scaling of $\rho$ in order to get resonances of order $O(1)$ as $\epsilon \to 0$. We assume that
\begin{equation}\label{eq:scale1D}
	{B_\epsilon} = \epsilon {B_1}, \qquad 	\rho_0(\epsilon) = -\frac{s_0}{\epsilon\log\epsilon}\chi_{\epsilon {B_1}},
\end{equation}
as $\epsilon \to 0$, where the domain ${B_1}\subset \R$ and $s_0>0$. We then have 
\begin{equation}S_\epsilon\G_\epsilon^\omega S_\epsilon^{-1} = \sum_{n=0}^\infty \epsilon^n\Alog{n}{\omega} + \sum_{n=0}^\infty \frac{\epsilon^{n}}{\log(\epsilon)}\A_{n}^\omega,\end{equation}
where the convergence holds in $\L\bigl(L^2({B_1})\bigr)$. Observe that the limiting operator is now given by 
\begin{equation}\Alog{0}{\omega}[\phi](x) = -(\omega-\Omega)\phi(x) -\frac{g^2s_0}{\pi c}\int_{{B_1}}\phi(y) \d y.\end{equation}
If $\omega \neq \Omega$, this is a rank-1 perturbation of the identity and has a single eigenvalue. Based on this observation, we can prove the following result. 
\begin{theorem}\label{thm:1Dlog}
	Assume that $\rho(\bx) = -\frac{s_0}{\epsilon\log\epsilon}\chi_{B_\epsilon}(\bx)$ and that $ \Omega - \frac{g^2s_0|{B_1}|}{\pi c}>0$. Then for $\epsilon \to 0$, there is an eigenvalue $\omega = \omega(\epsilon)$ of $\G_\epsilon^\omega$ satisfying
	\begin{align*}
		\Re\bigl(\omega(\epsilon) \bigr)  &= \Omega - \frac{g^2s_0|{B_1}|}{\pi c} + O\left(\frac1{\log\epsilon}\right), \\ 
		\Im\bigl(\omega(\epsilon) \bigr)  &= \frac{g^2s_0|{B_1}|}{c\log\epsilon} + O(\epsilon).
	\end{align*}
\end{theorem}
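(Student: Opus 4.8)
The plan is to run the same Gohberg--Sigal argument used for \Cref{thm:3D} and \Cref{thm:2D}, adapted to the one-dimensional scaling, where the leading operator is the rank-one object $\Alog{0}{\omega}$ and the first correction is governed by the slowly-vanishing quantity $1/\log\epsilon$ rather than by $\epsilon$. I would begin by diagonalizing the limiting operator $\Alog{0}{\omega}[\phi](x) = -(\omega-\Omega)\phi(x) - \frac{g^2 s_0}{\pi c}\int_{{B_1}}\phi$. Writing $\mathbf{1}$ for the constant function and $Q\phi = \mathbf{1}\int_{{B_1}}\phi$, this is $-(\omega-\Omega)I$ minus a multiple of the rank-one operator $Q$, whose range is the constants. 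On $\{\mathbf{1}\}^\perp$ it acts as $-(\omega-\Omega)I$, invertible for $\omega\neq\Omega$, while on $\mathrm{span}(\mathbf{1})$ it acts as multiplication by $-(\omega-\omega_0)$ with $\omega_0 := \Omega - \frac{g^2 s_0|{B_1}|}{\pi c}$. Hence $\Alog{0}{\omega}$ has a one-dimensional kernel exactly at $\omega=\omega_0$, spanned by the normalized constant $\phi_0 = |{B_1}|^{-1/2}$, and since $\frac{\d}{\d\omega}\Alog{0}{\omega} = -I$ gives $\langle\phi_0, -\phi_0\rangle = -1 \neq 0$, the value $\omega_0$ is a simple characteristic value. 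The hypothesis $\Omega - \frac{g^2 s_0|{B_1}|}{\pi c} > 0$ places $\omega_0 \in \R\setminus\{0,\Omega\}$ in the right half-plane, fixing the leading term $\Re(\omega(\epsilon)) = \omega_0 + o(1)$ and keeping us in the resonant regime.

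Next I would invoke the perturbation theory of \Cref{sec:GS}. Since the stated expansion gives $S_\epsilon\G_\epsilon^\omega S_\epsilon^{-1} = \Alog{0}{\omega} + \frac{1}{\log\epsilon}\A_0^\omega + \epsilon\Alog{1}{\omega} + O(\epsilon/\log\epsilon)$, with convergence in $\L(L^2({B_1}))$ uniform on a small circle $\partial V$ around $\omega_0$, the generalized Rouché theorem applies exactly as in \Cref{prop:pert}: for $\epsilon$ small there is a unique, simple characteristic value $\omega(\epsilon)\in V$ of $\G_\epsilon^\omega$ with $\omega(\epsilon)\to\omega_0$. I would then feed this into the trace formula \eqref{eq:gat}, with $\Alog{0}{w}$ playing the role of the unperturbed operator. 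Using the Laurent expansion $(\Alog{0}{\omega})^{-1} = -\frac{\langle\phi_0,\cdot\rangle\phi_0}{\omega-\omega_0} + \mathcal{R}(\omega)$ and that the dominant part of the perturbation is the $1/\log\epsilon$ term, the $p=1$ contribution yields
\[
\omega(\epsilon) - \omega_0 = \frac{1}{\log\epsilon}\bigl\langle \phi_0, \A_0^{\omega_0}\phi_0\bigr\rangle + o\!\left(\tfrac{1}{\log\epsilon}\right),
\]
where $\bigl\langle\phi_0,\A_0^{\omega_0}\phi_0\bigr\rangle$ reduces to a constant multiple of the double integral $\int_{{B_1}}\int_{{B_1}} A_0^{\omega_0/c}(x-y)\,\d x\,\d y$.

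Splitting into real and imaginary parts then gives the theorem. The real part of $\langle\phi_0,\A_0^{\omega_0}\phi_0\rangle$ is finite, so $\Re(\omega(\epsilon)) = \omega_0 + O(1/\log\epsilon)$, the claimed expression. For the imaginary part I would use that $\Im A_0^{\omega_0/c}$ is a nonzero constant, coming precisely from the radiative term $\iu e^{\iu k|\bx|}$ of the $d=1$ Green's function in \eqref{eq:Goutmain}; evaluating the double integral of this constant over ${B_1}\times{B_1}$ produces $\Im(\omega(\epsilon)) = \frac{g^2 s_0|{B_1}|}{c\log\epsilon} + \cdots$, with sign negative (as $\log\epsilon<0$) in agreement with \Cref{prop:im}. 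A useful structural point is that, because $\Im A_0^{\omega/c}$ is \emph{constant}, the operator $\Im\A_0^\omega$ has range equal to $\mathrm{span}(\mathbf{1})$, i.e. the eigenspace itself; this is why the imaginary corrections do not feed back at order $1/(\log\epsilon)^2$ and the imaginary remainder can be pushed down to $O(\epsilon)$.

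I expect the principal difficulty to be the bookkeeping of a \emph{double} asymptotic expansion in the two independent small quantities $\epsilon$ and $1/\log\epsilon$. One must verify that $\frac{1}{\log\epsilon}\A_0^\omega$ genuinely dominates the $\epsilon\,\Alog{1}{\omega}$ correction (valid since $\epsilon \ll 1/|\log\epsilon|$), confirm the operator-norm convergence of the two series so that \Cref{sec:GS} is applicable on $\partial V$, and carefully track how the higher ($p\geq2$) terms of \eqref{eq:gat} together with the real operators $\Alog{n}{\omega}$ contribute to the stated remainders $O(1/\log\epsilon)$ and $O(\epsilon)$. A secondary point of care is keeping the branches of the exponential-integral terms in $A_0^{k}$ straight, so that both the real part and the sign of $\Im A_0^{\omega_0/c}$ are identified correctly.
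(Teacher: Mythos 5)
Your overall strategy---diagonalize the rank-one limiting operator $\Alog{0}{\omega}$, identify $\omega_0 = \Omega - \frac{g^2s_0|B_1|}{\pi c}$ as a simple characteristic value, and extract the $1/\log\epsilon$ correction from the Gohberg--Sigal trace formula---is exactly the route the paper intends (the theorem is stated immediately after the rank-one observation, with the machinery of \Cref{sec:GS}), and your leading-order work is correct: the spectral decomposition of $\Alog{0}{\omega}$, the simplicity of $\omega_0$, the residue computation giving $\omega(\epsilon)-\omega_0 = \frac{1}{\log\epsilon}\langle\phi_0,\A_0^{\omega_0}\phi_0\rangle+\cdots$, and the identification of $\Im\langle\phi_0,\A_0^{\omega_0}\phi_0\rangle$ with $\frac{g^2s_0|B_1|}{c}$, including the sign flip forced by the scaling $\rho_0=-s_0/(\epsilon\log\epsilon)$, which you correctly flag.

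There is, however, a genuine gap in your treatment of the imaginary remainder. Your structural claim---that because $\Im A_0^{\omega/c}$ is constant, ``imaginary corrections do not feed back at order $1/(\log\epsilon)^2$''---is false: it eliminates only one of two sources of imaginary contributions at that order. The second source is the $\omega$-dependence of the kernel, $\partial_\omega A_0^{\omega/c} = -\frac{1}{\pi\omega}$, which enters the $p=2$ term of the trace formula (equivalently: in an expansion $\omega=\omega_0+\omega_1/\log\epsilon+\omega_2/(\log\epsilon)^2+\cdots$, the kernel must be evaluated at $\omega$, not at $\omega_0$). Carrying this out, with $\phi_1$ the real, mean-zero first corrector, one finds
\begin{equation*}
\omega_2 = \frac{g^2 s_0}{c|B_1|}\int_{B_1}\int_{B_1} A_0^{\omega_0/c}(x-y)\,\phi_1(y)\,\d y\,\d x \;-\; \frac{g^2 s_0 |B_1|}{\pi c\,\omega_0}\,\omega_1 .
\end{equation*}
Your cancellation argument correctly shows the first term is real, but the second term has imaginary part $-\frac{(g^2s_0|B_1|)^2}{\pi c^2\omega_0}\neq 0$, precisely because $\Im\omega_1 = \frac{g^2s_0|B_1|}{c}\neq 0$ already at first order. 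Hence
\begin{equation*}
\Im\bigl(\omega(\epsilon)\bigr) = \frac{g^2s_0|B_1|}{c\log\epsilon} - \frac{(g^2s_0|B_1|)^2}{\pi c^2\omega_0(\log\epsilon)^2} + O\!\left(\frac{1}{|\log\epsilon|^3}\right),
\end{equation*}
and since $1/(\log\epsilon)^2 \gg \epsilon$, no argument of this type can push the remainder down to $O(\epsilon)$; what the method actually delivers is $O\bigl(1/(\log\epsilon)^2\bigr)$ (which suggests the error term printed in the theorem should be read that way). The contrast with \Cref{thm:3D} is instructive: there the first correction $\epsilon\langle\psi_j,\A_1^{\omega_j}\psi_j\rangle$ is \emph{real}, so there is no feedback into the next order; in $d=1$ the first correction is already complex, and its feedback through the $\log\omega$ dependence of $A_0^{\omega/c}$ is unavoidable. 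To repair your write-up, you should either compute the $p=2$ term honestly and state the remainder as $O\bigl(1/(\log\epsilon)^2\bigr)$, or show why that term vanishes---which, by the computation above, it does not.
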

\begin{remark} \label{rmk:1dbound}
	We know from \Cref{thm:boundstates} that there will always be an eigenvalue $\omega <0$. If $\Omega -\frac{g^2s_0|{B_1}|}{\pi c} < 0$, we can follow them proof of \Cref{thm:3D} to show that there is a negative eigenvalue whose real part is given  in \Cref{thm:1Dlog} and whose imaginary part vanishes. If, however, $\Omega -\frac{g^2s_0|{B_1}|}{\pi c} > 0$, the resonance described in \Cref{thm:1Dlog} has positive real part for small $\epsilon$. In the case $d=1$, the Green's function $G^{\omega}$ diverges as $\omega\to 0$. Formally we have the limiting problem
	\begin{equation}-(\omega-\Omega)\phi(x) -\frac{g^2s_0 \log \left( \frac{-\epsilon\omega}{c}\right) }{\pi c\log \epsilon}\int_{{B_1}}\phi(y) \d y = 0,\end{equation}
	which has an eigenvalue $\omega = \omega(\epsilon)$ such that $\omega < 0$ and $\omega \to 0 $ as $\epsilon\to 0$. Asymptotically, this eigenvalue is given by
	\begin{equation} \label{eq:boundfa}
		\omega(\epsilon) = -c\epsilon^{p}, \qquad p = \frac{\Omega\pi c}{g^2s_0|{B_1}|} - 1>0.
	\end{equation}
	However, since $G^\omega(x)$ is not holomorphic for $\omega$ in a neighbourhood of $0$, the Gohberg-Sigal eigenvalue perturbation method does not apply around this point. Nevertheless, as shown in \Cref{fig:bound}, this asymptotic formula agrees well with numerical computations of the negative eigenvalue.
\end{remark}
\begin{figure}
	\centering
	\includegraphics[width=0.6\linewidth]{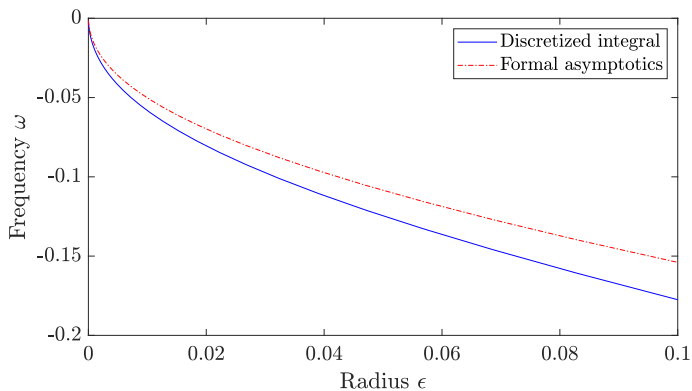}
	\caption{Numerical computation of the negative eigenvalue, corresponding to a bound state, in the case $d=1$ and $\Omega -\frac{g^2s_0|{B_1}|}{\pi c} > 0$. The value computed through the formal asymptotics \eqref{eq:boundfa} agrees well with the value computed by discretization of the integral operator.}\label{fig:bound}
\end{figure}

\subsection{Numerical computations}\label{sec:num}
The integral equation \eqref{eq:psi} can be numerically solved using the Nystrom method. We make use of spherical coordinates in three dimensions and for simplicity, we consider radially symmetric eigenfunctions. We use the trapezoidal rule to evaluate the angular integrals and Gaussian quadrature for the radial integral.  The nonlinear eigenvalue problem for the operator $\G_\epsilon^\omega: L^2(D) \to L^2(D)$,
$\G_\epsilon^\omega \psi = 0, $
is then approximated by a nonlinear eigenvalue problem for the matrix $\mathbf{G}^\omega$, which is defined by $\mathbf{G}^\omega \bx = 0$.
Numerically, we solve this equation using Muller's method for the function $f$, where
$f(\omega) = 0$ and $f(z)$ is the smallest eigenvalue of $\mathbf{G}^\omega.$

\section{Discussion}
\label{sec:conclusions}

In this paper we have considered the quantum optics of a single photon interacting with a system of two level atoms. Mathematically, this leads to a nonlinear eigenproblem for a nonlocal PDE.
 We have derived necessary and sufficient conditions for the existence of bound states, along with an upper bound on the number of such states. The upper bound on the number of bound states diverges at low atomic resonance frequencies. We have also considered the bound states and resonances for models with small high contrast atomic inclusions. In this setting, we obtain a linear eigenvalue problem with a sequence of real eigenvalues accumulating at the atomic resonance frequency. The sign of each eigenvalue dictates if it corresponds to a bound state (negative) or a resonance (positive real part). We have derived asymptotic formulas for these eigenvalues, and corroborated these formulas in numerical computations.

There are a number of questions regarding the system \eqref{eq:b4}-\eqref{eq:b5} that we intend to pursue in further work. These include:
\begin{enumerate}
    \item (Accumulation of resonances) If $\omega \neq \Omega$,  the limiting operator $\A_0^\omega$ is a compact perturbation of the identity and therefore has a sequence of eigenvalues  $\omega_j$ converging to $\Omega$ as $j\to \infty$. Moreover, since $\psi_j$ are orthogonal,  $\int_{B_1}\psi_j\d \bx$ tends to zero. For a fixed, nonzero $\epsilon$, we expect the full operator $\G_\epsilon^\omega$ to also have a sequence of eigenvalues whose real parts converge to $\Omega$ and whose imaginary parts converge to $0$. This agrees with the numerically computed values shown in \Cref{fig:mult_freq}. We observe, however, that the asymptotic expansions in \Cref{thm:3D} and \Cref{thm:2D} are formulated for a fixed $\omega_j$ and might fail to be uniform in $j$. It would be of interest to obtain bounds which are uniform in $j$ and prove that the resonances accumulate at $\Omega$.
    \item (Proof of \Cref{rmk:1dbound}) Can one provide a justification of the asymptotic formula \eqref{eq:boundfa} for the limiting problem described in \Cref{rmk:1dbound}?
\end{enumerate}

\appendix

\section{Existence and uniqueness of solutions to  \eqref{eq:b4}--\eqref{eq:b5}} 
\label{app:EandU}
We begin by making the substitution
\begin{align}
    \phi(x,t) = \sqrt{\rho(x)}a(x,t)
\end{align}
in  \eqref{eq:b4}--\eqref{eq:b5}, which becomes
\begin{align}
        \iu \partial_t \psi &= c(-\Delta)^{1/2}\psi + g\sqrt\rho \phi,\label{pde-1}
        \\
     \iu \partial_t \phi &= g\sqrt\rho \psi + \Omega \phi ,
     \label{pde-2}
\end{align}
where $\phi\equiv 0$ for $x\notin\supp{\rho}$. Consider the initial conditions
\begin{align}
    \psi(x,0) &=\psi_0(x), \\
    \phi(x,0) &=\phi_0(x),
\end{align}
where $\psi_0,\phi_0\in L^2(\R^d)$ and $\supp\phi_0\subset \supp\rho$. Applying the Duhamel formula to \eqref{pde-1} and integrating \eqref{pde-2} we obtain
\begin{align}
    \psi(x,t) &= e^{-\iu c(-\Delta)^{1/2}}\psi_0(x) -\iu g\int_0^t e^{-\iu c(-\Delta)^{1/2}(t-s)}(\sqrt\rho\phi)(x,s)\d s, \\
    \phi(x,t) &= e^{-\iu \Omega t}\phi_0(x) -\iu g\int_0^t e^{-\iu \Omega(t-s)}\sqrt{\rho}(x)\psi(x,s)\d s .
\end{align}
\begin{remark}
Note  that if $\supp \phi_0\subset\supp \rho$, then $\supp\phi \subset\supp\rho$ for all $t>0$ by the above formula. 
\end{remark}
\begin{theorem}
Let $\rho\in L^{\infty}(\R^d)$ be nonnegative with compact support. Then there is a unique solution to the above equations which belongs to $L^2(\R^d)$ for every time $t>0$.
\end{theorem}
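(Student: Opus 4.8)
The plan is to solve the coupled Duhamel system by a contraction-mapping argument in a space of $L^2$-valued continuous paths, and then to upgrade the resulting local solution to a global one using that all estimates are independent of the size of the data. Fix $T>0$ and work in the Banach space $X_T = C\bigl([0,T];L^2(\R^d)\times L^2(\R^d)\bigr)$ equipped with $\|(\psi,\phi)\|_{X_T} := \max\bigl(\sup_{t\in[0,T]}\|\psi(t)\|_{L^2},\,\sup_{t\in[0,T]}\|\phi(t)\|_{L^2}\bigr)$. Let $\mathcal T$ denote the map sending $(\psi,\phi)$ to the pair defined by the right-hand sides of the two integral equations. The first observation is that both propagators appearing there are isometries of $L^2(\R^d)$: since $(-\Delta)^{1/2}$ is self-adjoint, Stone's theorem makes $e^{-\iu c(-\Delta)^{1/2}t}$ a strongly continuous unitary group, while $e^{-\iu\Omega t}$ is multiplication by a unit-modulus scalar. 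The second observation is that $\sqrt\rho$ acts as a bounded multiplication operator with $\|\sqrt\rho\|_{L^\infty}=\|\rho\|_{L^\infty}^{1/2}<\infty$ by hypothesis \ref{Hyp3}. Combining strong continuity of the group with Bochner integrability of the integrands shows that $\mathcal T$ maps $X_T$ into itself.

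The central estimate is the contraction bound. Given two paths, subtract the corresponding images; in the first component the difference is $-\iu g\int_0^t e^{-\iu c(-\Delta)^{1/2}(t-s)}\sqrt\rho\,(\phi_1-\phi_2)(s)\,\d s$, so taking $L^2$ norms and using unitarity of the group gives the bound $g\|\rho\|_{L^\infty}^{1/2}\int_0^t\|(\phi_1-\phi_2)(s)\|_{L^2}\,\d s$, with the symmetric statement (involving $\psi_1-\psi_2$) for the second component. Hence
\[
\|\mathcal T(\psi_1,\phi_1)-\mathcal T(\psi_2,\phi_2)\|_{X_T}\le g\,\|\rho\|_{L^\infty}^{1/2}\,T\,\|(\psi_1,\phi_1)-(\psi_2,\phi_2)\|_{X_T}.
\]
Choosing $T < \bigl(g\|\rho\|_{L^\infty}^{1/2}\bigr)^{-1}$ makes $\mathcal T$ a contraction, and the Banach fixed-point theorem yields a unique solution in $X_T$.

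Crucially, this local existence time $T$ depends only on $g$ and $\|\rho\|_{L^\infty}$, not on the initial data. Therefore one may restart the construction at time $T$ with data $(\psi(T),\phi(T))\in L^2(\R^d)\times L^2(\R^d)$, extend to $[T,2T]$, and iterate, producing a unique global solution in $C\bigl([0,\infty);L^2(\R^d)\times L^2(\R^d)\bigr)$; the conservation law \eqref{eq:conserve}, which such solutions satisfy, furnishes an independent a priori $L^2$ bound confirming that no blow-up occurs, and the support property $\supp\phi(t)\subset\supp\rho$ is preserved at each step as already noted. The argument is essentially soft, and I expect the only point requiring genuine care to be the verification that $\mathcal T$ lands in $C([0,T];L^2\times L^2)$ — that is, strong continuity of the fractional propagator together with continuity in $t$ of the Duhamel integral — rather than the contraction estimate itself. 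An alternative, purely operator-theoretic route bypasses this by writing the system as $\iu\partial_t U = H U$ and observing that $H$ in \eqref{eq:H} is self-adjoint on $H^1(\R^d)\times L^2(\R^d)$ by the Kato--Rellich theorem (the off-diagonal coupling $g\sqrt\rho$ being a bounded self-adjoint perturbation of the diagonal part), so that $e^{-\iu Ht}$ is a unitary group by Stone's theorem and delivers existence, uniqueness, and norm conservation simultaneously.
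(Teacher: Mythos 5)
Your proposal is correct and takes essentially the same route as the paper's proof in Appendix \ref{app:EandU}: a Banach fixed-point argument on the Duhamel formulation, with the contraction constant $g\|\rho\|_{L^\infty}^{1/2}T$ coming from unitarity of $e^{-\iu c(-\Delta)^{1/2}t}$ and $e^{-\iu\Omega t}$ and boundedness of multiplication by $\sqrt\rho$, followed by globalization using that $T$ depends only on $g$ and $\|\rho\|_{L^\infty}$ together with conservation of the $L^2$ norm. The only cosmetic difference is that you contract on all of $X_T$ (legitimate, since the map is affine), whereas the paper restricts to a ball $B_R(0)$ sized by the initial data.
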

\begin{proof}
Let $S = C_t L_x^2(I\times \R^d)$, with $I=[0,T]$ and $T>0$. We will denote the ball of radius $R$ centered at the origin in $S$ by $B_R(0)$. We define the map $\Phi:S\times S\to S\times S$ by
\begin{align}
    \Phi(\psi,\phi) &= (\Phi_1(\psi,\phi),\Phi_2(\psi,\phi)), \\
    \Phi_1(\psi,\phi) &=e^{-\iu c(-\Delta)^{1/2}}\psi_0(x) -\iu g\int_0^t e^{-\iu c(-\Delta)^{1/2}(t-s)}(\sqrt\rho \phi)(x,s)\d s, \\
    \Phi_2(\psi,\phi) &= e^{-\iu \Omega t}\phi_0(x) -\iu g\int_0^t e^{-\iu \Omega(t-s)}\sqrt{\rho}(x)\psi(x,s)\d s .
\end{align}
Suppose that $\|(\psi_0,\phi_0)\|_{(L_x^2)^2}= R/2$. We find that
\begin{align}
    \|\Phi_1(\psi,\phi)\|_{S} &\leq \|\psi_0\|_{L_x^2} +g\|\sqrt\rho\|_{L^\infty}T\|\phi\|_{S}, \\
        \|\Phi_2(\psi,\phi)\|_{S} &\leq \|\phi_0\|_{L_x^2} +g\|\sqrt\rho\|_{L^\infty}T\|\psi\|_{S} .
\end{align}
Hence
\begin{align}
    \|\Phi(\psi,\phi)\|_{S^2} \leq \|(\psi_0,\phi_0)\|_{(L_x^2)^2} +gT\|\sqrt\rho\|_{L^\infty}\|(\psi,\phi)\|_{S^2}.
\end{align}
Next we define $T$ as
\begin{align}
     T = \dfrac{1}{2g\|\sqrt\rho\|_{L^\infty}}.
\end{align}
Then since $\|(\psi_0,\phi_0)\|_{(L_x^2)^2}= R/2$, we have
\begin{align}
    \Phi: B_{R}(0)\to B_{R}(0)
\end{align}
and thus
\begin{align}
    \|\Phi(\psi,\phi)\|_{S^2} < \frac{R}{2} + \frac{R}{2} = R.
\end{align}
Moreover, $\Phi$ is a contraction on $B_{R}(0)$. To see this, suppose $(\psi,\phi),(\psi',\phi')\in S\times S$. 
It follows that
\begin{align}
    \|\Phi(\psi,\phi)-\Phi(\psi',\phi')\|_{S^2} & = g\left\|\int_0^te^{-\iu c(-\Delta)^{1/2}(t-s)}(\sqrt\rho (\phi-\phi'))(x,s)\d s\right\|_{S}\\
    &+g\left\|\int_0^t e^{-\iu \Omega(t-s)}\sqrt{\rho}(x)(\psi(x,s)-\psi'(x,s))\d s \right\|_{S}\\
    &\leq gT\|\sqrt\rho\|_{L^\infty}\left(\|\phi-\phi'\|_S+\|\psi-\psi'\|_{S}\right)\\
    &\leq \frac{1}{2}\|(\psi,\phi)-(\psi',\phi')\|_{S\times S}.
\end{align}
We conclude that there is a unique solution to the system \eqref{eq:b4}--\eqref{eq:b5}. Note that since the parameter $T$ is independent of the norms of the initial data and solution, and the mass 
\begin{align}
    M(t) &= \|(\psi(t),\phi(t))\|_{L^2\oplus L^2}\\
    &= \|e^{-\iu Ht}(\psi_0,\phi_0)\|_{L^2\oplus L^2}\\
    &= \|(\psi_0,\phi_0)\|_{L^2\oplus L^2},
\end{align}
is conserved, the solutions can be extended globally in time. 
\end{proof}

\section{Proof of Theorem  \ref{thm:NBS}}\label{app:WeylsLaw}
\begin{proof}
The idea of the proof is to take advantage of the fact that  for any $F:\R^+\to \R^+$ such that $F(x)\geq 1$ whenever $x\geq 1$, we have the following bound on the number of bound states with frequency strictly less than $\omega \leq 0$, $N_{\omega}(\rho)$:
\begin{align}
    N_{\omega}(\rho)\leq\tr(F(K_{\omega}[\rho])).
\end{align}

    We begin by expressing the kernel of the operator $e^{-t(c(-\Delta)^{1/2}-\frac{\lambda g^2}{\omega-\Omega}\rho)}$ through the generalized Feynman-Kac formula
    \begin{align}
        e^{-t(c(-\Delta)^{1/2}-\frac{\lambda g^2}{\omega-\Omega}\rho)}(\bx,\by) = \int d\mu_{\bx,\by,t}(\omega) e^{-\frac{\lambda g^2}{\vert\omega-\Omega\vert}\int_0^t \rho(u(s))\d s }.
    \end{align}
Here $d\mu_{x,y,t}$ is the measure generated by the semigroup  $e^{-t(c(-\Delta)^{1/2})}$ with $\omega$ paths pinned to $x$ at time $0$ and $y$ at time $t$. Now for $\omega\leq 0$, multiply the above by $e^{\omega t}\rho(\bx)^{1/2}\rho(\by)^{1/2}$ and integrate from $t=0$ to $t=\infty$ to obtain
\begin{align}
    A &:= \left\{\rho^{1/2}\left(c(-\Delta)^{1/2}-\frac{\lambda g^2}{\omega-\Omega}\rho -\omega\right)^{-1}\rho^{1/2}\right\}(\bx,\by)\\ &= \rho(\bx)^{1/2}\rho(\by)^{1/2}\int_0^\infty \d t e^{\omega t} \int d\mu_{\bx,\by,t}(\omega) e^{-\frac{\lambda g^2}{\vert\omega-\Omega\vert}\int_0^t \rho(u(s))\d s }.
\end{align}
Next we make use of the identity
\begin{align}
    (c(-\Delta)^{1/2}-\omega)^{-1} &= \left(c(-\Delta)^{1/2}-\frac{\lambda g^2}{\omega-\Omega}\rho -\omega\right)^{-1}\\ &+ \frac{\lambda g^2}{\vert \omega-\Omega\vert}\left(c(-\Delta)^{1/2}-\frac{\lambda g^2}{\omega-\Omega}\rho -\omega\right)^{-1} \rho \left(c(-\Delta)^{1/2} -\omega\right)^{-1} .
\end{align}
Multiplying this expression on the left and right by $\rho^{1/2}$ we obtain 
\begin{align}
    \frac{\vert\omega-\Omega\vert}{g^2}K_{\omega}[\rho] = A + \lambda AK_{\omega}[\rho],
\end{align}
or equivalently
\begin{align}
    A = \frac{\vert\omega-\Omega\vert}{g^2}K_{\omega}[\rho](1+\lambda K_{\omega}[\rho])^{-1}.
\end{align}
Defining $h(x)=e^{-\lambda x}$ and $F(x) = x(1+\lambda x)^{-1}$, we make use of the result
\begin{align}
    F(x) = x\int_0^{\infty} e^{-y}h(xy) \d y,
\end{align}
to obtain
\begin{align} \label{eq:kernel}
    F(K_{\omega}[\rho])(\bx,\by) = \frac{g^2}{\vert\omega-\Omega\vert}\rho(\bx)^{1/2}\rho(\by)^{1/2}\int_0^\infty \d t e^{\omega t} \int d\mu_{\bx,\by,t}(\omega) h\left(\frac{g^2}{\vert\omega-\Omega\vert}\int_0^t \rho(u(s))\d s\right).
\end{align}
Taking the trace of (\ref{eq:kernel}) we obtain\
\begin{align}
    \tr(F(K_{\omega}[\rho])) &= \frac{g^2}{\vert\omega-\Omega\vert}\int \d \bx \rho(\bx)\int_0^\infty \d t e^{\omega t} \int d\mu_{\bx,\bx,t}(\omega) h\left(\frac{g^2}{\vert\omega-\Omega\vert}\int_0^t \rho(u(s))\d s\right)\\
     &= \int \d \bx \int_0^\infty \d t e^{-1} e^{\omega t} \int d\mu_{\bx,\bx,t}(\omega) f\left(\frac{g^2}{\vert\omega-\Omega\vert}\int_0^t \rho(u(s))\d s\right), \label{eq:trace}
\end{align}
where $f(x) = xh(x)$. Note that $F$ is given in terms of $f$ as
\begin{align}
    F(x) = \int_0^{\infty} e^{-y} y^{-1}f(xy) \d y.
\end{align}
This expression is linear in $f$ and so it can be extended to a wider range of functions. Note that (\ref{eq:trace}) still holds as long as $f$ is a nonnegative, lower semicontinuous function on $[0,\infty)$ such that
\begin{enumerate}
    \item $f(0) = 0$, 
    \item For some $p<\infty$, we have $x^p f(x)\to 0$ as $x\to\infty$.
\end{enumerate}
If in addition $f$ is convex, then by Jensen's inequality we find that
\begin{align}
    \tr(F(K_{\omega}[\rho])) &\leq \int \d \bx \int_0^\infty \d t \ t^{-1} e^{\omega t} \int d\mu_{\bx,\bx,t}(\omega) \frac{1}{t}\int_0^tf\left(\frac{g^2}{\vert\omega-\Omega\vert}t\rho(u(s))\right)\d s\\
    &=\int \d \bx \int_0^\infty \d t \ t^{-1} e^{\omega t} G(\bx,\bx,t)f\left(\frac{g^2}{\vert\omega-\Omega\vert}t\rho(\bx)\right).
\end{align}
Here $G(\bx,\by,t) = e^{-t(-\Delta)^{1/2}}(\bx,\by)$ is the fundamental solution to the fractional heat equation, which is given by
\begin{align}
    e^{-t(-\Delta)^{1/2}}(\bx,\by) = \Gamma\left(\frac{d+1}{2}\right)\frac{1}{\pi^{(d+1)/2}}\frac{t}{(t^2+\vert \bx-\by\vert^2)^{(d+1)/2}},
\end{align}
and satisfies the estimate
\begin{align}
    e^{-t(-\Delta)^{1/2}}(\bx,\by) \leq \frac{C_d}{t^d} ,
\end{align}
for constant $C_d$.
We now define $f(x)$ as
\begin{align}
    f(x) = \begin{cases} 0 & x \leq a , \\
    b(x-a) & x \geq a .
    \end{cases}
\end{align}
We then have
\begin{align}
    N(\rho) \leq \tr(F(K_0)) &\leq \frac{C_d}{c^d}\int\d \bx \int_{t_*}^{\infty}\d t t^{-1}t^{-d}b\left(\frac{g^2}{\Omega}t\rho(\bx)-a\right)\\
    &=\frac{g^{2d}C_d}{\Omega^d c^d}\int \d \bx \rho(\bx)^d,
\end{align}
where $t^*$ is given by 
\begin{align}
    t_{*} = \frac{\Omega}{g^2\rho(\bx)}a.
\end{align}
We can expand the results of Theorem (\ref{thm:NBS}) to the case $\rho\in L^{d}$ by considering a sequence of densities $\{\rho_m\}\subset C_0^{\infty}$ converging to $\rho\in L^{d}$. 
\end{proof}

\section{Computation and properties of the Green's function} \label{app:G}
The Greens's function $G^k(\bx)$  satisfies
\begin{equation}\left((-\Delta)^{1/2} - k \right)G^k(\bx) =  \delta(\bx),\end{equation}
and can be represented as the Fourier integral
\begin{equation}\label{eq:Gint}
	G^k(\bx) = \frac{1}{(2\pi)^d}\int_{\R^d} \frac{e^{\iu \bk\cdot \bx}}{|\bk|-k}\d \bk.
\end{equation}
\subsection{Decay for $k<0$.}
We have the following properties and decay estimate for $G^{k}(\bx)$ when $k<0$.
\begin{lemma}\label{lemma:Greensdecay}
    If $\bx\in\R^d$ and $k < 0$ then $G^k(\bx)$ is real, strictly positive, and
    \begin{align}
        G^k(\bx) \leq \frac{c_d}{k^2\vert\bx\vert^{d+1}}\, .
    \end{align}
    Moreover, $G^k\in L^1(\R^d)$.
\end{lemma}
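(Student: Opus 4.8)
The plan is to represent $G^k$ by subordination to the fractional heat semigroup, a single device that yields reality, strict positivity, the pointwise decay bound, and integrability all at once. Writing $k=-|k|<0$, the denominator in \eqref{eq:Gint} becomes $|\bk|-k=|\bk|+|k|$, which is strictly positive and bounded below by $|k|>0$. Since the integrand depends on $\bk$ only through $|\bk|$ it is even, so replacing $e^{\iu\bk\cdot\bx}$ by $\cos(\bk\cdot\bx)$ (the odd part integrating to zero) shows that $G^k$ is real.

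First I would insert the elementary identity $\tfrac{1}{|\bk|+|k|}=\int_0^\infty e^{-t(|\bk|+|k|)}\,\d t$ into \eqref{eq:Gint} and interchange the $t$- and $\bk$-integrations. The inner $\bk$-integral is exactly the kernel of the fractional heat semigroup $P_t(\bx):=e^{-t(-\Delta)^{1/2}}(\bx)$, whose explicit Poisson-kernel form
\[
P_t(\bx)=\Gamma\!\left(\tfrac{d+1}{2}\right)\frac{1}{\pi^{(d+1)/2}}\frac{t}{(t^2+|\bx|^2)^{(d+1)/2}}
\]
is recorded in Appendix \ref{app:WeylsLaw}. This produces the subordination formula $G^k(\bx)=\int_0^\infty e^{-t|k|}P_t(\bx)\,\d t$. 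Because $e^{-t|k|}>0$ and $P_t(\bx)>0$ for every $t>0$, strict positivity of $G^k$ is immediate.

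The decay estimate then follows from a one-line bound. Using $(t^2+|\bx|^2)^{(d+1)/2}\ge|\bx|^{d+1}$,
\[
G^k(\bx)\le\frac{\Gamma\!\left(\tfrac{d+1}{2}\right)}{\pi^{(d+1)/2}|\bx|^{d+1}}\int_0^\infty t\,e^{-t|k|}\,\d t=\frac{c_d}{k^2|\bx|^{d+1}},
\]
where $\int_0^\infty t\,e^{-t|k|}\,\d t=|k|^{-2}=k^{-2}$ fixes $c_d=\Gamma(\tfrac{d+1}{2})/\pi^{(d+1)/2}$. For integrability I would again use the subordination formula: integrating in $\bx$ and applying Tonelli together with the fact that each $P_t$ is a probability density, $\int_{\R^d}P_t\,\d\bx=1$, gives $\|G^k\|_{L^1(\R^d)}=\int_0^\infty e^{-t|k|}\,\d t=|k|^{-1}<\infty$.

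The only genuine subtlety is justifying the passage from the Fourier integral \eqref{eq:Gint} to the subordination formula, since $1/(|\bk|+|k|)$ is not integrable at infinity and \eqref{eq:Gint} converges only in the tempered-distribution sense. The hard part is to carry out this identification cleanly; I would do it at the level of Fourier transforms rather than attempting to manipulate the oscillatory integral directly. The interchange of integrals is legitimate by Tonelli because $P_t\ge0$, and one checks the standard identity $\widehat{P_t}(\bk)=e^{-t|\bk|}$, so that the Fourier transform of $\int_0^\infty e^{-t|k|}P_t\,\d t$ equals $\int_0^\infty e^{-t(|k|+|\bk|)}\,\d t=1/(|\bk|+|k|)=\widehat{G^k}(\bk)$, identifying the subordination formula with $G^k$. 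Once this Fourier-transform step is in place, every remaining claim is elementary.
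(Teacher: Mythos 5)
Your proof is correct and takes essentially the same route as the paper: the subordination formula you derive, $G^k(\bx)=\int_0^\infty e^{-t|k|}\,\Gamma\bigl(\tfrac{d+1}{2}\bigr)\pi^{-(d+1)/2}\,t\,(t^2+|\bx|^2)^{-(d+1)/2}\,\d t$, is exactly the paper's representation \eqref{eq:realspacerep}, from which the paper likewise reads off reality, strict positivity, the pointwise bound, and $L^1$ membership. Your Fourier-transform justification of this representation simply makes explicit a step the paper leaves implicit.
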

\begin{proof}
These facts follow immediately from the representation
\begin{align}
    G^k(\bx) &=c_d\int_0^{\infty} e^{kt}\frac{t}{(t^2+\vert\bx\vert^2)^{(d+1)/2}}dt\, , \label{eq:realspacerep}
 \end{align}
with $c_n$ given explicitly by
\begin{align}
    c_d = \frac{\Gamma(\frac{d+1}{2})}{\pi^{\frac{d+1}{2}}}\, .
\end{align}
\end{proof}
\subsection{Computation of the Green's function}
We now evaluate \eqref{eq:Gint} for $d\in \{1,2,3\}$. We begin by observing that 
\begin{equation} \frac{1}{|\bk|-k} - \frac{2k}{|\bk|^2 - k^2} = \frac{1}{|\bk| + k}.\end{equation}
If we let $G^k_\mathrm{helm}$ denote the Helmholtz Green's function, we therefore have the identity
\begin{equation} \label{eq:Gdecomp}
	G^k(\bx) = G^{-k}(\bx) + 2kG^k_\mathrm{helm}(\bx).
\end{equation}
Since there are well-known formulas for $G^k_\mathrm{helm}$, we only need to compute $G^{-k}$.

\subsubsection{One dimension} \label{sec:G1D}
We denote the variable of integration by $\kappa$ in this case. We assume that $\Re(\kappa) > 0$. Then we have
\begin{equation}\frac{1}{2\pi}\int_{\R} \frac{e^{\iu \kappa x}}{|\kappa|+k}\d \kappa = \frac{1}{2\pi}\int_{0}^\infty \frac{e^{\iu \kappa x}}{\kappa+k}\d \kappa + \frac{1}{2\pi}\int_{0}^\infty \frac{e^{-\iu \kappa x}}{\kappa+k}\d \kappa.\end{equation}
Since $G^k(x) = G^k(-x)$, we assume that $x>0$. We can apply Cauchy's integral theorem and Jordan's lemma to deform the contour of integration to the positive and negative imaginary axis, respectively. We then have
\begin{align*}
	G^{-k}(x) &= \frac{1}{2\pi}\int_{0}^\infty \frac{e^{- \kappa}}{\kappa -\iu k x}\d \kappa + \frac{1}{2\pi}\int_{0}^\infty \frac{e^{-\kappa}}{\kappa+\iu k x}\d \kappa \\
	&=\frac{e^{\iu k x }}{2\pi}E_1(\iu kx) + \frac{e^{-\iu k  x }}{2\pi}E_1(-\iu kx).
\end{align*}
Here, $E_1$ is the principal value of the exponential integral, defined as
$$E_1(z) = \int_{z}^\infty\frac{e^{-t}}{t}\d t, \quad z \in \C \setminus \R^-.$$
We have that $E_1(z) + \log(z)$ is holomorphic at $z=0$ and admits the following power series around $z=0$ \cite{abramowitz1964handbook}
\begin{equation}\label{eq:E1}
E_1(z) = -\log z - \gamma - \sum_{n=1}^\infty\frac{(-1)^nz^n}{n!n},
\end{equation}
where  $\gamma$ is the Euler constant and $\log$ denotes the principal branch of the logarithm. Here, $E_1(z)$ has a branch cut along the negative real axis. For  $|z|\to \infty$, we have the asymptotic expansion \cite{abramowitz1964handbook}
\begin{equation}\label{eq:E1asymp}
	E_1(z) = \frac{e^{-z}}{z}\left(1+ \frac{1}{z} + O(|z|^{-2})\right),
\end{equation}
valid uniformly for $z$ away from the negative real axis: $|\arg(z)| \leq \pi - \delta$  for some $\delta >0$.

In one dimension, the Helmholtz Green's function is given by
\begin{equation}G_{\mathrm{helm}}^k(x) = \pm\frac{\iu}{2k}e^{\pm\iu k |x|},\end{equation}
where the sign coincides with the sign of $\Im(k)$. All together, we have from \eqref{eq:Gdecomp} that

\begin{equation}\label{eq:G1}
	G^k(x) = \begin{cases} \ds 
	\frac{1}{2\pi}\left(e^{\iu k |x| }E_1(\iu k|x|) + e^{-\iu k |x| }E_1(-\iu k|x|) \right) + \iu e^{\iu k|x|},\ &\Re(k) > 0, \Im(k)>0, \\[0.5em]
	\ds 
	\frac{1}{2\pi}\left(e^{\iu k |x| }E_1(\iu k|x|) + e^{-\iu k |x| }E_1(-\iu k|x|) \right) - \iu e^{-\iu k|x|},\ &\Re(k) > 0, \Im(k)<0, \\[0.5em]
	\ds \frac{1}{2\pi}\left(e^{\iu k |x| }E_1(\iu k|x|) + e^{-\iu k |x| }E_1(-\iu k|x|) \right),\quad &\Re(k) < 0, \\[0.5em]
	\ds-\frac{1}{\pi} \left(\log(|x|) + \gamma \right) & k = 0,
\end{cases}
\end{equation}
where the case $k=0$ follows from the Fourier transform of $1/|\kappa|$. For purely imaginary $k$, $G^k(x)$ is continuous for $k$ around the positive or negative imaginary axis and can be evaluated as the limit from either side.


\subsubsection{Two dimensions} \label{sec:G2D}
We begin by observing that 
\begin{equation}G^0(\bx) = \frac{1}{(2\pi)^2}\int_{\R^2} \frac{e^{\iu \bk\cdot \bx}}{|\bk|}\d \bk = \frac{1}{2\pi|\bx|}.\end{equation}
Since $$\frac{1}{|\bk|+k} = \frac{-k}{|\bk|(|\bk|+k)} + \frac{1}{|\bk|} ,$$ we have for $\Re(k) >0$ that
\begin{equation}G^{-k}(\bx) = \frac{1}{2\pi|\bx|}-\frac{k}{(2\pi)^2}\int_{\R^2}\frac{e^{\iu \bk\cdot \bx}}{|\bk|(|\bk|+k)}\d \bk.\end{equation}
Changing to polar coordinates we find that
\begin{equation}G^{-k}(\bx) = \frac{1}{2\pi|\bx|}-\frac{k}{(2\pi)^2}\int_{\R^2}\frac{1}{r+k}\int_{0}^{2\pi}e^{\iu r |\bx| \cos\theta} \d \theta \d r .\end{equation}
We now make use of the identities \cite{watson1995treatise}
\begin{equation}J_0(z) = \frac{1}{2\pi}\int_0^{2\pi}e^{\iu z \cos\theta}\d \theta, \qquad \mathbf{K}_0(ak) = \frac{2}{\pi}\int_0^\infty\frac{J_0(ar)}{r+k}\d r,\end{equation}
where $J_0$ is the Bessel function of the first kind of order zero and  $\mathbf{K}_0$ is the Struve function of the second kind of order zero.
We find that 
\begin{equation}\label{eq:G2}
	G^k(\bx) = \begin{cases} \ds 
	\frac{1}{2\pi |\bx|} - \frac{ k}{4}\mathbf{K}_0(k|\bx|) + \frac{\iu k}{2}H_0^{(1)}(k|\bx|),\quad &\Re(k) > 0,\Im(k)>0, \\[0.7em]\ds 
	\frac{1}{2\pi |\bx|} - \frac{ k}{4}\mathbf{K}_0(k|\bx|) - \frac{\iu k}{2}H_0^{(2)}(k|\bx|),\quad &\Re(k) > 0,\Im(k)<0, \\[0.7em]
	\ds\frac{1}{2\pi |\bx|} + \frac{ k}{4}\mathbf{K}_0(-k|\bx|) ,\quad &\Re(k) < 0, \\[0.7em]
	\ds	\frac{1}{2\pi|\bx|}, & k = 0 ,
\end{cases}\end{equation}
where $H_0^{(1)}$ and $H_0^{(2)}$ are the Hankel functions of the first kind and second kind of order zero. 

\subsubsection{Three dimensions}
As in \Cref{sec:G2D}, we have
\begin{equation}G^{-k}(\bx) = \frac{1}{2\pi^2|\bx|^2}-\frac{k}{(2\pi)^3}\int_{\R^3}\frac{e^{\iu \bk\cdot \bx}}{|\bk|(|\bk|+k)}\d \bk.\end{equation}
Changing to spherical coordinates, we find that
\begin{equation}G^{-k}(\bx) = \frac{1}{2\pi^2|\bx|^2}+\frac{\iu k}{4\pi^2}\int_0^\infty\frac{e^{\iu r|\bx|} - e^{-\iu r|\bx|}}{r+k}\d r.\end{equation}
Following the same steps as in \Cref{sec:G1D}, we obtain
\begin{equation}G^{-k}(\bx) = \frac{1}{2\pi^2|\bx|^2}-\frac{\iu k}{4\pi^2}\left(e^{\iu k |\bx| }E_1(\iu k|\bx|) - e^{-\iu k |\bx| }E_1(-\iu k|\bx|)\right).\end{equation}
Since 
\begin{equation}G_\mathrm{helm}^k(\bx) = \frac{e^{\pm\iu k |\bx|}}{4\pi|\bx|},\end{equation}
where the sign is given by the sign of $\Im(k)$, we have
\begin{equation}\label{eq:G3}
	G^k(\bx) = \begin{cases}  
\frac{1}{2\pi^2|\bx|^2}-\frac{\iu k}{4\pi^2|\bx|}\left(e^{\iu k |\bx| }E_1(\iu k|\bx|) - e^{-\iu k |\bx| }E_1(-\iu k|\bx|)\right) +  \frac{ke^{\iu k |\bx|}}{2\pi|\bx|},\ &\Re(k) > 0,\Im(k)>0, \\[0.7em]
 
\frac{1}{2\pi^2|\bx|^2}-\frac{\iu k}{4\pi^2|\bx|}\left(e^{\iu k |\bx| }E_1(\iu k|\bx|) - e^{-\iu k |\bx| }E_1(-\iu k|\bx|)\right) +  \frac{ke^{-\iu k |\bx|}}{2\pi|\bx|},\ &\Re(k) > 0,\Im(k) <0, \\[0.7em]
\ds	 \frac{1}{2\pi^2|\bx|^2}-\frac{\iu k}{4\pi^2|\bx|}\left(e^{\iu k |\bx| }E_1(\iu k|\bx|) - e^{-\iu k |\bx| }E_1(-\iu k|\bx|)\right),\quad &\Re(k) < 0, \\[0.7em]
\ds	\frac{1}{2\pi^2|\bx|^2} & k = 0.
\end{cases}\end{equation}

\subsection{Outgoing Green's function}\label{sec:rad}
For  real $k>0$, we can define two distinct choices  $G^k_\pm$ of the Green's function by taking the limit $\Im(k) \to 0^\pm$ of the corresponding formula \eqref{eq:G1}, \eqref{eq:G2} and \eqref{eq:G3}. When $k>0$, it is straight-forward to check that $G^k_+ = \left(G^k_-\right)^*$.
The two Green's functions $G^k_+$ and $G^k_-$ satisfy the outgoing and incoming Sommerfeld radiation conditions
\begin{equation}\lim_{|\bx|\to \infty} |\bx|^{\frac{d-1}{2}}\left(\frac{\p G_\pm^k}{\p |\bx|} \mp \iu k G^k_\pm \right) = 0.\end{equation}
We can analytically extend these Green's functions to get outgoing and incoming Green's functions,  $G^k_+$ and $G^k_-$, respectively, which are defined for $k$ in the right half-plane $\Re(k) > 0$. In order to select outgoing solutions, we use the outgoing Green's function $G_+^k$ for the integral representation \eqref{eq:psi}. Throughout \Cref{sec:incl}, we omit the subscript and denote the outgoing Green's function as $G^k $: 
\begin{equation}\label{eq:Gout}
	G^k(\bx) = \begin{cases} \ds 
		\frac{1}{2\pi}\left(e^{\iu k |x| }E_1(\iu k|x|) + e^{-\iu k |x| }E_1(-\iu k|x|) \right) + \iu e^{\iu k|x|},  & d= 1,\\[0.5em] \ds
		
		\frac{1}{2\pi |\bx|} - \frac{ k}{4}\mathbf{K}_0(k|\bx|) + \frac{\iu k}{2}H_0^{(1)}(k|\bx|),  & d= 2,\\[0.7em]  
		
		\frac{1}{2\pi^2|\bx|^2}-\frac{\iu k}{4\pi^2|\bx|}\left(e^{\iu k |\bx| }E_1(\iu k|\bx|) - e^{-\iu k |\bx| }E_1(-\iu k|\bx|)\right) +  \frac{ke^{\iu k |\bx|}}{2\pi|\bx|},  & d= 3.
	\end{cases}
\end{equation}

\subsection{Singularity of the Green's function} \label{sec:sing}
We work with the  outgoing Green's function as defined in the previous section. Here we report the behavior of $G$ in the case $\bx \to 0$. Throughout, we let $B\subset \R^d$ be a bounded domain and $\epsilon \ll 1$.

\subsubsection{One dimension}	
	In one spatial dimension, we have for small $\epsilon$ and $x\in B$,
	\begin{equation}\label{eq:Gexp1D}
		G^k(\epsilon x) = \sum_{n=0}^\infty \epsilon^{n}A_n^k(x) +  \sum_{n=0}^\infty \epsilon^{n}\log(\epsilon)\Glog{n}{k}(x),
	\end{equation}
for functions $A_n^k$ and $\Glog{n}{k}$ independent of $\epsilon$, which can be explicitly computed. The first few terms are given by
	\begin{align} \label{eq:Gn1D}
		A_0^k(x) = -\frac{1}{\pi} \left(\log(k|x|)+ \gamma\right) + \iu, \quad \Glog{0}{k}(x) =-\frac{1}{\pi},
	\end{align}
\subsubsection{Two dimensions}
We have that $\mathbf{K}_0(z) = \mathbf{H}_0(z) - Y_0(z)$, where 
\begin{equation}\mathbf{H}_0(z) = \frac{z}{2}\sum_{n=0}^\infty \frac{(-1)^n\left(\frac{z}{2}\right)^{2n}}{\Gamma\left(n+\frac{3}{2}\right)^2}, \qquad Y_0(z) = \frac{2}{\pi}\left(\log\left(\tfrac{z}{2}\right) +\gamma \right)J_0(z) + \sum_{n=1}^\infty a_nz^{2n},\end{equation}
for suitable coefficients $a_n$. Therefore for $x \in B$,
\begin{equation}\label{eq:Gexp2D}
	\epsilon G^k(\epsilon \bx) = \sum_{n=0}^\infty \epsilon^{n}A_n^k(\bx) +  \sum_{n=1}^\infty \epsilon^{n}\log(\epsilon)\Glog{n}{k}(\bx),
\end{equation}
for functions $A_n^k$ and $\Glog{n}{k}$ independent of $\epsilon$, where the first terms are
\begin{equation}\label{eq:Gn2D}A_0(\bx) = \frac{1}{2\pi|\bx|}, \ \Glog{1}{k} = -\frac{k}{2\pi}, \ A_1^k(\bx) = -\frac{k}{2\pi} \left(\log(k|\bx|) + \gamma \right) + \frac{\iu k}{2}, \ \Glog{2}{k} = 0, \ A_2^k = -\frac{k}{\pi}.\end{equation}

\subsubsection{Three dimensions}
Using the power series for $Ein(z)$ in \eqref{eq:E1}, we have the following series expansion of $G$ for $\bx \in B$:
\begin{equation}\label{eq:Gexp}
	\epsilon^2G^k(\epsilon\bx) = \sum_{n=0}^\infty \epsilon^{n}A_n^k(\bx) +  \sum_{n=2}^\infty \epsilon^{n}\log(\epsilon)\Glog{n}{k}(\bx),
\end{equation}
for functions $A_n^k$ and $\Glog{n}{k}$ independent of $\epsilon$. The first few terms are
\begin{equation}\label{eq:Gn3D}
	A_0^k(\bx) = \frac{1}{2\pi^2|\bx|^2}, \ A_1^k = \frac{k}{2\pi|\bx|}, \
	A_2^k(\bx) = \frac{k^2}{2\pi^2}\bigl(1-\gamma + \pi\iu - \log\left(k|\bx|\right) \bigr), \ \Glog{2}{k}(\bx) = -\frac{k^2}{2\pi^2}. 
\end{equation}

\subsubsection{Uniform bounds}
In all three cases $d\in \{1,2,3\}$, it is readily verified that for $\bx \in B$, $k$ in a bounded subset of $\C$ and $n \geq d$, we have 
\begin{equation}
	|A_n^k(\bx) | < C, \quad |\Glog{n}{k}(\bx) | < C, \label{eq:bound}
\end{equation}
for some constant $C$ independent on $\bx, n$ and $k$. In particular, $A_n^k(\bx)$ and $\Glog{n}{k}(\bx)$ are continuous at $\bx = 0$. From \eqref{eq:bound}, it follows that \eqref{eq:Gexp1D}, \eqref{eq:Gexp2D} and \eqref{eq:Gexp} converge uniformly.

\subsection{Decay of the Green's function} \label{sec:decay}
In this section, we present the behavior of $G^k(\bx)$ as $|\bx|\to \infty$ in the case $\Re(k)>0$. Independent of dimension, we have $G^k = G^{-k} + 2kG_\mathrm{helm}^k$, where $G^{-k} \in L^1(\R^d)$. As shown in the following subsections, we have the general behavior for  $d\in \{1,2,3\}$ 
\begin{equation} \label{eq:Gdecay}
	G^k(\bx) = O\left(|\bx|^{\frac{1-d}{2}}e^{\iu k|\bx|}\right)  + O\left(|\bx|^{-(d+1)}\right),
\end{equation}
valid uniformly for $|\arg(k)|\leq \frac{\pi}{2} - \delta$ and $|k| <K$ for some positive constants $\delta$ and $K$. Specifically, for $\Im(k)>0$, the exponential term decays and $G^k(\bx) = O\left(|\bx|^{-(d+1)}\right).$


\subsubsection{One dimension}
When $d=1$, we can combine \eqref{eq:Gout} with \eqref{eq:E1asymp} to find, as $|x|\to \infty$,
\begin{equation}G^k(x) =  \iu e^{\iu k |x|} + O(x^{-2}).\end{equation}

\subsubsection{Two dimensions}
For large $z$ we have the well-known asymptotics of the Struve function \cite{abramowitz1964handbook}
\begin{equation}\mathbf{K}_0(z) = \frac{2}{\pi z} + O(z^{-3}),\end{equation}
for $z$ away from the real axis: $|\arg(z)| \leq \pi - \delta$  for some $\delta >0$. Combined with \eqref{eq:Gout} for $d=2$, we have  as $|\bx| \to \infty$
\begin{equation}G^{k}(\bx) =  \frac{\iu k}{2}H_0^{(1)}(k|\bx|) + O(|\bx|^{-3}).\end{equation}
From the well-known asymptotics \cite{abramowitz1964handbook}
\begin{equation}
	H_0^{(1)}(k|\bx|) = O\left(|\bx|^{-\frac{1}{2}}e^{\iu k |\bx|}\right),
\end{equation}
\eqref{eq:Gdecay} readily follows for $d=2$.

\subsubsection{Three dimensions}
When $d=3$, we again use the asymptotics of $E_1$ given in \eqref{eq:E1asymp} which gives
\begin{equation}-\frac{\iu k}{4\pi^2|\bx|}\left(e^{\iu k |\bx| }E_1(\iu k|\bx|) - e^{-\iu k |\bx| }E_1(-\iu k|\bx|)\right) = -\frac{1}{2\pi^2|\bx|^2} + O\left(|\bx|^{-4}\right),\end{equation}
as $|x|\to \infty$. Together with \eqref{eq:Gout}, we therefore have
\begin{equation}\label{eq:decay}
	G^k(\bx) = \frac{ke^{\iu k|\bx|}}{2\pi|\bx|}+O\left(|\bx|^{-4}\right).
\end{equation}


\section{Gohberg-Sigal theory and proof of \Cref{prop:pert}}\label{sec:GS}
In this section, we present the proof \Cref{prop:pert}. The arguments are based on Gohberg-Sigal perturbation theory for holomorphic operator-valued functions. For a detailed presentation of this theory we refer to \cite{ammari2018mathematical,Gohberg1971}. We will adopt the definitions as introduced in \cite[Chapter 1]{ammari2018mathematical}. 
\begin{defn}
	Let $A(z)$ be a holomorphic operator-valued function of $z\in V\subset\C$. The point $z_0\in V$ is called a \emph{characteristic value} of $A$ if there is a holomorphic vector-valued function $\phi(z)$ such that $\phi(z_0) \neq 0$ and $A(z_0)\phi(z_0) = 0$.
\end{defn}
This terminology agrees with the previous literature; observe, however, that characteristic values are called \emph{nonlinear eigenvalues}, or just \emph{eigenvalues}, throughout the main part of this paper.
We say that a characteristic value is \emph{simple} if $\dim \ker A(z_0) = 1$ and 
\begin{equation}A(z)\phi(z) = (z-z_0)\psi(z), \quad \psi(z_0) \neq 0,\end{equation}
for some vector-valued holomorphic function $\psi(z)$. 

\begin{prop}	
	Let $A_\epsilon(z)$ be a family of holomorphic operator-valued functions of $z\in V\subset\C$, continuous for $z\in \overline{V}$ and invertible for all $z\in \partial V$, and which depends continuously on $\epsilon$ in the operator norm. Assume that $A_0(z)$ has a single (counting with multiplicity) characteristic value $z_0\in V$ and that $A_\epsilon(z_0)$ is Fredholm of index zero for all $\epsilon$ small enough. Then, for small enough $\epsilon$, there exist a single eigenvalue (counting with multiplicity) $z =  z(\epsilon) \in \C$ of $A(z)$ satisfying $\lim_{\epsilon \to 0^+}z(\epsilon) = z_0$.
\end{prop}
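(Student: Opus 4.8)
The plan is to deduce the result from the generalized argument principle of Gohberg and Sigal \cite{Gohberg1971,ammari2018mathematical}. For an operator-valued function $A$ that is holomorphic and Fredholm of index zero on a neighbourhood of $\overline{V}$, continuous up to $\partial V$, and invertible on $\partial V$, analytic Fredholm theory guarantees that its characteristic values in $V$ form a finite set, each of finite multiplicity, and that the total multiplicity is given by the logarithmic residue
\begin{equation}
\mathcal{M}(A;V) = \frac{1}{2\pi\iu}\tr\int_{\partial V} A(z)^{-1}\frac{\d}{\d z}A(z)\,\d z \in \Z_{\geq 0}.
\end{equation}
I would first invoke this to record that $\mathcal{M}(A_0;V) = 1$, since by hypothesis $A_0$ has a single characteristic value $z_0$ counted with multiplicity.

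Next I would show that $\mathcal{M}(A_\epsilon;V)$ is well-defined and equals $\mathcal{M}(A_0;V)$ for all sufficiently small $\epsilon$. Invertibility of $A_\epsilon$ on the compact contour $\partial V$ follows by a Neumann-series perturbation argument: since $A_0(z)$ is invertible with $\sup_{z\in\partial V}\|A_0(z)^{-1}\|<\infty$ and $A_\epsilon \to A_0$ uniformly on $\overline{V}$ in operator norm, for small $\epsilon$ the operator $A_\epsilon(z) = A_0(z)\bigl(I + A_0(z)^{-1}(A_\epsilon(z)-A_0(z))\bigr)$ is invertible for every $z \in \partial V$, with $A_\epsilon(z)^{-1}$ converging uniformly to $A_0(z)^{-1}$. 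By the Cauchy integral formula for operator derivatives, uniform convergence of $A_\epsilon$ to $A_0$ on $\overline{V}$ upgrades to uniform convergence of $\p_z A_\epsilon$ to $\p_z A_0$ on $\partial V$. Hence the integrand $A_\epsilon(z)^{-1}\p_z A_\epsilon(z)$ converges uniformly on $\partial V$, so $\mathcal{M}(A_\epsilon;V) \to \mathcal{M}(A_0;V) = 1$. Being an integer-valued quantity, $\mathcal{M}(A_\epsilon;V)$ must equal $1$ for all small enough $\epsilon$, which means $A_\epsilon$ has exactly one characteristic value $z(\epsilon)$ in $V$, counted with multiplicity.

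Finally, to obtain $z(\epsilon)\to z_0$, I would localize the argument. Because the characteristic values of $A_0$ are isolated, for each small $\delta>0$ one may choose a disc $V_\delta\subset V$ centred at $z_0$ whose closure contains no characteristic value of $A_0$ other than $z_0$ and on whose boundary $A_0$ is invertible. Applying the previous paragraph with $V$ replaced by $V_\delta$ shows that for $\epsilon$ small enough (depending on $\delta$) the unique characteristic value $z(\epsilon)$ lies in $V_\delta$; letting $\delta\to 0$ gives the stated convergence. The main obstacle is not any single estimate but the careful bookkeeping needed to apply the Gohberg-Sigal theory correctly: one must verify that $A_\epsilon$ is a holomorphic, finitely-meromorphic Fredholm family to which the generalized argument principle applies, and that the Fredholm-index-zero hypothesis is precisely what makes the logarithmic residue equal the genuine count of characteristic values. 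Once this framework is in place, the perturbation step reduces to the soft observation that an integer-valued quantity depending continuously on $\epsilon$ is locally constant.
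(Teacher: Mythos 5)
Your proof is correct and stays inside the same Gohberg--Sigal framework as the paper's, but it routes through a different key statement. The paper's proof is a short application of the operator-valued Rouch\'e theorem: setting $E(z) = A_\epsilon(z) - A_0(z)$, it verifies $\left\|A_0(z)^{-1}E(z)\right\| < 1$ on the boundary of a small disk $V_\delta$ centered at $z_0$, cites \cite[Lemma 1.11 and Theorem 1.15]{ammari2018mathematical} to get a single characteristic value of $A_\epsilon$ in $V_\delta$, and lets $\delta \to 0$. You instead invoke the generalized argument principle directly: you show the logarithmic residue $\mathcal{M}(A_\epsilon;V)$ is well-defined for small $\epsilon$ (invertibility on the contour via a Neumann series --- note this rests on exactly the same estimate $\|A_0^{-1}(A_\epsilon - A_0)\|<1$ that the paper feeds into Rouch\'e), that it converges to $\mathcal{M}(A_0;V)=1$, and that integer-valuedness forces it to equal $1$; you then localize to disks $V_\delta$. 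In effect you have inlined the standard proof of the operator Rouch\'e theorem rather than citing it, which costs length but buys self-containedness; your global count over all of $V$ also makes the uniqueness assertion in $V$ explicit, whereas the paper's argument, run only on $V_\delta$, strictly speaking controls only the characteristic values near $z_0$. One technical wrinkle in your second paragraph: since $A_\epsilon$ is only assumed continuous up to $\partial V$, the derivative $\p_z A_\epsilon$ need not exist on $\partial V$, and the Cauchy-integral argument for uniform convergence of derivatives on a contour requires holomorphy in a neighborhood of that contour. This is harmless, because the argument can (and in your final localization step does) run on contours $\partial V_\delta$ compactly contained in $V$, which is all that the conclusion requires; both your proof and the paper's also implicitly use that $A_0$ is invertible on such interior contours, which is inherited from the (somewhat loosely stated) Fredholm hypothesis of the proposition and holds in the paper's application.
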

\begin{proof}
	Let \begin{equation}E(z) = \bigl(A_\epsilon(z) - A_0(z)\bigr).\end{equation}
	Given some small $\delta$, we let $V_\delta$ be the disk of radius $\delta$ and center $z_0$. Then, for small enough $\epsilon$ we have that, in the operator norm,
	\begin{equation}\left\|\bigl(\A_0(z)\bigr)^{-1} E(z)\right\| < 1,\end{equation}
	for all $z \in \p V_\delta$. From Rouché's theorem, \cite[Lemma 1.11 and Theorem 1.15]{ammari2018mathematical}, we find that $A(z)$ has a single eigenvalue $z(\epsilon)$ inside $V_\delta$. Moreover, for any $\delta$ we have $|z(\epsilon) - z_0| < \delta$ for all $\epsilon$ small enough, so 
	\begin{equation}
		\lim_{\epsilon\to 0} z(\epsilon) = z_0. \qedhere
	\end{equation}
\end{proof}
Using the argument principle, we also have an explicit expansion of the characteristic values (see \cite[Theorem 3.7]{ammari2018mathematical}).
\begin{theorem}\label{thm:gap}
	Let $A_\epsilon(z)$ be a family of holomorphic operator-valued functions of $z\in V\subset\C$, continuous for $z\in \overline{V}$ and invertible for all $z\in \partial V$, and which depends continuously on $\epsilon$ in the operator norm. Assume that $A_0(z)$ has a single (counting with multiplicity) characteristic value $z_0\in V$ and that $A_\epsilon(z_0)$ is Fredholm of index zero for all $\epsilon$ small enough. Then, for small enough $\epsilon$, there is a single characteristic value of $A_\epsilon(z)$ in $V$ satisfying
	\begin{equation}
	z_\epsilon  - z_0 = \frac{1}{2\pi \iu}\sum_{p=1}^\infty\frac{1}{p}
	\tr\int_{\partial V}\Bigl[\A_0(z)^{-1}\bigl(\A_0(z) - \A_\epsilon(z) \bigr)\Bigr]^p\d z.
	\end{equation}
\end{theorem}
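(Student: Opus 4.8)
The plan is to derive the formula from the generalized (weighted) argument principle for holomorphic operator-valued functions, i.e. the operator analogue of the classical identity $\frac{1}{2\pi\iu}\oint_{\partial V} z\,\frac{f'(z)}{f(z)}\,\d z = \sum_j z_j$ for the zeros $z_j$ of a scalar holomorphic $f$. In the Gohberg-Sigal theory (the cited references) this reads $\frac{1}{2\pi\iu}\tr\oint_{\partial V} z\,A(z)^{-1}A'(z)\,\d z = \sum_j m_j z_j$, the sum of the characteristic values inside $V$ weighted by their multiplicities $m_j$. First I would apply this to $A_0$, which by hypothesis has the single simple characteristic value $z_0$, and to $A_\epsilon$, which by the preceding proposition has for small $\epsilon$ a single characteristic value $z_\epsilon$ in $V$; this gives $z_0 = \frac{1}{2\pi\iu}\tr\oint_{\partial V} z\,A_0^{-1}A_0'\,\d z$ and $z_\epsilon = \frac{1}{2\pi\iu}\tr\oint_{\partial V} z\,A_\epsilon^{-1}A_\epsilon'\,\d z$, and subtracting reduces the theorem to evaluating $\frac{1}{2\pi\iu}\tr\oint_{\partial V} z\,(A_\epsilon^{-1}A_\epsilon' - A_0^{-1}A_0')\,\d z$.

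The algebraic core is the factorization $A_\epsilon = A_0(I-T)$ with $T(z) = A_0(z)^{-1}\bigl(A_0(z)-A_\epsilon(z)\bigr)$, which is legitimate on $\partial V$ since $A_0$ is invertible there. Differentiating and using $A_\epsilon^{-1} = (I-T)^{-1}A_0^{-1}$ gives $A_\epsilon^{-1}A_\epsilon' = (I-T)^{-1}A_0^{-1}A_0'(I-T) - (I-T)^{-1}T'$. Under the trace, cyclic invariance collapses $(I-T)^{-1}A_0^{-1}A_0'(I-T)$ back to $A_0^{-1}A_0'$ (their difference is a commutator of vanishing trace), so that $\tr(A_\epsilon^{-1}A_\epsilon' - A_0^{-1}A_0') = -\tr\bigl[(I-T)^{-1}T'\bigr] = \frac{\d}{\d z}\tr\log(I-T)$. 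I would then integrate by parts along the closed contour: for $\epsilon$ small, $\|T\| < 1$ uniformly on $\partial V$, so $\tr\log(I-T)$ is single-valued and holomorphic in a neighbourhood of $\partial V$, the boundary term vanishes, and I obtain $z_\epsilon - z_0 = -\frac{1}{2\pi\iu}\oint_{\partial V}\tr\log(I-T)\,\d z$. Expanding $-\tr\log(I-T) = \sum_{p\ge1}\frac{1}{p}\tr(T^p)$ and interchanging the uniformly convergent sum with the integral yields the stated series, since $T = A_0^{-1}(A_0 - A_\epsilon)$.

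The hard part is not the formal manipulation but the justification of the traces. In the application $A_0' = -I$, so $A_0^{-1}A_0'$ is not trace-class and the pointwise trace $\tr(A_0^{-1}A_0')$ is meaningless; one must keep the trace outside the contour integral, where $\oint_{\partial V}A^{-1}A'\,\d z$ is a finite-rank residue operator with a well-defined trace. For the reduction above to be rigorous I would verify that the \emph{difference} $A_\epsilon^{-1}A_\epsilon' - A_0^{-1}A_0'$ is pointwise trace-class: the non-trace-class part cancels up to the commutator $(I-T)^{-1}[T,\,A_0^{-1}A_0']$, whose trace vanishes by cyclicity once $T$ lies in a suitable operator ideal. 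This holds in our setting because the perturbation $A_0 - A_\epsilon$ is an integral operator over the bounded domain $B_1$ with kernel in $L^2(B_1\times B_1)$, hence Hilbert--Schmidt and $O(\epsilon)$ in norm, so that $\|T\| < 1$ on $\partial V$ for small $\epsilon$ and the powers $T^p$ with $p\ge 2$ are trace-class, while the $p=1$ contribution is controlled through the finite-rank residue structure. Granting these points, cyclicity of the trace, the identity $\frac{\d}{\d z}\tr\log(I-T) = -\tr[(I-T)^{-1}T']$, and the interchange of sum and integral are all legitimate, and the formula follows as in Theorem 3.7 of the cited references.
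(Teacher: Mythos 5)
The paper gives no proof of this theorem at all: it is quoted directly from Gohberg--Sigal perturbation theory (Theorem 3.7 of the cited book of Ammari et al.), introduced only with the remark that it follows ``using the argument principle.'' Your derivation---the generalized argument principle applied to $A_0$ and $A_\epsilon$, the factorization $A_\epsilon = A_0(I-T)$ with $T = A_0^{-1}(A_0-A_\epsilon)$, trace cyclicity, integration by parts along $\partial V$, and expansion of $\tr\log(I-T)$---is precisely the standard proof of that cited result, and your handling of the trace-class subtlety (keeping $\tr$ outside the contour integral, where the finite-rank residue of $A_0^{-1}$ at $z_0$ renders the integrated operator finite rank, and treating the $p=1$ term separately from the $p\ge 2$ terms) is sound.
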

We now turn to the operator $\G_\epsilon^\omega$ as introduced in \Cref{sec:incl}.
\begin{lemma}\label{lem:holo}
	Let $d\in \{1,2,3\}$, $U = \C \setminus \iu \R$. The operator
	\begin{equation}\G_\epsilon^\omega: L^2(D) \rightarrow L^2(D),\end{equation}
defined in \eqref{eq:A}	is a holomorphic operator-valued function of $\omega\in V$ and depends continuously on $\epsilon$. Moreover, if $\omega\neq \Omega$, then $\G_\epsilon^\omega$ is Fredholm with index 0.	
\end{lemma}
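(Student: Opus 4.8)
The plan is to write $\G_\epsilon^\omega = -(\omega-\Omega)I - \frac{g^2\rho_0}{c}K_\epsilon^\omega$, where $K_\epsilon^\omega$ denotes the integral operator on $L^2(B_\epsilon)$ with kernel $G^{\omega/c}(\bx-\by)$, and to treat the two summands separately. The scalar term $-(\omega-\Omega)I$ is entire in $\omega$ and, whenever $\omega\neq\Omega$, is a nonzero multiple of the identity, hence boundedly invertible. Everything then reduces to three facts about $K_\epsilon^\omega$: that it is compact on $L^2(B_\epsilon)$, that $\omega\mapsto K_\epsilon^\omega$ is holomorphic on $U$, and that it varies continuously with $\epsilon$ after rescaling. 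Granting these, the Fredholm claim is immediate: for $\omega\neq\Omega$ the operator $\G_\epsilon^\omega$ is a compact perturbation of a boundedly invertible operator, and since the Fredholm index vanishes for invertible operators and is stable under compact perturbations, $\G_\epsilon^\omega$ is Fredholm of index zero.

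For compactness I would invoke the weak singularity of the kernel. From the expansions in \Cref{sec:sing}, $G^{\omega/c}(\bx-\by)$ behaves like $|\bx-\by|^{-(d-1)}$ for $d\in\{2,3\}$ and logarithmically for $d=1$ as $\bx\to\by$, with leading singular part independent of $\omega$ and bounded remainder on the compact set $B_\epsilon$; in particular the exponent $d-1$ is strictly less than $d$. An integral operator on a bounded domain whose kernel is dominated by $C|\bx-\by|^{-\alpha}$ with $0\le\alpha<d$ is compact, for instance by approximating the kernel with its truncation away from the diagonal, which is Hilbert--Schmidt, and checking that the truncated-off part is small in operator norm. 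This is the same weak-singularity reasoning already used for $L_0$ above.

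For holomorphy, note first that for fixed $\bx\neq\by$ the map $\omega\mapsto G^{\omega/c}(\bx-\by)$ is holomorphic on $U$: the explicit formulas are assembled from $E_1$, the Struve function $\mathbf{K}_0$, and the Hankel function $H_0^{(1)}$, whose only non-analyticities come from the branch cut of $E_1$, and the arguments $\pm\iu(\omega/c)|\bx-\by|$ meet that cut precisely when $\omega$ is purely imaginary, which is excluded from $U$. I would then verify that $\omega\mapsto\langle\psi,K_\epsilon^\omega\phi\rangle$ is holomorphic for all $\phi,\psi\in L^2(B_\epsilon)$ via Morera's theorem and Fubini, using a kernel bound locally uniform in $\omega$ to dominate the double integral. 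The same bound shows $\|K_\epsilon^\omega\|$ is locally bounded in $\omega$, so the functional-analytic principle recalled in the appendices, namely that weak holomorphy together with local boundedness implies norm holomorphy, gives holomorphy of $\omega\mapsto K_\epsilon^\omega$ in $\L(L^2(B_\epsilon))$. Continuity in $\epsilon$ is read off from the rescaled operator: the expansion $S_\epsilon\G_\epsilon^\omega S_\epsilon^{-1}=\sum_{n=0}^\infty\epsilon^n\A_n^\omega+\sum_{n=d-1}^\infty\epsilon^n\log(\epsilon)\Alog{n}{\omega}$ established above converges in $\L\bigl(L^2(B_1)\bigr)$ uniformly for small $\epsilon$, giving continuous (indeed analytic up to the logarithmic terms) dependence on $\epsilon$.

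The main obstacle is the holomorphy step, where one must pass from pointwise analyticity of a singular kernel to analyticity in operator norm. The device that makes this clean is that the most singular part of $G^{\omega/c}$ as $\bx\to\by$ is $\omega$-independent, so the formal $\omega$-derivative $\frac{1}{c}\partial_k G^k(\bx-\by)$ carries a strictly milder singularity. This is exactly what furnishes the locally uniform, integrable kernel bounds needed both for the weak-holomorphy computation and for the local boundedness of $\|K_\epsilon^\omega\|$, and it is the one place where the detailed structure of the Green's function, rather than a soft argument, is genuinely used.
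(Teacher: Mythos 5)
Your proposal is correct and shares the paper's overall skeleton --- split $\G_\epsilon^\omega$ into the scalar part $-(\omega-\Omega)I$ plus a weakly singular (hence compact) integral operator, and get Fredholmness of index $0$ from stability of the index under compact perturbation of an invertible operator --- but it handles the holomorphy step by a genuinely different method. The paper argues concretely: it Taylor-expands the kernel, $G^{k}(\bx)=\sum_n (k-k_0)^n g_n(\bx)$ with $g_n=a_n+b_n$ coming from the splitting $G^k=G^{-k}+2kG^k_{\mathrm{helm}}$, checks that each coefficient $g_n$ is at worst weakly singular (and uniformly bounded for $n\geq d$), and concludes that the resulting operator-valued power series converges in $\L\bigl(L^2(D)\bigr)$, which is holomorphy by explicit series construction. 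You instead take the soft route: pointwise holomorphy of the kernel off the diagonal, Morera plus Fubini to get weak holomorphy, and the weak-plus-locally-bounded $\Rightarrow$ norm-holomorphic upgrade. Both arguments ultimately rest on the same structural fact, which you correctly isolate as the crux: the leading singularity of $G^{\omega/c}$ at $\bx=\by$ is $\omega$-independent, so $\omega$-differentiation mildens the singularity; in the paper this appears as the boundedness of the higher Taylor coefficients $a_n$, in your write-up as the dominating integrable kernel bound. The paper's approach buys an explicit, quantitative expansion (which it then reuses for the Gohberg--Sigal machinery), while yours is shorter and more portable since it never needs the coefficient functions. One caveat, inherited from the paper's own imprecision about the domain ($U$ versus $V$ versus the undefined $V_0$): your claim that the only non-analyticities come from the $E_1$ branch cut, hit exactly when $\omega\in\iu\R$, is accurate for $d\in\{1,3\}$, but for $d=2$ the kernel is built from $\mathbf{K}_0$ and $H_0^{(1)}$, whose branch cuts (through $Y_0$) lie on the negative real axis of the argument $k|\bx|$; the clean statement is that the outgoing Green's function is holomorphic on the right half-plane $\Re(k)>0$, where it is defined by analytic continuation and where all the resonances under study live, so this does not affect the validity of the lemma as used.
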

\begin{proof}
	For $\bx\neq 0$, the integral kernel $G^{k}$ is a holomorphic function of $k \in V_0$ so that, for $k_0 \in V_0$ and $k$ in a neighbourhood of $k_0$,
	\begin{equation}G^{k}(\bx) = \sum_{n=0}^\infty (k-k_0)^ng_n(\bx).\end{equation}
	We need to verify that the integral operators associated to $g_n$ exist and are uniformly bounded. Since $G^{k}(\bx) = G^{-k}(\bx) + 2kG_\mathrm{helm}^k(\bx)$
	\begin{equation}G^{-k}(\bx) = \sum_{n=0}^\infty (k-k_0)^na_n(\bx), \quad 2kG^{k}_\mathrm{helm}(\bx) = \sum_{n=0}^\infty (k-k_0)^nb_n(\bx).\end{equation}
	Observe that 
	\begin{equation}a_n(\bx) = \frac{1}{n!}\frac{\p^n}{\p k^n}\int_{\R^d} \frac{e^{\iu \bk\cdot\bx}}{|\bk|+k} \d \bk.\end{equation}
	At $\bx = 0$, $a_n(\bx)$ is weakly singular for $n < d$ and continuous and uniformly bounded for $n\geq d$ and $\bx \in D$. Similarly, $b_n(\bx)$ is weakly singular for $n=0$ and continuous and uniformly bounded for $n\geq 1$ and $\bx \in D$. In summary we have 
	\begin{equation}g_n(\bx) = a_n(\bx) + b_n(\bx),\end{equation}
	and $g_n$ defines a family of operators $A_n:L^2(D) \to L^2(D)$ such that, for $\omega$ in a neighbourhood of $\omega_0$ we have 
	\begin{equation}\G_\epsilon^\omega = A_0-(\omega_0-\Omega)I + (\omega-\omega_0)(A_1-I) + \sum_{n=2}^\infty (\omega-\omega_0)^nA_n, \quad A_n = -\frac{g^2s_0}{\epsilon c}\int_{D} g_n(\bx-\by) \phi(\by) \d \by,\end{equation}
	where the sum converges in $\B\bigl(L^2(D)\bigr)$. Hence $\G_\epsilon^\omega$ is holomorphic in $V_0$. From the expansions of $G^k$ in \Cref{sec:sing} we know that $\G_\epsilon^\omega$ is continuous for $\epsilon$ around $0$. Moreover, if $\omega_0 \neq \Omega$, then
	\begin{equation} \G_\epsilon^{\omega_0} = A_0-(\omega_j-\Omega)I\end{equation}
	is a compact perturbation of the identity, and is therefore a Fredholm operator of index $0$.
\end{proof}

\section*{Acknowledgments}
We thank Jeremy Hoskins for valuable discussions, especially related to the calculation of the Green's functions in Appendix \ref{app:G}. JCS was supported in part by the NSF grant DMS-1912821 and the AFOSR grant FA9550-19-1-0320. MIW was supported in part by 
 NSF grant DMS-1908657 and Simons Foundation Math + X Investigator Award \# 376319.

\nocite{*}
\bibliographystyle{abbrv}
\bibliography{one-photon}{}

\end{document}